\pgfplotsset{compat=1.16}
\newtheorem{corollary}{Corollary}
\def\tsc#1{\csdef{#1}{\textsc{\lowercase{#1}}\xspace}}
\begin{document}
\let\WriteBookmarks\relax
\def\floatpagepagefraction{1}
\def\textpagefraction{.001}
\shorttitle{HIEL approach for CPDP and its Analysis on the Project Attributes}

\title [mode = title]{How Far Does the Predictive Decision Impact the Software Project? The Cost, Service Time, and Failure Analysis from a Cross-Project Defect Prediction Model}
\author[1]{Umamaheswara Sharma B}
\cormark[1]
\fnmark[1]
\ead[url]{uma.phd@student.nitw.ac.in}

\author[1]{Ravichandra Sadam}
\fnmark[2]
\ead[url]{ravic@nitw.ac.in}
\address[1]{Department of Computer Science and Engineering, National Institute of Technology, Warangal}

\cortext[cor1]{Corresponding author}
\fntext[fn1]{Research Scholar, Department of Computer Science and Engineering, National Institute of Technology, Warangal, Telangana, India, 506001}

\fntext[fn2]{Associate Professor, Department of Computer Science and Engineering, National Institute of Technology, Warangal, Telangana, India, 506001}

\begin{abstract}
\textbf{\textbf{Context:}} Cross-project defect prediction (CPDP) models are being developed to optimize the testing resources.\\
\textbf{Objectives:} Proposing {an ensemble} classification framework for CPDP as many existing models are lacking with better performances and analysing the main objectives of CPDP from the outcomes of the proposed classification framework.\\
\textbf{Method:} For the classification task, we propose a bootstrap aggregation based hybrid-inducer ensemble learning (HIEL) technique that uses probabilistic weighted majority voting (PWMV) {strategy}. To know the impact of HIEL on the software project, we propose three {project-specific} performance measures such as percent of perfect cleans (PPC), percent of non-perfect cleans (PNPC), and false omission rate (FOR) from the predictions to calculate the amount of saved cost, remaining service time, and percent of the failures in the target project.\\
\textbf{Results:} {On many target projects from PROMISE, NASA, and AEEEM repositories, the proposed model outperformed recent works such as TDS, TCA+, HYDRA, TPTL, and CODEP in terms of F-measure. In terms of AUC, the TCA+ and HYDRA models stand as strong competitors to the HIEL model.}\\
\textbf{Conclusion:} For better predictions, we recommend ensemble learning approaches for the CPDP models. And, to estimate the benefits from the CPDP models, we recommend the above {project-specific} performance measures.
\end{abstract}
\begin{keywords}
Cross-Project Defect Prediction \sep Ensemble Learning \sep Machine Learning \sep Prediction Analysis \sep Probabilistic Weighted Majority Voting
\end{keywords}

\maketitle

\section{Introduction}
\label{introduction}
Building reliable software is the ultimate goal for the project management team \cite{lyu1996handbook}. Due to the enormous increase in building large-scale complex software systems, with the limited available resources such as manpower and cost, achieving the project goals within the deadlines is a challenging task \cite{Song2011, Wong2016}. In addition to that, the test team has to put more effort {into} achieving the correct behaviour of the developed software by removing/modifying the defective modules. To mitigate such challenges, automation tools such as software defect prediction (SDP) models are being developed \cite{lessmann2008benchmarking, Monperrus:2018, wu2018cross}. The motivation behind building SDP models is that it reduces the workload on the test team by reducing the testing time of software modules and, consequently, it reduces the total project's budget \cite{Fenton1999, Challagulla2005}.

Because of such advantages, research on building SDP models is attracted by both academia and {the} software industry. The main objective of SDP models is to predict the defect proneness of a module in newly developed software. These SDP models are {being built} based on machine learning (ML) algorithms that use metric suits as the independent variables (features). Later, using the trained SDP models, a newly developed module is classified into either \textit{defective} or \textit{clean} classes.

Recent studies have introduced two major categories of SDP, such as with-in project defect prediction (WPDP) and cross-project defect prediction (CPDP). The WPDP models utilise the module information from the released versions of a project to train the prediction model. The defect proneness of the modules in a currently developing version of that project is then found using the trained WPDP model \cite{turhan2009relative, bhutamapuram2021project, wu2018cross}. Lack of sufficient defect data is the major constraint in developing WPDP models \cite{zimmermann2009cross}. To overcome the problem of insufficient data, the CPDP models utilise the defect data of the various released projects (and {their} releases) to train the ML model.

Finding the defect proneness of the modules in a cross-project domain is a challenging task because, training a model on the developed project(s) may not be generalisable well on the target project \cite{xia2016hydra}. For this, many transfer learning approaches were proposed {in order} to provide solution by {transferring} the common knowledge from source project(s) to the target projects \cite{he2012investigation, ma2012transfer, nam2013transfer, zhang2015empirical, ni2017cluster}. In addition, ensemble learning models {have also been} proposed in the literature \cite{panichella2014cross, xia2016hydra} since these models provide solutions based on the group of classifiers rather than utilising single classifiers. However, Herbold et al \cite{herbold2017comparative} {claim that} none of the proposed approaches {in the literature consistently achieve significant results.}

To mitigate the risk of poor classification, in this work, we propose a hybrid-inducer ensemble learning (HIEL) technique based on the bootstrap aggregation methodology. This classification framework is constructed based on a diversity generation mechanism. The advantage of using {the} diversity generation mechanism is that it better captures the generalisable properties of the test instances across the different domains by relaying on the group of classifiers where, the solution is achieved based on the {average of} the predicted outcomes \cite{polikar2006ensemble}. Hence, instead of utilising one classifier, HIEL utilises a group of various inducers such as Logistic Regression (LR), Support Vector Machine (SVM), Decision Trees (DT), Na\"ve Bayes (NB), \textit{k}-Nearest Neighbours (\textit{k}-NN), and Artificial Neural Network (NN), during the training phase. Here, each inducer takes a random sample from the original training set, which is collected from the augmented source projects and builds the ensemble of classifiers. Later, the predictions of each classifier {are} fed as input to the combiner model called {the} probabilistic weighted majority voting (PWMV) approach to observe the final decision. The PWMV works on the basis of {a} weighted majority voting approach where, at each trial, the weights are updated based on the parameter $\beta$. {In this process, whenever} the expert (classifier) makes a mistake on the test example, gets penalised weight. Here, the weights are assigned to the experts as probabilities. Similarly, if the expert makes the correct decision, then that expert will be considered as the most probable expert in the near future. In this way, the PWMV predicts the class label for the test instance. The detailed working procedure of the proposed HIEL model using PWMV is given in Section \ref{HIEL}.

To find the superiority of the proposed HIEL approach, {this work includes a comparison of} recent implementations such as Transfer Component Analysis Plus (TCA+) \cite{nam2013transfer}, Training Data Selection (TDS) \cite{herbold2013training}, HYbrid moDel Reconstruction Approach (HYDRA) \cite{xia2016hydra}, COmbined DEfect Predictor (CODEP) \cite{panichella2014cross}, and Two-Phase Transfer Learning (TPTL) \cite{liu2019two} models. For the empirical analysis, we have used {38} releases of {12} software projects from the publicly available PROMISE repository \cite{promise2005}, {12 releases of 6 projects from the NASA MDP repository \cite{shepperd2013data}, and 5 projects from the AEEEM repository \cite{d2012evaluating}}. The results show that the proposed HIEL model with PWMV outperforms the above models on the majority of {the} target projects and is a strong competitor to the HYDRA model. The detailed discussion of the performance of the HIEL using PWMV is given in Section \ref{results}.

{In addition to} proposing a classification model, there is a need to analyse the impact of the prediction result that shows on the project attributes such as the amount of saved budget, the percent of remaining service time, and the percent of failures that may occur in a software with the introduction of the CPDP model in the real-time scenario. Making use of these measures is essential because this is the true goal of SDP research. To the best of our knowledge, this work is {intended} to provide such an analysis of the proposed CPDP model, as such analysis was not provided in the literature.

For this objective, we have proposed three prediction project-specific performance measures such as Percent of Perfect Cleans (PPC), Percent of Non-Perfect Cleans (PNPC), and {False Omission Rate (FOR)}. The {PPC} measure is used to analyse {the total savings in the total allocated} budget in the developing project, where as {PNPC} is used to estimate the {amount of time required to} service (based on the remaining number of code edits) {the code}. The {measure-}{FOR} calculates the percent of failures that may occur in the software{, after deploying the project}. The experimental results show that, {the projects from the repositories such as PROMISE, NASA, and AEEEM have 49.70\%, 68.70\%, and 58.98\% of budget savings, respectively. In this regard, the testers have to put their efforts on the code for only 50.30\%, 31.30\%, and 41.02\% of the total time to remove the total defect content (including the observed failures) in the projects of PROMISE, NASA, and AEEEM, respectively.} We hope this experiment will pave the path for future CPDP {(in general, the SDP)} research to more focus on developing the prediction models that provide analysis on the benefits of the project from the predictions.

With this work, we have made the following principal contributions to the cross-project defect prediction study:
\begin{enumerate}
\item Proposed a novel hybrid-inducer ensemble learning (HIEL) technique that uses bagging methodology to train the CPDP model. For this, a combiner approach called probabilistic weighted majority voting (PWMV) is employed to mitigate the risk of poor classification and to achieve the generalisable properties of defective instances that work across the target projects.
\item Provided a tight upper bound on the number of mistakes made by the best expert (classifier) in the PWMV combiner approach.
\item {Proposed three} project specific {measures} such as the amount of saved budget, amount of remaining service time for the tester, and the percent of failures that may occur in the project. To the best of our knowledge, estimating such project-specific measures from the predictions is new to this SDP (in any sub-category) task.
\end{enumerate}
\textbf{\textit{Paper Organization}}: Section \ref{RelatedWork} presents the literature review on classification models developed for CPDP. In section \ref{HIEL}, the proposed method for training is presented along with the PWMV combiner approach. The elements required to develop and evaluate the proposed model are given in Section \ref{empiSetup}. This section also defines the project-specific performance measures. The results of the proposed model and the detailed discussion are presented in section \ref{results}. Threats to the validity of the proposed work are presented in section \ref{threats}. Section \ref{conclusion} concludes the work and discusses the future work.
\section{Background and Related Work}
\label{RelatedWork}
Section \ref{CPDP_Review} reviews the classification models for CPDP. Section \ref{CPDP_Ensemble_Review} discusses the current status of the literature on using ensemble models for CPDP. Section \ref{DiversityGeneration} discusses the different strategies of generating diverse classifiers in the ensemble learning methodology. 
\subsection{Cross Project Defect Prediction Studies}
\label{CPDP_Review}
{Predicting the defect proneness of the source-code components is still evolving, and proposing a successful prediction model is still a challenging task \cite{herbold2017comparative}.}

As discussed in section \ref{introduction}, the CPDP models are majorly suffering from the data distribution among the source and target projects. As a consequence, these models are yielding poor predictive performance. To conquer the challenge of data distribution among source and target projects, many transfer learning models have been proposed \cite{ma2012transfer, nam2013transfer, herbold2013training, liu2019two} {in the literature}.

In \cite{ma2012transfer}, Ying Ma et al. proposed a Transfer Na\"ive Bayes (TNB) for the CPDP task. They exploited the sample weighting algorithm on the data across all the source projects to form a training dataset. {From the experimentation} on NASA and SOFTLAB datasets, they concluded that, when there is not enough training data to train a strong classifier, knowledge acquired from various source projects based on the feature-level information may help to obtain better performances. {For the same problem}, in \cite{nam2013transfer}, Nam et al. proposed a transfer component analysis plus (TCA+) model based on their prior work called transfer component analysis (TCA) \cite{pan2010domain}. The TCA tries to find the latent feature space between source and target projects by minimising the data distributions.  From the experimentation on PROMISE projects, they observed the improved accuracy {of} the target projects upon selecting a suitable source project.

On the similar problem, Herbold in \cite{herbold2013training} proposed two distance based similarity measures to select the training data across the source projects. To form the training dataset, \textit{k}-nearest neighbour algorithm was employed. From the experimental evaluation of TDS on the PROMISE projects, they concluded that improved performance can be obtained through selecting appropriate training data. {In \cite{xu2019cross}, Zhou Xu et al. proposed a method called balanced distribution adaptation (BDA), targeted to show the impact of conditional distribution differences in selecting the source and target data.}

Recently, Liu et al. in \cite{liu2019two} proposed a two-phase transfer learning (TPTL) model to overcome the problem that is {posed by the} TCA+ model. That is, the performance of TCA+ may vary on various target projects {based} on the selection of different source projects. For this, they proposed {the} TPTL model that works in two phases. In the first phase, a source project estimator (SPE) was proposed to select any two source projects for the target project where the two source projects form {the} highest distribution similarity. Later, TCA+ was employed {to build} two classification models on the selected source projects. {TPTL demonstrated improved performance on the PROMISE projects when compared to state-of-the-art models.}

{Selecting the appropriate simplified metric suits may also impact the performance of the defect prediction models. A detailed empirical analysis was presented in \cite{he2015empirical} by Peng He et al. that discussed the practical recommendations for selecting the training data, subset of metrics, and classification models. Hosseini et al. in \cite{hosseini2018benchmark}, also aimed to demonstrate the use of different feature sets for successful performances.}

However, achieving generalised predictive performance across the target projects is still a challenging task \cite{xia2016hydra, polikar2006ensemble}. This is because {the} use of specific classifiers is causing such low classification performances \cite{polikar2006ensemble}. To mitigate such challenges, recently, ensemble learning models {have been} proposed \cite{panichella2014cross, zhang2015empirical, laradji2015software, xia2016hydra} to overcome the poor predictive performances that these models are exhibiting on some target projects.
\subsection{Ensemble Learning Approaches for CPDP}
\label{CPDP_Ensemble_Review}
{In contrast to discussing the distributional characteristics of the data}, proposing ensemble learning models {has emerged as a tool} for CPDP in recent years. In \cite{panichella2014cross}, Panichella et al. proposed a CODEP (COmbined DEfect Predictor) approach that combines the outputs of six classifiers. Their research suggests using ensemble models for CPDP problems because many base-line classifiers perform poorly. Zhang et al. in \cite{zhang2015empirical} also suggested the benefits of utilising the ensemble frameworks for the CPDP problem. In \cite{laradji2015software}, Laradji et al. targeted to show the benefits of combining feature selection and ensemble learning. Their approach concluded that using fewer features with ensemble learning can help to mitigate issues like data imbalance and feature redundancy.

An extensive empirical study was conducted by Xin Xia et al in \cite{xia2016hydra} to address the CPDP problem. {Their approach (HYbrid moDel Reconstruction Approach (HYDRA)) is a combination of genetic algorithm and AdaBoost ensemble learning methodology, to create a bag of classifiers. This experiment makes use of an enormous number of classifiers in the process of obtaining the final decision. The HYDRA model consists of two steps: model building step and prediction step. Similar to our work, they have taken the common metrics from the source projects data and target projects data, to build the classification model. The model building step is used to generate the numerous classifiers. The model building step contains two phases such as genetic algorithm (GA) phase and ensemble learning (EL) phase. In the GA phase, for each source project, and training target data (which is the 5\% of the labeled target data), they build a classifier, and in total they build N+1 classifiers (assuming there are N number of source projects). Here, the extra one classifier is built on the training target data alone. Next, HYDRA uses GA to search for the best composition of these classifiers. In the EL phase, they built multiple GA classifiers, by running the GA phase multiple times (k times), and composing these GA classifiers according to their training error rate. In the prediction step, a new unlabeled instance from the target data will be given as input to the HYDRA model to observe the prediction.} This experiment indicates the use of an enormous number of classifiers in the process of obtaining a better final decision. 

However, a survey in \cite{herbold2017comparative} by Herbold et al. {indicates} the requirement of proposing a better classification model for this application. To alleviate unstable classification performances, Breiman and Polikar in \cite{breiman1996bagging} and \cite{polikar2006ensemble}, respectively, suggest {using an} ensemble of diverse classifiers in the decision making process. In addition, their study reveals several interesting conclusions, such as: 1) the ensemble of diverse classifiers can address the instability of the individual weak learners; 2) it can provide a generalisable solution for the classification task. Hence, our study {targeted to implement the} diversity generation mechanism as the classification technique to approach {the} CPDP problem.
\subsection{Diversity Generation Mechanism}
\label{DiversityGeneration}
Ensemble learning methods draw near-accurate final decision for a test example on the basis of utilising a huge number of diverse classifiers. According to the definition of ensemble learning, as the number of diverse classifiers increases infinitely, then the probability of correct decision on the test example approaches 1 \cite{breiman1996bagging, polikar2006ensemble}. Hence, to obtain the {most} accurate prediction, it is required to generate as many diverse classifiers as possible.

In order to increase the number of decision makers, different diversity generation strategies such as \textit{manipulating the training sample, feature subspace sampling, inclusion of hybrid inducer models, parameter optimization}, and \textit{changing the output representation} are utilised \cite{rokach2010pattern}. Note that, usage of such components of taxonomy may not be mutually exclusive. That is, according to the applicability, one or more strategies can be combined to generate the diverse classifiers.

By considering the training time of {an} ensemble classifier, we make use of two diverse generation strategies, such as sampling with replacement on the original training data and hybrid-inducer strategies, to generate the diverse classifiers. The strategies in the construction of the proposed ensemble learning methodology is given below.
\subsubsection{The Hybrid-Inducer System}
{In the hybrid inducer strategy, multiple base inducers are used to generate the diversity in the ensemble approach. Note that, a classification algorithm is also referred to as an \textit{inducer} and an instance of an inducer for a set of training data is called a \textit{classifier}. Ideally, this hybrid-inducer strategy would always perform as well as the best of its ingredients. The advantage of this strategy is that, it can solve the dilemma which arises from the famous ``\textit{no free lunch theorem}''. This theorem implies that a certain inducer is considered successful only insofar as its bias matches with the characteristics of the application domain \cite{Brazdil1994}. Therefore, for a given machine learning application, the practitioner needs to decide which inducer needs to be used. But the hybrid-inducer strategy obviates the need to try each one and simplifies the entire process. That is, using the hybrid-inducer strategy,} each inducer obtains a bias (either explicit or implicit) that causes it to prefer some generalisations over others. Hence, this hybrid inducer strategy always performs better among the other strategies \cite{rokach2010pattern}. To generate numerous diverse classifiers, in this work, we use six base inducers such as LR, \textit{k}-NN, NB, SVM, NN, and DT. The description of these inducers is discussed in section \ref{BenchmarkMLs}.
\subsubsection{The Bootstrapping}
This is the general approach to create a bag of classifiers. {Using this strategy, each inducer is trained using a different variation, subset, or sample of the original training set. This strategy is effective for the inducer, which has a relatively large variance-error. That is, a small variation in the training set may cause a major change in the performance of the utilised classifier. In this approach, the training samples are generated (from the original training set) majorly using either with replacement or with-out replacement strategies. If the training set contain massive data-points, training the classifier become a bottleneck. Hence, in that case, the common strategy is to partition the entire population (original training data) into disjoint sets and train a classifier on each partitioned data \cite{rokach2010pattern}. If the original training set contain a limited data-points, then the common approach is to sample the training set with the replacement. In this work, we sample the original set of population data (source projects data) with replacement as many source projects contain a few data-points. To ensure a sufficient number of training instances in each of the drawn sample, it is a common practice to set the size of the sample as the size of the original training set \cite{rokach2010pattern}.}

The ensemble of classifiers is generated based on the parameter called \textit{sample number}. {A classifier is generated for each sample} based on the utilised inducer. To select the value of the sample number, we followed the result of a study by Opitz in \cite{opitz1999popular}. The study suggests {using} 10 samples to generate diverse classifiers to achieve better predictions on the test data. {This sample number is treated as the stopping criterion in generating the diverse classifiers.}
\section{The Hybrid-Inducer Ensemble Learning (HIEL) Technique}
\label{HIEL}
{The first part of this section presents the proposed diversity generation mechanism, and the second part of this section presents the combiner scheme called probabilistic weighted majority voting (PWMV).}
\subsection{Training the HIEL Model}
\label{TrainingHIEL}
The training procedure of {the} HIEL model is given in {the} algorithm \ref{HIEL-Algorithm}. The prerequisites for this model are: training data (source project's information), test data (target project's data), inducers, and a constant to form the bootstrap samples. Before training the HIEL model, training data $\mathcal{S}$ is collected by augmenting all the source projects' (including {their} releases) data. Based on the assumption that each project is developed in similar environments, by matching the metrics in each project, we augment the module's information across all the source projects. Here, each metric is treated as an independent variable in training the model. Similarly, we prepare the test data (validation set) $\mathcal{V}$ from the target project to find the defect proneness of the newly developed modules. Later, to generate {an} ensemble of diverse classifiers, two diversity generation schemes, such as hybrid inducer system, and bootstrap aggregation, are used, and the description of these methods is given in the section \ref{DiversityGeneration}.

As discussed in section \ref{DiversityGeneration}, HIEL model employs sampling with replacement to collect the training data $\mathcal{S}_t^j$ of size \textit{k} from the original training set $\mathcal{S}$. Hence, after sampling with replacement, the module's information is included in the sample at-least zero times.

The algorithm \ref{HIEL-Algorithm} iterates over |$\mathcal{I}$| inducers and, for each inducer, \textit{T} classifiers are generated based on the bootstrap samples. That is, in $t^{th}$ iteration $t \in \{1,2 \dots, T\}$, a random sample $\mathcal{S}_t^j$ $j \in \{1,2 \dots, |\mathcal{I}|\}$ of size \textit{k}, \textit{k}=$|\mathcal{S}|$, is drawn with the replacement from $\mathcal{S}$. Now, each inducer $\mathcal{I}_j$ takes the drawn sample $\mathcal{S}_t^j$ and builds a classifier $\mathcal{C}_j[t]$ in $t^{th}$ iteration. Then, using $\mathcal{C}_j[t]$, predict the class labels for the test examples in a set $\mathcal{V}$ into \textit{defect} (1) or \textit{clean} (0), and store the predicted class labels in $\mathcal{P}_j[t]$. Since, the algorithm takes |$\mathcal{I}$| inducers and \textit{T} samples for each inducer, as a result, after completion of all the iterations, a total of |$\mathcal{I}|\times$\textit{T} classifiers were built. In the end, the predicted class labels $\mathcal{P}_j[t]$, are given as input to the probabilistic weighted majority voting combiner model to get the final prediction for the test examples $\mathcal{V}$. The figure \ref{flowchart-HIEL} represents the proposed training framework of the HIEL model and, the classification procedure using PWMV.
\begin{algorithm}
\SetAlgoLined
\SetKwInput{KwInput}{Input}                
\SetKwInput{KwOutput}{Output}              

\KwInput{\textit{Inducers} $\mathcal{I}$ = \{$\mathcal{I}_1$, $\mathcal{I}_2$, $\cdots$, $\mathcal{I}_i$\},\\
\textit{Training set size k =} $|\mathcal{S}|$,\\
\textit{Number of iterations} = \textit{T} = 10}
\KwOutput{Classifiers, $\mathcal{C}$ = \{$\mathcal{C}_j[t], j \in \{1, 2, \dots, i\}, t \in \{1,2, \dots, T\}$\}\\
Predictions, $\mathcal{P}$ = \{$\mathcal{P}_j[t], j \in \{1, 2, \dots, i\}, t \in \{1,2, \dots, T\}$\}}
\KwData{Population Data $\mathcal{P}d$ = \textit{Training Set} ($\mathcal{S}$) $\cup$ \textit{Validation Set} ($\mathcal{V}$),\\
\textit{Training Set} $\mathcal{S}$ = \{Defects data collected from source projects\}, \\
\textit{Validation Set} $\mathcal{V}$ = \{Defects data of the target project\} }
\SetKwFunction{FMain}{HIEL}
\SetKwProg{Fn}{Function}{:}{\KwRet $\mathcal{P}$}
 \Fn{\FMain{$\mathcal{P}d$, $\mathcal{S}$, $\mathcal{V}$, $\mathcal{I}$, T, \textit{k}}}{
 j = 1\;
 \While{j $\leq|\mathcal{I}|$}{ 
 	t = 1\;
 	\While{t $ \leq $ N}{
 	  $\mathcal{S}_t^j$ = Sample $\mathcal{S}$ with \textit{k} observations randomly with replacement\;
	  $\mathcal{Y}_t^j$ = Class labels of test set $\mathcal{V}$\;
    	  $\mathcal{C}_j[t]$ = Build classifier using $\mathcal{I}_j$ on $\mathcal{S}_t^j$\;
    	  $\mathcal{P}_j[t]$ = Predict the class labels into \textit{defect} or \textit{clean} for test set $\mathcal{V}$ using the classifier $\mathcal{C}_j[t]$\;
	  t++\;
	 }
   j++\;
   }
 }
\caption{HIEL-Diversity Generation Phase}
\label{HIEL-Algorithm}
\end{algorithm}
\begin{figure*}[ht]
\centering
\includegraphics[height=9cm, width = \textwidth]{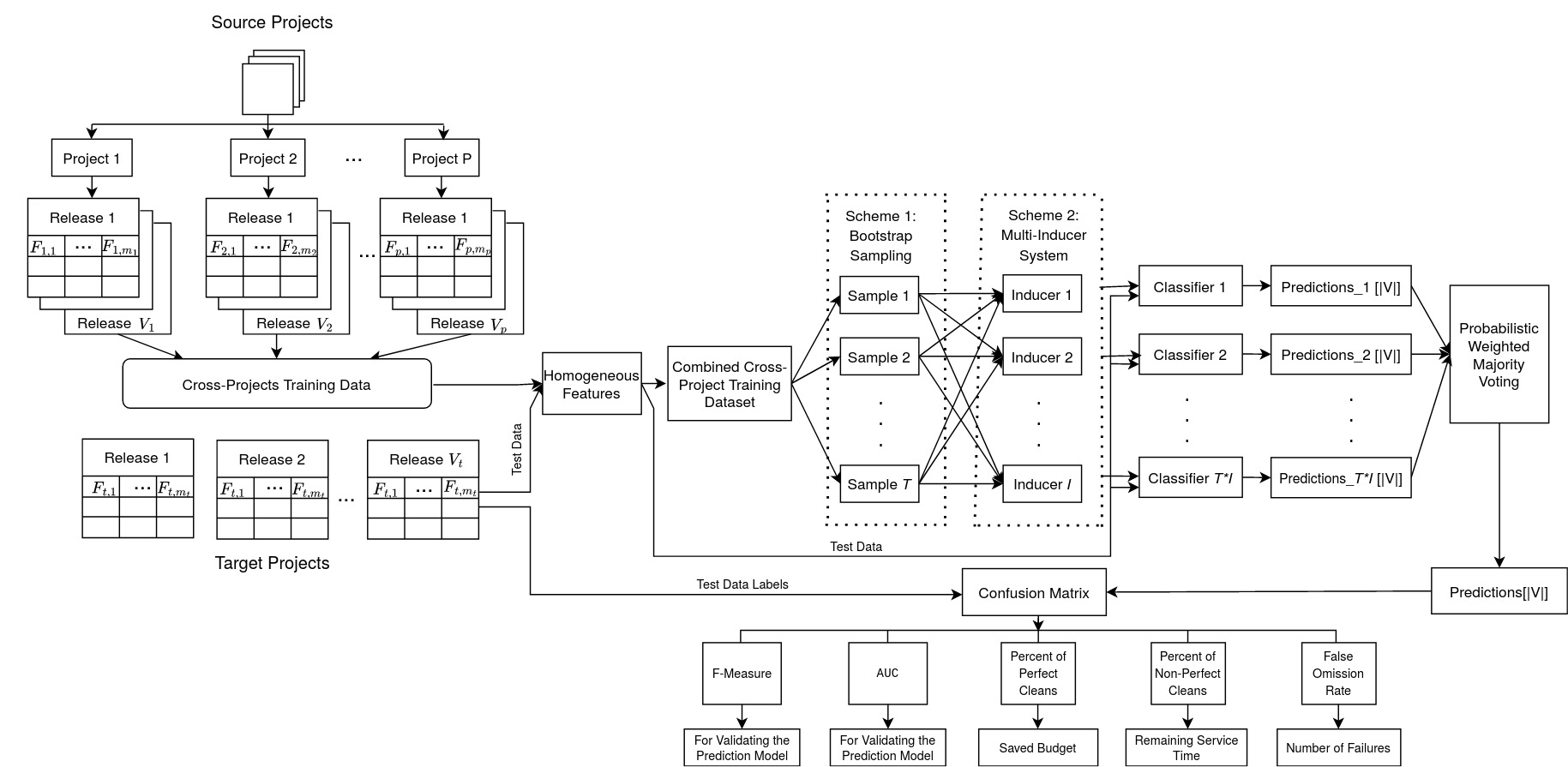}
\caption{Work flow of the classification procedure using proposed HIEL method}
\label{flowchart-HIEL}
\end{figure*}
\subsection{Probabilistic Weighted Majority Voting for Classification}
\label{pwmv}
After obtaining the predictions from the HIEL model on the test data, the combiner method called probabilistic weighted majority voting (PWMV) is employed to predict the final class label for the test instance. The intuition {behind} the PWMV combiner method is developed based on the weighted majority voting (WMV) strategy \cite{polikar2006ensemble}. {The weighted majority voting algorithm maintains a list of weights (let us assume the weights as, $w_1, w_2, \cdots, w_n$), for a set of $n$ experts, and computes the final prediction based on the weighted majority of the expert opinions. Without loss of generality, where each single expert is assumed as a trained classifier. Now, in WMV, uniform weights are distributed among all the experts. For simplicity, let the weight given to all the experts be considered {to be} 1. Given a sequence of experts (classifiers), at each test example, if the \textit{expert} makes the wrong prediction, gets the punishment of half of its weight. Now, to classify the text example, add the weights of all the experts in the two categories (\textit{defective} (1) or \textit{clean} (0)), and compare the resulting weights of these two categories; then, classify the test example to the class with the highest weight.}

{According to Blum \cite{blum1998line},} the regular WMV method guarantees an upper bound of $2.14(\log_2 n+\epsilon)$ mistakes during several trials. Where, $\epsilon$ is the number of mistakes made by the best expert, so far, and \textit{n} is the total number of experts involved in the final decision. This (the upper bound, $2.14(\log_2 n+\epsilon)$) is still problematic because, the best expert still makes the mistakes of more than 20\% of the total trials. As this (depending on $\epsilon$) is the limitation on achieving better results, in this case, a probabilistic version of the weighted majority voting is used to reduce the upper bound on the total mistakes made by the best expert \cite{blum1998line}. Here, in probabilistic weighted majority voting (PWMV), the weights of individual experts are updated using the parameter $\beta, 0 < \beta < 1$. Where $\beta$ is the amount of penalty that is given to each mistaken expert in the process of making the decision. The intuition behind the PWMV algorithm is that this dilutes (reduces) the worst possible mistakes made by the best expert. This is because, instead of considering the weights as probabilities, each outcome for the test example is predicted with a probability proportionate to its weight. That is, as the weight of the expert reduces, then the probability of the expert predicting the class label also reduces. The general algorithmic procedure to classify new instances using PWMV is given in the algorithm \ref{rwmvalgorithm}.

\begin{algorithm}
\SetAlgoLined
 Initialize the weights of all the experts $w = \{w_{11}, \cdots, w_{1T}, w_{21}, \cdots, w_{2T}, {\cdots,} w_{|\mathcal{I}|1}, \cdots, w_{|\mathcal{I}|T}\}$  to 1\;
 Observe the predictions from all the experts $\hat{y} = \{\hat{y}_{11}, \hat{y}_{12}, \cdots, \hat{y}_{|\mathcal{I}|T}\}$ for the test example $x_i$\;
 Initialize $\beta$\;
 Calculate \textit{W} = $\sum_{i=1}^{|\mathcal{I}|} \sum_{j=1}^{T} w_{ij}$\;
 \While{Test Set is not Empty}{
  \For{All the experts $\{c_1, c_2, \cdots, c_{|\mathcal{I}|T}{\}}$}{
  \eIf{The prediction $\hat{y_i}$ matches with the actual value,$y_i$}{
   Predict the class label for $x_i$ as $\hat{y_i}$ with the maximum probability $\frac{w_i}{W}$\;
   Do not modify the weight of the expert $c_i$\;
   }{
   Penalise mistaken expert by multiplying its weight with $\beta$. (That is $w_i = w_i*\beta$)\;
   Update the probability of the expert with $\frac{w_i}{W}$ \;
   }
  }
  Update the weight value W with the resulting weights ($W = \sum_i w_i$)\;
 }
 \caption{Probabilistic Weighted Majority Voting Algorithm}
 \label{rwmvalgorithm}
\end{algorithm}
Using the analysis on minimising the upper bound on the total mistakes made by the best expert \cite{blum1998line}, the following corollary defines the maximum upper bound on the total mistakes made by the best expert using PWMV.\\ 
\begin{corollary}
\label{corollary1}
Given the weights $w = \{w_{11}, w_{12}, \cdots, w_{1T},\\ w_{21},w_{22}, \cdots, w_{2T}, {\cdots,}  w_{|\mathcal{I}|1}, w_{|\mathcal{I}|2}, \cdots, w_{|\mathcal{I}|T}\}$ of experts and the set of predictions $\hat{y} = \{\hat{y}_{11}, \hat{y}_{12}, \cdots, \hat{y}_{|\mathcal{I}|T}\}$ on the test instance, ${x}_{l}$, the upper bound on the number of mistakes $\mathrm{M}$ made by the PWMV satisfies:\\
\[
\mathrm{M} \leq \frac{\log |\mathcal{I}| + \log T - \epsilon*\log \beta}{1-\beta}
\]
Where $|\mathcal{I}|$ is the expert index, $\beta$ is the penalty given to each mistaken expert and $\epsilon$ is the number of mistakes so far made by the best expert chosen for the HIEL model.
\end{corollary}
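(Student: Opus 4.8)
The plan is to run the standard potential-function argument for the randomized (probabilistic) weighted majority algorithm, specialised to the notation used here, in which the pool of experts is the flat collection of $n = |\mathcal{I}|\cdot T$ classifiers $\mathcal{C}_j[t]$. First I would introduce the potential $W = \sum_{i,j} w_{ij}$, the total weight carried by all experts, and record its initial value: since every weight is initialised to $1$ (line~1 of Algorithm~\ref{rwmvalgorithm}), we have $W^{(0)} = |\mathcal{I}|\cdot T$. The entire proof then amounts to tracking how $W$ shrinks as the test examples are processed.

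The key step is to tie the one-round multiplicative decay of $W$ to the probability that PWMV itself errs on that round. On a given test instance, let $F$ be the fraction of the total weight held by the experts whose vote is wrong; because PWMV predicts each label with probability proportional to the weight supporting it, $F$ is exactly the probability that PWMV misclassifies that instance. Only the mistaken experts are penalised, each by the factor $\beta$, so the updated potential is $W' = W - (1-\beta)\,F\,W = W\bigl(1-(1-\beta)F\bigr)$. Iterating over the successive test rounds, with $F_1,\dots,F_m$ the per-round mistake probabilities, gives $W^{(\mathrm{final})} = |\mathcal{I}|\,T\prod_{r=1}^{m}\bigl(1-(1-\beta)F_r\bigr)$, while $\sum_{r} F_r$ is precisely the (expected) number of mistakes $\mathrm{M}$ committed by PWMV.

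Next I would lower-bound the final potential. The best expert in the HIEL pool makes $\epsilon$ mistakes in total, and each of them multiplies its weight by $\beta$; starting from weight $1$, that single expert therefore contributes $\beta^{\epsilon}$ to $W^{(\mathrm{final})}$, so $W^{(\mathrm{final})} \ge \beta^{\epsilon}$. Equating this with the product expression, taking logarithms, and using the elementary estimate $\log(1-x)\le -x$ term by term yields $\epsilon\log\beta \le \log(|\mathcal{I}|\,T) - (1-\beta)\sum_{r}F_r$. Replacing $\sum_{r}F_r$ by $\mathrm{M}$ and solving for $\mathrm{M}$ — which is legitimate since $0<\beta<1$ forces $1-\beta>0$ — gives
\[
\mathrm{M} \;\le\; \frac{\log|\mathcal{I}| + \log T - \epsilon\log\beta}{1-\beta},
\]
which is the asserted bound (note $\log\beta<0$, so the numerator is positive).

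As for the main obstacle: the only genuinely delicate point is the identification of the weight-fraction $F_r$ with PWMV's per-round error probability — this is exactly what separates the probabilistic scheme from deterministic WMV and is the source of the improved constant, so I would state it explicitly and observe that the $\mathrm{M}$ appearing in the bound is to be read as the expected number of mistakes of the randomized combiner. Everything else (the initial value of $W$, the multiplicative update, the $\beta^{\epsilon}$ lower bound, and the logarithmic inequality) is routine once the experts are treated as the single pool of $|\mathcal{I}|\cdot T$ classifiers.
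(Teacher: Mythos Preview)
Your proposal is correct and follows essentially the same potential-function argument as the paper: initialise $W^{(0)}=|\mathcal{I}|\,T$, show the per-round multiplicative decay $W'=W(1-(1-\beta)F_r)$ with $\sum_r F_r=\mathrm{M}$, lower-bound the final potential by $\beta^{\epsilon}$, take logs, and apply $\log(1-x)\le -x$. If anything, your write-up is cleaner in explicitly identifying $F_r$ as the \emph{fraction} of weight on the wrong label (and hence the per-round error probability), whereas the paper merely calls it ``the total weight on the wrong answers'' before defining $\mathrm{M}=\sum F_i$.
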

\begin{proof}
Assign the weights of each expert $w = \{w_{11}, w_{12}, \cdots,\\ w_{1T}, w_{21},w_{22}, \cdots, w_{2T}, {\cdots,}  w_{|\mathcal{I}|1}, w_{|\mathcal{I}|2}, \cdots, w_{|\mathcal{I}|T}\}$ to 1. Observe the predictions $\hat{y} = \{\hat{y}_{11}, \hat{y}_{12}, \cdots, \hat{y}_{|\mathcal{I}|T}\}$ from all the hybrid experts on ${x}_{l}$. From the weights \textit{w}, calculate;

\begin{equation}
W = \sum_{i=1}^{|\mathcal{I}|} \sum_{j=1}^{T} w_{ij}.
\end{equation}

Let $F_i$  is the total weight on the wrong answers {on} the $i^{th}$ trial. Suppose, \textit{t} trials have been experienced. And, let $\mathrm{M}$ be the expected number of mistakes made so far. It is defined as $\mathrm{M}$ = $\sum_{i=1}^{t} F_i$.

On the $i^{th}$ example, the updated weight $W^{'}$ is observed as:
\begin{equation}
W' = W*(1-(1-\beta)F_i)
\end{equation}

Where $\beta$ is the penalty given to the each mistaken expert. Now, similarly, for all the experts from HIEL, the updated final weight is observed as:
\begin{equation}
W' = |\mathcal{I}|*T*\prod_{i=1}^{t}(1-(1-\beta)F_i)
\end{equation}

Now, let $\epsilon $ be the number of mistakes made by the best expert so far. Again, using the fact that, $W^{'}$ is as large as the weight of the best expert, then we have:
\begin{equation}
|\mathcal{I}|*T*\prod_{i=1}^{t}(1-(1-\beta)F_i) \geq \beta^\epsilon
\end{equation}

Apply \textit{log} on both sides, then;
\begin{equation}
\log(|\mathcal{I}|*T) + \sum_{i=1}^{t} \log (1-(1-\beta)F_i) \geq \epsilon\log\beta
\end{equation}
\begin{equation}
\implies \epsilon\log(\frac{1}{\beta}) \geq - \log (|\mathcal{I}|*T) - \sum_{i=1}^{t} \log(1-(1-\beta)F_i)
\end{equation}

We know, $\log(1-x) = -x - \frac{x^2}{2}-\frac{x^3}{3}- \cdots$

$\therefore \log(1-x) \leq -x,  \forall x$.

Then,
\begin{equation}
\implies \epsilon\log(\frac{1}{\beta}) \geq - \log (|\mathcal{I}|*T) + (1-\beta)\sum_{i=1}^{t} F_i
\end{equation}
\begin{equation}
\implies \epsilon\log(\frac{1}{\beta}) \geq - (\log |\mathcal{I}| + \log T) + (1-\beta)*M
\end{equation}
where, $\mathrm{M}$ = $\sum_{i=1}^{t} F_i$.
\begin{equation}
\implies (1-\beta)*M \leq \epsilon\log(\frac{1}{\beta}) + \log |\mathcal{I}| + \log T
\end{equation}
\begin{equation}
\implies M \leq \frac{\epsilon\log(\frac{1}{\beta}) + \log |\mathcal{I}| + \log T}{(1-\beta)}
\end{equation}
This satisfies the above corollary.
\end{proof}
\section{Empirical Setup}
\label{empiSetup}
The description of the used defect datasets {is presented} in Section \ref{studiedDatasets}. The default parameter setting for the used base-line inducers {is provided in} Section \ref{BenchmarkMLs}. The model specific performance evaluation measures {are discussed} in Section \ref{evalMeasures}. In addition to that, section \ref{studentsTtest} discusses the {well known significance tests such as} \textit{one-sample Wilcoxon signed rank test} and \textit{Cliff’s delta effect size test}.
\subsection{Defect Datasets}
\label{studiedDatasets}
To construct and validate the prediction model, this work utilises publicly available defect datasets from {the repositories such as} PROMISE \cite{promise2005}, NASA Metrics Data Program (MDP) \cite{shepperd2013data} and AEEEM \cite{d2012evaluating}. For this empirical analysis, we have used a total of {38, 12 and 5 projects from the PROMISE, NASA, and AEEEM repositories, respectively}. Where, each PROMISE project is having 24 metrics (features) to represent the {software} module. {For the NASA projects, we have selected 21 common metrics since each project differs with respect to the variable number of metrics. Similarly, for the AEEEM projects, we have used 17 metrics to build the model. Amongst the available set of metrics, the variables that indicate the severity level of the software module are excluded from the AEEEM projects.} The description of each project (and its releases), such as the number of modules present in that released version, {the total lines of code in the project,} the number of defects and the percent of defects in the released version are presented in the table \ref{DefectsData}.

For the source projects, we have augmented the defect data of all the projects except {all releases of the target project. In this regard, each release of the target project is treated as the test data.}
\begin{table*}[]
\centering
\caption{An overview of utilised projects}
\label{DefectsData}
\begin{tabular}{lrrrr|lrrrr}
\hline
\textbf{Project} & \multicolumn{1}{c}{\textbf{Modules}} & \textbf{LoC} & \multicolumn{1}{c}{\textbf{Defects}} & \multicolumn{1}{c|}{\textbf{\%Defects}} & \textbf{Project} & \multicolumn{1}{c}{\textbf{Modules}} & \textbf{LoC} & \multicolumn{1}{c}{\textbf{Defects}} & \multicolumn{1}{c}{\textbf{\%Defects}} \\ \hline \hline
\multicolumn{10}{c}{\textbf{PROMISE Projects}} \\ \hline \hline
Ant-1.3 & 125 & 37,699 & 20 & 16.00 & Lucene-2.4 & 340 & 102,859 & 203 & 59.71 \\
Ant-1.4 & 178 & 54,195 & 40 & 22.47 & Poi-1.5 & 237 & 55,428 & 141 & 59.49 \\
Ant-1.5 & 293 & 87,047 & 32 & 10.92 & Poi-2.0 & 314 & 93,171 & 37 & 11.78 \\
Ant-1.6 & 351 & 113,246 & 92 & 26.21 & Poi-2.5 & 385 & 119,731 & 248 & 64.42 \\
Ant-1.7 & 745 & 208,653 & 166 & 22.28 & Poi-3.0 & 442 & 129,327 & 281 & 63.57 \\
{\color[HTML]{000000} Camel-1.0} & {\color[HTML]{000000} 339} & {\color[HTML]{000000} 33,721} & {\color[HTML]{000000} 13} & {\color[HTML]{000000} 03.83} & Redaktor & 176 & 59,280 & 27 & 15.35 \\
{\color[HTML]{000000} Camel-1.2} & {\color[HTML]{000000} 608} & {\color[HTML]{000000} 66,302} & {\color[HTML]{000000} 216} & {\color[HTML]{000000} 35.53} & Synapse-1.0 & 157 & 28,806 & 16 & 10.19 \\
{\color[HTML]{000000} Camel-1.4} & {\color[HTML]{000000} 872} & {\color[HTML]{000000} 98,080} & {\color[HTML]{000000} 145} & {\color[HTML]{000000} 16.63} & Synapse-1.1 & 222 & 42,302 & 60 & 27.03 \\
{\color[HTML]{000000} Camel-1.6} & {\color[HTML]{000000} 965} & {\color[HTML]{000000} 113,055} & {\color[HTML]{000000} 188} & {\color[HTML]{000000} 19.48} & Synapse-1.2 & 256 & 53,500 & 86 & 33.59 \\
{\color[HTML]{000000} Jedit-3.2} & {\color[HTML]{000000} 272} & {\color[HTML]{000000} 128,883} & {\color[HTML]{000000} 90} & {\color[HTML]{000000} 33.09} & Tomcat & 858 & 300,674 & 77 & 08.97 \\
{\color[HTML]{000000} Jedit-4.0} & {\color[HTML]{000000} 306} & {\color[HTML]{000000} 144,803} & {\color[HTML]{000000} 75} & {\color[HTML]{000000} 24.51} & Velocity-1.4 & 196 & 51,713 & 147 & 75.00 \\
{\color[HTML]{000000} Jedit-4.1} & {\color[HTML]{000000} 312} & {\color[HTML]{000000} 153,087} & {\color[HTML]{000000} 79} & {\color[HTML]{000000} 25.32} & Velocity-1.6 & 229 & 57,012 & 78 & 34.06 \\
{\color[HTML]{000000} Jedit-4.2} & {\color[HTML]{000000} 367} & {\color[HTML]{000000} 170,683} & {\color[HTML]{000000} 48} & {\color[HTML]{000000} 13.08} & Xalan-2.4 & 723 & 225,088 & 110 & 15.21 \\
{\color[HTML]{000000} Jedit-4.3} & {\color[HTML]{000000} 492} & {\color[HTML]{000000} 202,363} & {\color[HTML]{000000} 11} & {\color[HTML]{000000} 02.24} & Xalan-2.5 & 803 & 304,864 & 387 & 48.19 \\
Log4j-1.0 & 135 & 21,549 & 34 & 25.19 & Xalan-2.6 & 885 & 411,737 & 411 & 46.44 \\
Log4j-1.1 & 109 & 19,938 & 37 & 33.95 & Xalan-2.7 & 909 & 428,555 & 898 & 98.79 \\
Log4j-1.2 & 205 & 38,191 & 189 & 92.20 & Xerces-1.2 & 440 & 159,254 & 71 & 16.14 \\
\multicolumn{1}{l}{Lucene-2.0} & 195 & 50,596 & 91 & 46.67 & Xerces-1.3 & 453 & 167,095 & 69 & 15.23 \\
Lucene-2.2 & 247 & 63,571 & 144 & 58.3 & Xerces-1.4 & 588 & 141,180 & 437 & 74.32 \\ \hline \hline
\multicolumn{10}{c}{{\color[HTML]{000000} \textbf{NASA Projects}}} \\ \hline \hline
{\color[HTML]{000000} CM1} & {\color[HTML]{000000} 344} & {\color[HTML]{000000} 15,486} & {\color[HTML]{000000} 42} & {\color[HTML]{000000} 12.21} & {\color[HTML]{000000} MW1} & {\color[HTML]{000000} 264} & {\color[HTML]{000000} 6,905} & {\color[HTML]{000000} 27} & {\color[HTML]{000000} 10.23} \\
{\color[HTML]{000000} JM1} & {\color[HTML]{000000} 9,953} & {\color[HTML]{000000} 376,794} & {\color[HTML]{000000} 1,759} & {\color[HTML]{000000} 18.24} & {\color[HTML]{000000} PC1} & {\color[HTML]{000000} 759} & {\color[HTML]{000000} 23,020} & {\color[HTML]{000000} 61} & {\color[HTML]{000000} 08.04} \\
{\color[HTML]{000000} KC1} & {\color[HTML]{000000} 2,096} & {\color[HTML]{000000} 42,706} & {\color[HTML]{000000} 325} & {\color[HTML]{000000} 15.51} & {\color[HTML]{000000} PC2} & {\color[HTML]{000000} 1,585} & {\color[HTML]{000000} 17,834} & {\color[HTML]{000000} 16} & {\color[HTML]{000000} 01.01} \\
{\color[HTML]{000000} KC3} & {\color[HTML]{000000} 200} & {\color[HTML]{000000} 6,399} & {\color[HTML]{000000} 36} & {\color[HTML]{000000} 18.00} & {\color[HTML]{000000} PC3} & {\color[HTML]{000000} 1,125} & {\color[HTML]{000000} 33,016} & {\color[HTML]{000000} 140} & {\color[HTML]{000000} 12.44} \\
{\color[HTML]{000000} MC1} & {\color[HTML]{000000} 9,277} & {\color[HTML]{000000} 66,583} & {\color[HTML]{000000} 68} & {\color[HTML]{000000} 00.73} & {\color[HTML]{000000} PC4} & {\color[HTML]{000000} 1,399} & {\color[HTML]{000000} 30,055} & {\color[HTML]{000000} 178} & {\color[HTML]{000000} 12.72} \\
{\color[HTML]{000000} MC2} & {\color[HTML]{000000} 127} & {\color[HTML]{000000} 5,503} & {\color[HTML]{000000} 44} & {\color[HTML]{000000} 34.65} & {\color[HTML]{000000} PC5} & {\color[HTML]{000000} 17,186} & {\color[HTML]{000000} 161,695} & {\color[HTML]{000000} 516} & {\color[HTML]{000000} 03.01} \\ \hline \hline
\multicolumn{10}{c}{{\color[HTML]{000000} \textbf{AEEEM Projects}}} \\ \hline \hline
{\color[HTML]{000000} Eclipse} & {\color[HTML]{000000} 997} & {\color[HTML]{000000} 363,633} & {\color[HTML]{000000} 206} & {\color[HTML]{000000} 20.66} & {\color[HTML]{000000} Mylyn} & {\color[HTML]{000000} 1,862} & {\color[HTML]{000000} 135,334} & {\color[HTML]{000000} 245} & {\color[HTML]{000000} 13.16} \\
{\color[HTML]{000000} Equinox} & {\color[HTML]{000000} 324} & {\color[HTML]{000000} 40,250} & {\color[HTML]{000000} 129} & {\color[HTML]{000000} 39.82} & {\color[HTML]{000000} PDE} & {\color[HTML]{000000} 1,497} & {\color[HTML]{000000} 123,017} & {\color[HTML]{000000} 209} & {\color[HTML]{000000} 16.96} \\
{\color[HTML]{000000} Lucene} & {\color[HTML]{000000} 691} & {\color[HTML]{000000} 43,732} & {\color[HTML]{000000} 64} & {\color[HTML]{000000} 09.26} & {\color[HTML]{000000} } & {\color[HTML]{000000} } & {\color[HTML]{000000} } & {\color[HTML]{000000} } & {\color[HTML]{000000} } \\ \hline
\end{tabular}
\end{table*}

\subsection{Benchmark Machine Learning Classifiers}
\label{BenchmarkMLs}
The proposed HIEL model uses six base inducers such as logistic regression, \textit{k}-nearest neighbours, support vector machine, Na\"ive Bayes, neural networks, and decision trees in the training phase.

{In} the logistic regression (LR) model, a \textit{logit} model is used to form a relation between the metric suits and the defect attribute. In the case of \textit{k}-nearest neighbour (\textit{k}-NN) model, the value of \textit{k} is chosen based on 10-fold cross validation. As a result, after validating the model with different values of \textit{k}, we have taken the value of \textit{k} as 11. For the support vector machine (SVM) model, we have used a linear kernel to train the model. In the case of Na\"ive Bayes (NB), to avoid the zero-probability problem, we set the Laplace smoothing parameter ($alpha$) to 1 \cite{rish2001empirical}. For the Neural Networks (NN) model, we have used a 2-hidden layered resilient backpropagation algorithm with the weight backtracking mechanism. For the stopping criteria, a 0.5 threshold is used in the partial derivatives of the error function. Later, we have limited the maximum steps in the training model to the value of 1e+5. In the case of decision trees, a general implementation of a classification and regression tree model \cite{breiman1984classification} is employed in this approach.
\subsection{Evaluation Measures}
\label{evalMeasures}
\begin{table}
\begin{center}
\caption{The confusion matrix represents the actual and predicted defect labels of the target project modules}
\label{confusionMatrix}
\begin{tabular}{|c|c|c|c|}
\hline
\multicolumn{2}{|c|}{\multirow{2}{*}{}} & \multicolumn{2}{c|}{Actual values} \\ \cline{3-4} 
\multicolumn{2}{|c|}{} & Defective & Clean \\ \hline
\multirow{2}{*}{Predicted values} & Defective & TP & FP \\ \cline{2-4} 
 & Clean & FN & TN \\ \hline
\end{tabular}
\end{center}
\end{table}

Performance measures are the key attributes {for analysing} the benefits {of} the prediction model. {Because the CPDP (in general, the SDP) models are designed to optimise the testing resources, providing benefits from the predictions that are understandable to the project manager is essential and has been ignored in defect prediction studies.} For this, in this work, three prediction model specific performance measures such as {Percent of Perfect Cleans (PPC), Percent of Non-Perfect Cleans (PNPC)} and {False Omission Rate (FOR)} are introduced for the first time in the SDP (in this case, CPDP) {problem context}. The measures {PPC, PNPC}, and {FOR} are calculated to estimate the saved amount of {budget} from the total allocated budget, to estimate the remaining service time, and to estimate the percent of failures, respectively,in the developing project. In addition to the above measures, we have used {other performance measures such as {AUC} and {F-measure}} for the comparative analysis.

{Similar to the traditional measures, the proposed} performance measures are {also} calculated based on the confusion matrix. The confusion matrix {for this binary classification is defined based on the} number of actual and predicted values of the test-set instances. In table \ref{confusionMatrix}, {the number of} true positives (TP) represents the number of defective modules which are predicted {into its correct class}, {the number of} true negatives (TN) represents the number of clean modules which are predicted {into its correct class}, {the number of} false positives (FP) represents the number of clean modules which are predicted {as being from the defective class}, and the number of false negatives (FN) represents the number of defective modules which are predicted as {being from the clean class}. {The following subsections describe in-detail about all the performance measures.}
\subsubsection{{Percent of Perfect Cleans (PPC)}}
\label{PPC}
{The {Percent of Perfect Cleans (PPC)} measure helps in deriving the amount of saved budget in the target project.} The {PPC} is defined as the ratio of true negatives over the total number of test observations. This is given as:
\begin{equation}
\label{equ-ppc}
     \text{PPC} = \frac{\text{True Negatives}}{\text{Total Test Instances}} = {\frac{|TN|}{|n_t|}}
\end{equation}

{Where, $|n_t|$ and |\textit{TN}| denote the size of the test set and {the} size of the true negatives, respectively.} Note that, the false negatives {are also representative of} the predicted clean modules. {But, even if the defective module is predicted as clean, then during {the} operational phase, the end user may experience an inconvenience in the software. As a result, the testing team must look for any hidden defects in that software module.} \cite{lyu1996handbook}. That is, if the end-user triggers these unidentified defect{ive modules}, then these defects will be active ({such defects are} also called as \textit{dormant defects} or \textit{soft defects}) and produce an error; when the erroneous instructions affect the delivered service, we observe the failure in the system \cite{lyu1996handbook}. Hence, even though if the defective modules are wrongly recognised as clean, then repairing of such modules by the testers is {an inevitable} task.

{In the other terms, all the false negative instances have ground truth labels as defective. Therefore, irrespective of the decision from the model, such modules need to be inspected to remove such defects, even after deploying the project. Hence, equation \ref{equ-ppc} does not include false negative instances in the numerator.}

Now from equation \ref{equ-ppc}, we define the {consequent measure called} percent of saved {lines of code}. For this, we use the lines of code (\textit{LoC}) as {an additional} attribute {to the module in the true negatives. Making use of LoC as an additional attribute (to the respective true negative instance) is possible in the CPDP scenario because every test module has a known size metric. Hence, by using the information of the LoC of the true negative module, it is possible to derive the total saved LoCs in the project.} Using the equation \ref{equ-ppc}, the percent of saved lines of code is derived as:
\begin{equation}
\label{equ-savedLoC}
    \text{Percent of Saved LoC} =\frac{\mathlarger{\sum}_{ {i\in TN}}  {SL}(LoC_i)}{\mathlarger{\sum}_{ {i\in n_t}}  {SL}(LoC_i)}
\end{equation}

Where, {$LoC_i$ represents the total LoC in the module \textit{i} and, $SL(LoC_i)$ represents the saved lines of code from the module \textit{i}}. Now, let us assume that a unit amount of cost is required to test each line of code. That is, we assume the cost spent to test each line of the code is uniform. {Now, the equation \ref{equ-savedLoC} can be rewritten as:}
\begin{equation}
\label{equ-PercentsavedCost}
    \text{Percent of Saved Budget} = \frac{\mathlarger{\sum}_{ {i\in TN}}  {SB}(LoC_i)}{\mathlarger{\sum}_{ {i\in n_t}}  {SB}(LoC_i)}
\end{equation}

{Here, $SB(LoC_i)$ represents the budget savings from the module \textit{i}. Now, the numerator of} the equation \ref{equ-PercentsavedCost} {represents the} saved {budget in the project and, it is given} as:
\begin{equation}
\label{equ-savedCost}
    {\text{Saved Budget} = \mathlarger{\sum}_{i\in TN} SB(LoC_i)}
\end{equation}

{Since PPC is the base measure for the consequent measures such as PSB and saved budget, we are treating PPC as the main measure in this approach. However, the consequent measures provide more information on the predictions.}
\subsubsection{{Percent of Non-Perfect Cleans (PNPC)}}
\label{NPC}
To {estimate} the amount of work that still remains for the tester after the prediction, we {use the measure called} {Percent of Non-Perfect Cleans (PNPC)}. The {PNPC} is defined as the ratio of sum of true positives, false positives and false negatives over the total number of instances {tested}. This is {given} as:
\begin{equation}
\label{equ-pnpc}
   {PNPC} = {\frac{|n_t|-|TN|}{|n_t|}}
\end{equation}

{While PPC and its supplementary measures provide information about the savings in the total budget, the PNPC and its supplementary measures provide information about the pending work to remove the total defects in the target project. As discussed in section \ref{PPC}, since the false negative instances also require repair, we have included information about these false negatives in the numerator.}

{Now, to} estimate the amount of work still remaining for the tester after the prediction, we use {the} same LoC as the attribute where we assume {a} unit amount of service time is required to {inspect each} line of code {in such modules}. Using equation \ref{equ-pnpc}, the percent of remaining lines of code {(which require modifications)} is derived as:
\begin{equation}
\label{equ-remainingLoC}
    \text{Percent of Remaining LoCs} = \frac{\mathlarger{\sum}_{ {i\in n_t-TN}}  {RL}(LoC_i)}{\mathlarger{\sum}_{ {i\in n_t}}  {RL}(LoC_i)}
\end{equation}

{Where $RL(LoC_i)$ represents the remaining lines of code of the module \textit{i}, for which the tester has to inspect. Equation \ref{equ-remainingLoC} is also a representative of the percent of remaining edits because we assume a unit amount of time is required to inspect each line of code.  Hence, the equation \ref{equ-remainingLoC} can be rewritten as:}
\begin{equation}
\label{equ-percentRemainingEdits}
    \text{Percent of Remaining Edits} = \frac{\mathlarger{\sum}_{ {i\in n_t-TN}}  {RE}(LoC_i)}{\mathlarger{\sum}_{ {i\in n_t}}  {RE}(LoC_i)}
\end{equation}

{Where $RE(LoC_i)$ represents the edits remaining in the module \textit{i}. Now, the numerator of equation \ref{equ-percentRemainingEdits} represents the remaining edits in the project, and it is given as:}
\begin{equation}
\label{equ-remainingEdits}
    {\text{Remaining Edits} = \mathlarger{\sum}_{i\in n_t-TN} RE(LoC_i)}
\end{equation}

{The equation \ref{equ-remainingEdits} is also an called as remaining service time because, a unit time is required to inspect each line of code.} From equation \ref{equ-remainingEdits}, we calculate the supplemental value called the \textit{project hours}, {which estimates the total time required to service the remaining edits, in project hours}. Assume the test team can modify $\Delta$ lines of code for every {one} hour of time. Then, the number of project hours required to modify the modules is {calculated} as:
\begin{equation}
\label{ProjectHours}
    \text{Project Hours} = \Bigg(\frac{\mathlarger{\sum}_{{i\in n_t-TN}} {RE}(LoC_i)}{\Delta} \hspace{0.1cm}\Bigg) hours
\end{equation}

{Note that, without the size metric (LoC), it is not possible to calculate the percent of saved budget, remaining service time, and other related measures from the prediction model.}

{In addition to the project hours metric, we derive another supplemental measure called the editing rate. This measure calculates} the number of lines of code that need to be reviewed to observe one defect. This is calculated based on the ratio of remaining edits and the {total defects in the project}:
\begin{equation}
\label{equ-editing rate}
    \text{Editing Rate} = \frac{\mathlarger{\mathlarger{\sum}_{{i\in n_t-TN}} {RE}(LoC_i)}}{{|TP|+|FN|}}
\end{equation}

{Note that, it is obvious that, if the project management team does not use the SDP models, then the testers have to look into each software module to discover the defects \cite{pressman2005software}. The original editing rate (if the SDP is not in practical use) is given as:}
\begin{equation}
\label{equ-original editing rate}
    \text{{Original Editing Rate =}} \frac{\mathlarger{{\mathlarger{\sum}_{i\in n_t}} {LoC_i}}}{{|TP|+|FN|}}
\end{equation}

{The difference between the original editing rate and the editing rate is the decreased editing rate by the use of the CPDP (in general, the SDP) model.}
\subsubsection{{False Omission Rate (FOR)}}
\label{FOR}
{The} {false omission rate (FOR)} is the ratio of false negatives over the number of predicted cleans. This measure is used to estimate the percent of failures that may occur in the target software. The {FOR} is expressed as:
\begin{equation}
\label{equ-FOR}
   \text{False Omission Rate (FOR)} = \frac{{|}FN{|}}{{|}TN{|}+{|}FN{|}}
\end{equation}

Here, the number of false negatives provides information about the dormant defective modules. As discussed in section \ref{PPC}, upon triggering such defective modules in the {operational} phase {then the end-user} may experience failures {in the software} \cite{lyu1996handbook}. {Hence, minimising the occurrence of failures is also the main objective of the prediction model.} From \cite{lyu1996handbook}, we assume each defective module may cause {the} failure of the software system. If the tester believes this prediction model, then, in a newly developed software, we may expect the number of failures {to be} the ratio of {FOR}. {To make the system with fewer failures (in the ideal case, a failure free system), the defect prediction model should minimize the percentage of false alarms (FOR) in the system.}
\subsubsection{{F-Measure}}
The {F-measure} is {derived} from the harmonic mean of {precision} and {recall}. {F-measure} is used to know the number of instances that the prediction models {are classified} correctly (precisely) and to know the robustness of the classifier (it does not miss {a} significant number of test instances). This measure is calculated as:
\begin{equation}
    {F-measure} = 2.\frac{{precision*recall}}{{precision+recall}}
\end{equation}
{\subsubsection{{AUC}}
\label{AUC}
The Area Under Curve ({AUC}) is a probability curve for a binary classification problem. The {AUC} plots a curve between the true positive rate and the false positive rate at various threshold values, it essentially, in this case, separates the defective modules from the clean modules. Higher values of {AUC} indicate the success of the classifier in distinguishing between defective and clean instances. For instance, at {AUC} = 1, the classifier can successfully distinguish the two classes correctly and, at {AUC} = 0, the classifier predicts all the clean modules as defective and all the defective modules as clean. When the value of {AUC} is in-between 0.5 and 1, then there is a high chance that the classifier performs better on the test data.}
\subsection{Statistical Significance Test}
\label{studentsTtest}
To find whether the performance difference between the HIEL and the other models is statistically significant, two non-parametric significance tests, such as the \textit{one-sample Wilcoxon signed rank test} and \textit{Cliff's delta effect size} tests, were conducted. These significance {tests are conducted using the measures such as {AUC} and} {F1-score (F-measure)}. The \textit{one-sample Wilcoxon signed rank test} is utilised in this work alternative to the \textit{one sample t-test} because, the defect data may not be always be normally distributed. The null and alternative hypothesis for \textit{one-sample Wilcoxon signed rank test} is defined as:\\
$H_0$: The average {performance} of the other models is equal to the {performance} of the HIEL.\\
$H_1$: The average {performance} of the other models differs from the {performance} of the HIEL.\\
For this, the customary threshold value is taken as 0.05. The null hypothesis is rejected when the \textit{p}-value is less than 0.05 significance level \cite{demvsar2006statistical}.

To know the amount of difference between the HIEL and the other models, we use another non-parametric effect size measure called Cliff's delta. This method is considered for the additional analysis of the other hypothesis tests. This measure provides four levels of effectiveness of the HIEL model on the other models over the target projects. These levels are given in table \ref{CliffsDelta}. {The larger} value of Cliff's delta indicates the greater effect between the models.
\begin{table}[ht]
\centering
\caption{Cliff's delta effect size levels \cite{cliff1993dominance}}
\label{CliffsDelta}
\begin{tabular}{lcl}
\hline
S.No & |$\delta$| & Effectiveness Category \\ \hline \hline
1 & 0.000 $\leq |\delta| <$ 0.147 & Negligible \\
2 & 0.147 $\leq |\delta| <$ 0.330 & Small \\
3 & 0.330 $\leq |\delta| <$ 0.474 & Medium \\
4 & 0.474 $\leq |\delta| \leq$ 1.000 & Strong \\ \hline
\end{tabular}
\end{table}

\begin{table}[!t]
\centering
{
\caption{Utilised system characteristics}
\label{SystemCharacteristics}
\setlength{\tabcolsep}{1pt}
\begin{tabular}{p{20pt}|p{60pt}|p{160pt}}
\hline
\multicolumn{3}{c}{\textbf{Systems requirements}} \\ \hline \hline
\textbf{S.No} & \multicolumn{1}{c|}{\textbf{Resource}} & \multicolumn{1}{c}{\textbf{Description}} \\ \hline
1 & \texttt{RStudio-4.1.2} & Integrated Development Environment \\ \hline
2 & \href{https://github.com/ekamnit/CPDP-HIEL}{Replication} & Resource for the {experiments} \\ \hline \hline
\multicolumn{3}{c}{\textbf{\texttt{R} packages and its functions}} \\ \hline \hline
\textbf{S.No} & \multicolumn{1}{c|}{\textbf{Package}} & \multicolumn{1}{c}{\textbf{Function}} \\ \hline
1 &\texttt{stats}& \texttt{glm} (\textit{Logistic Regression})  \\ \hline
2 & \texttt{e1071} & \texttt{naiveBayes} (\textit{Na\"ive Bayes}) \\ \hline
3 & \texttt{e1071} & \texttt{svm} (\textit{Support Vector Machine}) \\ \hline
4 & \texttt{class} & \texttt{knn} (\textit{k-Nearest Neighbours})\\ \hline
5 & \texttt{rpart} & \texttt{rpart} (\textit{Recursive Partitioning and Regression Trees})\\ \hline
6 & \texttt{neuralnet} & \texttt{neuralnet} (\textit{Neural Networks}) \\ \hline
\end{tabular}%
}
\end{table}
\subsection{Developing Environment}
\label{developEnv}
To develop and validate the proposed model, we have used an open-source IDE \texttt{R-4.1.2}. The utilised packages and their corresponding functions are listed in table \ref{SystemCharacteristics}. {The description of the source code and the utilised datasets for the proposed approach is provided in the link} in table \ref{SystemCharacteristics}.
\section{Results and Discussion}
\label{results}
This section presents the results of the conducted experiments on the proposed HIEL model. The section \ref{Betas and Diverse Classifiers} presents the experimental evaluation to choose the value of $\beta$. Section \ref{Comparative Analysis Using TM} presents the comparative analysis of the proposed HIEL model with the other works such as TDS, TCA+, HYDRA, TPTL, and CODEP on the 10 target projects. In section \ref{Cost,Service Time and Failure Analysis}, we examined the amount of budget saved, the remaining service time, the percentage of failures, and other {supplementary measures} derived from the prediction results. The detailed analysis is given below.
\subsection{Experimentation on Choosing $\beta$}
\label{Betas and Diverse Classifiers}
\begin{figure*}[ht]
 \subfloat[Performances of PWMV on \textit{Ant-1.7}]{
  \label{MistakesFigure1}
	\begin{minipage}[c][1\width]{
	   0.3\textwidth}
	   \centering
	   \includegraphics[width=1.1\textwidth]{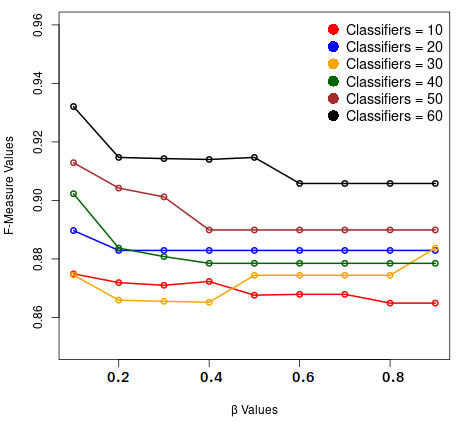}
	\end{minipage}}
 \hfill 	
 \subfloat[Performances of PWMV on \textit{JEdit-4.3}]{
  \label{MistakesFigure2}
	\begin{minipage}[c][1\width]{
	   0.3\textwidth}
	   \centering
	   \includegraphics[width=1.1\textwidth]{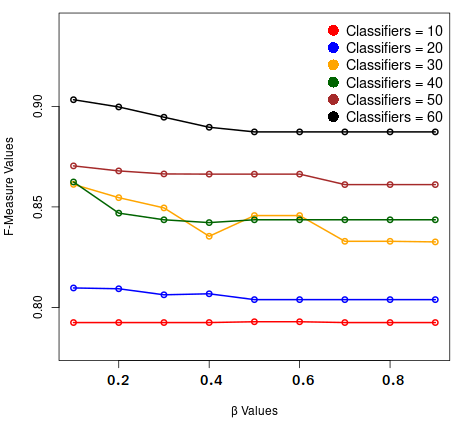}
	\end{minipage}}
 \hfill	
 \subfloat[Performances of PWMV on \textit{Redaktor}]{
  \label{MistakesFigure3}
	\begin{minipage}[c][1\width]{
	   0.3\textwidth}
	   \centering
	   \includegraphics[width=1.1\textwidth]{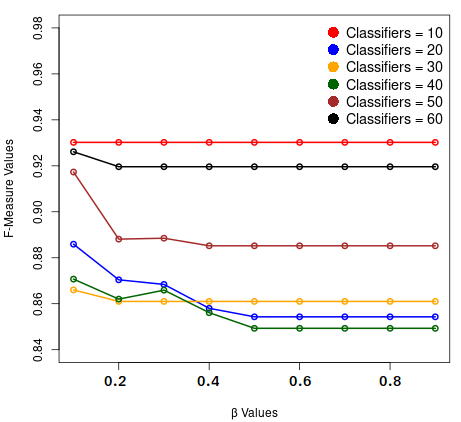}
	\end{minipage}}
 \hfill	
 \subfloat[Performances of PWMV on \textit{Synapse-1.0}]{
  \label{MistakesFigure4}
	\begin{minipage}[c][1\width]{
	   0.3\textwidth}
	   \centering
	   \includegraphics[width=1.1\textwidth]{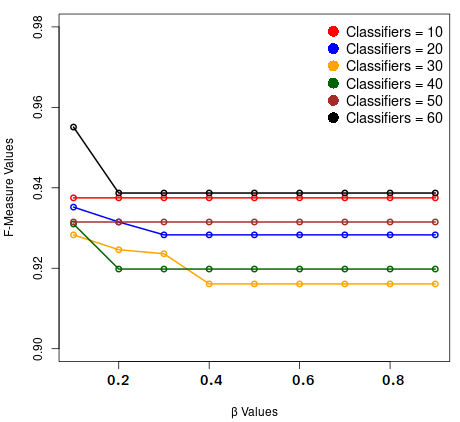}
	\end{minipage}}
 \hfill 	
 \subfloat[Performances of PWMV on \textit{Tomcat}]{
  \label{MistakesFigure5}
	\begin{minipage}[c][1\width]{
	   0.3\textwidth}
	   \centering
	   \includegraphics[width=1.1\textwidth]{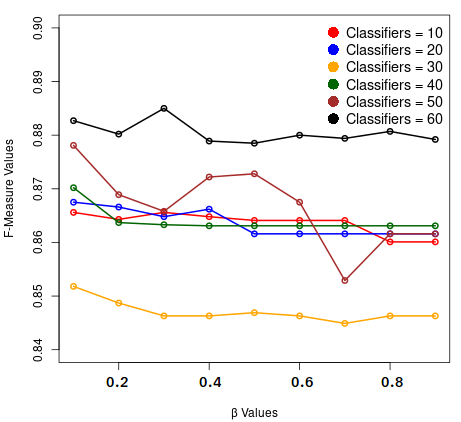}
	\end{minipage}}
 \hfill	
 \subfloat[Performances of PWMV on \textit{Velocity-1.6}]{
  \label{MistakesFigure6}
	\begin{minipage}[c][1\width]{
	   0.3\textwidth}
	   \centering
	   \includegraphics[width=1.1\textwidth]{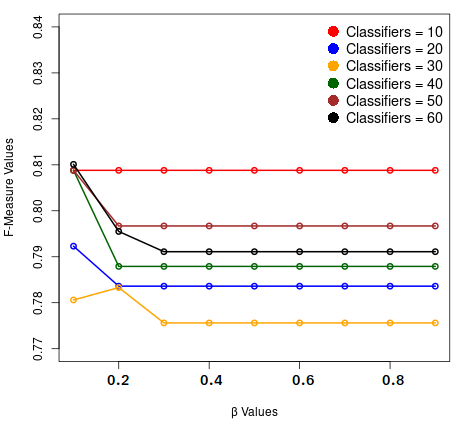}
	\end{minipage}}
 \hfill	
 \subfloat[Performances of PWMV on \textit{Xalan-2.4}]{
  \label{MistakesFigure7}
	\begin{minipage}[c][1\width]{
	   0.3\textwidth}
	   \centering
	   \includegraphics[width=1.1\textwidth]{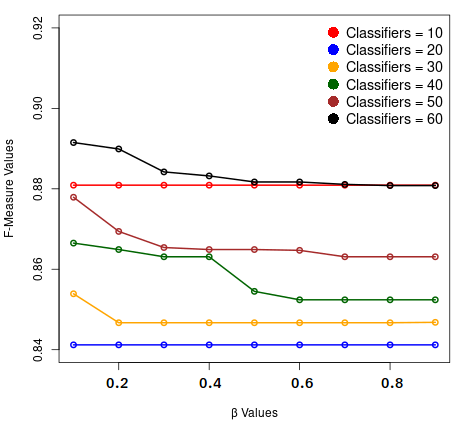}
	\end{minipage}}
 \hfill 	
 \subfloat[Performances of PWMV on \textit{Xerces-1.3}]{
  \label{MistakesFigure8}
	\begin{minipage}[c][1\width]{
	   0.3\textwidth}
	   \centering
	   \includegraphics[width=1.1\textwidth]{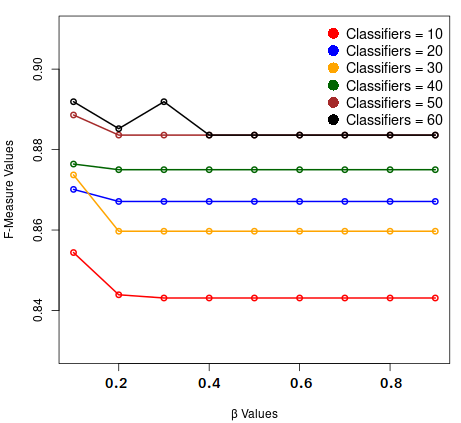}
	\end{minipage}}
 \hfill	
 \subfloat[Performances of PWMV on \textit{JM1}]{
  \label{MistakesFigure9}
	\begin{minipage}[c][1\width]{
	   0.3\textwidth}
	   \centering
	   \includegraphics[width=1.1\textwidth]{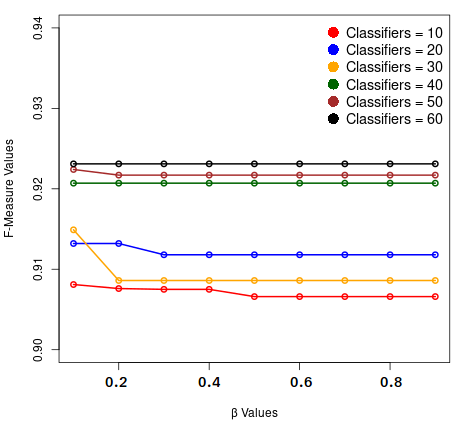}
	\end{minipage}}
 \hfill	
 \subfloat[Performances of PWMV on \textit{MW1}]{
  \label{MistakesFigure10}
	\begin{minipage}[c][1\width]{
	   0.3\textwidth}
	   \centering
	   \includegraphics[width=1.1\textwidth]{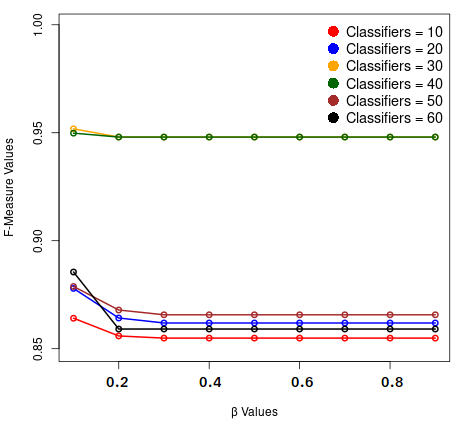}
	\end{minipage}}
 \hfill 	
 \subfloat[Performances of PWMV on \textit{Eclipse}]{
  \label{MistakesFigure11}
	\begin{minipage}[c][1\width]{
	   0.3\textwidth}
	   \centering
	   \includegraphics[width=1.1\textwidth]{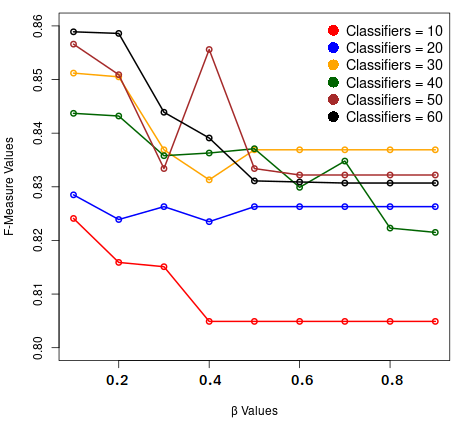}
	\end{minipage}}
 \hfill	
 \subfloat[Performances of PWMV on \textit{Mylyn}]{
  \label{MistakesFigure12}
	\begin{minipage}[c][1\width]{
	   0.3\textwidth}
	   \centering
	   \includegraphics[width=1.1\textwidth]{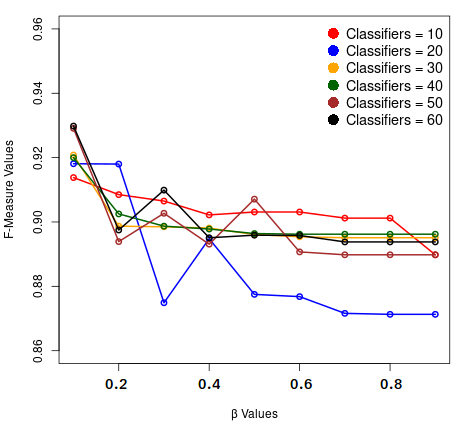}
	\end{minipage}}
\caption{Variation in the {F-measure} values after experimenting  PWMV at different values of $\beta$ and at using different number of diverse classifiers on the target projects.}
\label{BetasAndDiverseClassifiers}
\end{figure*}
As discussed in corollary \ref{corollary1}, choosing the right value of $\beta$ in PWMV helps to improve the final performance of the HIEL model. Hence, to select the value of $\beta$, we have conducted an experiment on different sets of diverse classifiers based on {F-measure} on each target project. {Nonetheless, the experimentation can also be conducted using other metrics.} We then utilised different sets of diverse classifiers to provide generalizability in showing the performances. Since each target project has a different distribution of the data, we have conducted this experiment on each new version of the target project to select the value of $\beta$. {Figure \ref{BetasAndDiverseClassifiers} represents the change in the performances of HIEL when utilising different values of $\beta$ at different sets of utilised diverse classifiers on the randomly selected target projects from the repositories PROMISE, NASA, and AEEEM, respectively.}

It is observed from the Figure \ref{BetasAndDiverseClassifiers} that the performance of the HIEL model is high on the majority of the target projects at the $\beta$ value of {0.1}, except on the projects {\textit{Ant-1.7}}, \textit{Tomcat}, and {\textit{Velocity-1.6}}. On the projects {\textit{Ant-1.7}}, \textit{Tomcat}, and {\textit{Velocity-1.6}}, the performance of the HIEL using PWMV is high at values of {0.9, 0.3, and 0.2}, respectively. {These performances are observed when utilising 30, 60, and 30 diverse classifiers on the above projects: \textit{Ant-1.7}, \textit{Tomcat} and \textit{Velocity-1.6}}, respectively. But on the target projects  {\textit{Tomcat} and \textit{Velocity-1.6}}, the performance of HIEL at {$\beta=0.1$} is just next to the performance of HIEL at {$\beta=0.3$} and at {$\beta=0.2$} respectively. {In the project \textit{Ant-1.7}, the performance of HIEL at $\beta=0.1$ stands in sixth place when we order the values of {F-measure} in decreasing order.}

Precisely, when we consider each project, for example, in the case of \textit{Ant-1.7}, out of six sets of diverse classifiers, the {F-measure} values {are high} at the {$\beta=0.1$}, except when HIEL is tested for the set of {30} diverse classifiers. In the cases of {\textit{JEdit-4.3, Redaktor, Synapse-1.0, Tomcat, Xalan-2.4, Xerces-1.3, JM1, MW1, Eclipse}, and \textit{Mlyn}}, the {F-measure} values {are high at} $\beta=0.1$ when using all sets of diverse classifiers. In the case of {\textit{Velocity-1.6}}, the {F-measure} values of HIEL are incremented at $\beta=0.2$ when using the set of {30} diverse classifiers. On the other set of diverse classifiers {(the case of \textit{Velocity-1.6})}, the {F-measure} values of HIEL were recorded {high} at $\beta=0.1$.

From figures \ref{BetasAndDiverseClassifiers} and \ref{MistakesFigure}, {it is observed that}, on {the} majority of the projects, the performances of HIEL using PWMV are recorded high at $\beta = 0.1, 0.2,$ and $0.3$, {when utilising various sets of diverse classifiers}. For the remaining values of $\beta$, the performance of HIEL {gets} decremented except in very few cases. Amongst the values of $\beta = 0.1, 0.2,$ and $0.3$, in the majority cases (on majority projects), the HIEL recorded its high {F-measure} values when using the value of $\beta=$ {0.1}. {Because, in the majority of cases, the {F-measure} values are high at $\beta=0.1$, in section \ref{Comparative Analysis Using TM}, the comparative analysis is conducted using the proposed classification framework at $\beta=0.1$.}
\subsection{Comparative Analysis {Using Traditional Measures}}
\label{Comparative Analysis Using TM}

\begin{table*}[ht]
\centering
\caption{Comparison of the proposed HIEL model with the other published models in terms of {F-measure} on PROMISE projects. {The W/T/L indicates the proposed model won, tied, or lost to the other models}}
\label{FMeasure-PROMISE}
\begin{tabular}{clcccccc}
\hline
{\color[HTML]{000000} \textbf{S.No}} & {\color[HTML]{000000} \textbf{Project}} & {\color[HTML]{000000} \textbf{CODEP}} & {\color[HTML]{000000} \textbf{TCA+}} & {\color[HTML]{000000} \textbf{HYDRA}} & {\color[HTML]{000000} \textbf{TPTL}} & {\color[HTML]{000000} \textbf{TDS}} & {\color[HTML]{000000} \textbf{HIEL}} \\ \hline
\hline
{\color[HTML]{000000} \textbf{1}} & {\color[HTML]{000000} \textbf{Ant-1.3}} & {\color[HTML]{000000} 0.2857} & {\color[HTML]{000000} 0.3689} & {\color[HTML]{000000} 0.468} & {\color[HTML]{000000} 0.456} & {\color[HTML]{000000} 0.2857} & {\color[HTML]{000000} \textbf{0.9321}} \\
{\color[HTML]{000000} \textbf{2}} & {\color[HTML]{000000} \textbf{Ant-1.4}} & {\color[HTML]{000000} 0.2222} & {\color[HTML]{000000} 0.3816} & {\color[HTML]{000000} 0.914} & {\color[HTML]{000000} 0.377} & {\color[HTML]{000000} 0.2476} & {\color[HTML]{000000} \textbf{0.8704}} \\
{\color[HTML]{000000} \textbf{3}} & {\color[HTML]{000000} \textbf{Ant-1.5}} & {\color[HTML]{000000} 0.2970} & {\color[HTML]{000000} 0.2956} & {\color[HTML]{000000} 0.691} & {\color[HTML]{000000} 0.237} & {\color[HTML]{000000} 0.2319} & {\color[HTML]{000000} \textbf{0.9149}} \\
{\color[HTML]{000000} \textbf{4}} & {\color[HTML]{000000} \textbf{Ant-1.6}} & {\color[HTML]{000000} 0.5161} & {\color[HTML]{000000} 0.5419} & {\color[HTML]{000000} 0.807} & {\color[HTML]{000000} 0.595} & {\color[HTML]{000000} 0.4153} & {\color[HTML]{000000} \textbf{0.8313}} \\
{\color[HTML]{000000} \textbf{5}} & {\color[HTML]{000000} \textbf{Ant-1.7}} & {\color[HTML]{000000} 0.4805} & {\color[HTML]{000000} 0.4896} & {\color[HTML]{000000} 0.295} & {\color[HTML]{000000} 0.455} & {\color[HTML]{000000} 0.3750} & {\color[HTML]{000000} \textbf{0.8489}} \\
{\color[HTML]{000000} \textbf{6}} & {\color[HTML]{000000} \textbf{Camel-1.0}} & {\color[HTML]{000000} 0.0870} & {\color[HTML]{000000} 0.1194} & {\color[HTML]{000000} 0.529} & {\color[HTML]{000000} 0.093} & {\color[HTML]{000000} 0.0541} & {\color[HTML]{000000} \textbf{0.9755}} \\
{\color[HTML]{000000} \textbf{7}} & {\color[HTML]{000000} \textbf{Camel-1.2}} & {\color[HTML]{000000} 0.1978} & {\color[HTML]{000000} 0.4189} & {\color[HTML]{000000} 0.19} & {\color[HTML]{000000} 0.502} & {\color[HTML]{000000} 0.4828} & {\color[HTML]{000000} \textbf{0.7837}} \\
{\color[HTML]{000000} \textbf{8}} & {\color[HTML]{000000} \textbf{Camel-1.4}} & {\color[HTML]{000000} 0.2083} & {\color[HTML]{000000} 0.3750} & {\color[HTML]{000000} 0.503} & {\color[HTML]{000000} 0.339} & {\color[HTML]{000000} 0.2747} & {\color[HTML]{000000} \textbf{0.9127}} \\
{\color[HTML]{000000} \textbf{9}} & {\color[HTML]{000000} \textbf{Camel-1.6}} & {\color[HTML]{000000} 0.1965} & {\color[HTML]{000000} 0.3514} & {\color[HTML]{000000} \textbf{0.991}} & {\color[HTML]{000000} 0.356} & {\color[HTML]{000000} 0.2436} & {\color[HTML]{000000} 0.8884} \\
{\color[HTML]{000000} \textbf{10}} & {\color[HTML]{000000} \textbf{Jedit-3.2}} & {\color[HTML]{000000} 0.3862} & {\color[HTML]{000000} 0.6324} & {\color[HTML]{000000} \textbf{0.903}} & {\color[HTML]{000000} 0.536} & {\color[HTML]{000000} 0.4828} & {\color[HTML]{000000} 0.7407} \\
{\color[HTML]{000000} \textbf{11}} & {\color[HTML]{000000} \textbf{Jedit-4.0}} & {\color[HTML]{000000} 0.4255} & {\color[HTML]{000000} 0.5194} & {\color[HTML]{000000} 0.6291} & {\color[HTML]{000000} 0.447} & {\color[HTML]{000000} 0.5054} & {\color[HTML]{000000} \textbf{0.8412}} \\
{\color[HTML]{000000} \textbf{12}} & {\color[HTML]{000000} \textbf{Jedit-4.1}} & {\color[HTML]{000000} 0.4968} & {\color[HTML]{000000} 0.5333} & {\color[HTML]{000000} 0.551} & {\color[HTML]{000000} 0.522} & {\color[HTML]{000000} 0.4205} & {\color[HTML]{000000} \textbf{0.7967}} \\
{\color[HTML]{000000} \textbf{13}} & {\color[HTML]{000000} \textbf{Jedit-4.2}} & {\color[HTML]{000000} 0.4295} & {\color[HTML]{000000} 0.3321} & {\color[HTML]{000000} 0.492} & {\color[HTML]{000000} 0.37} & {\color[HTML]{000000} 0.2172} & {\color[HTML]{000000} \textbf{0.8402}} \\
{\color[HTML]{000000} \textbf{14}} & {\color[HTML]{000000} \textbf{Jedit-4.3}} & {\color[HTML]{000000} 0.0508} & {\color[HTML]{000000} 0.0461} & {\color[HTML]{000000} 0.324} & {\color[HTML]{000000} 0.05} & {\color[HTML]{000000} 0.0494} & {\color[HTML]{000000} \textbf{0.9033}} \\
{\color[HTML]{000000} \textbf{15}} & {\color[HTML]{000000} \textbf{Log4j-1.0}} & {\color[HTML]{000000} 0.2273} & {\color[HTML]{000000} 0.5000} & {\color[HTML]{000000} 0.413} & {\color[HTML]{000000} 0.637} & {\color[HTML]{000000} 0.4356} & {\color[HTML]{000000} \textbf{0.8793}} \\
{\color[HTML]{000000} \textbf{16}} & {\color[HTML]{000000} \textbf{Log4j-1.1}} & {\color[HTML]{000000} 0.2979} & {\color[HTML]{000000} 0.6957} & {\color[HTML]{000000} 0.538} & {\color[HTML]{000000} 0.699} & {\color[HTML]{000000} 0.4146} & {\color[HTML]{000000} \textbf{0.8421}} \\
{\color[HTML]{000000} \textbf{17}} & {\color[HTML]{000000} \textbf{Log4j-1.2}} & {\color[HTML]{000000} 0.2569} & {\color[HTML]{000000} 0.6529} & {\color[HTML]{000000} \textbf{0.914}} & {\color[HTML]{000000} 0.606} & {\color[HTML]{000000} 0.5387} & {\color[HTML]{000000} 0.1436} \\
{\color[HTML]{000000} \textbf{18}} & {\color[HTML]{000000} \textbf{Lucene-2.0}} & {\color[HTML]{000000} 0.2569} & {\color[HTML]{000000} \textbf{0.6593}} & {\color[HTML]{000000} 0.648} & {\color[HTML]{000000} 0.569} & {\color[HTML]{000000} 0.6267} & {\color[HTML]{000000} 0.7259} \\
{\color[HTML]{000000} \textbf{19}} & {\color[HTML]{000000} \textbf{Lucene-2.2}} & {\color[HTML]{000000} 0.1707} & {\color[HTML]{000000} 0.6202} & {\color[HTML]{000000} \textbf{0.657}} & {\color[HTML]{000000} 0.515} & {\color[HTML]{000000} 0.5447} & {\color[HTML]{000000} 0.5925} \\
{\color[HTML]{000000} \textbf{20}} & {\color[HTML]{000000} \textbf{Lucene-2.4}} & {\color[HTML]{000000} 0.2823} & {\color[HTML]{000000} 0.6593} & {\color[HTML]{000000} \textbf{0.691}} & {\color[HTML]{000000} 0.458} & {\color[HTML]{000000} 0.3282} & {\color[HTML]{000000} 0.6138} \\
{\color[HTML]{000000} \textbf{21}} & {\color[HTML]{000000} \textbf{Poi-1.5}} & {\color[HTML]{000000} 0.2395} & {\color[HTML]{000000} 0.7390} & {\color[HTML]{000000} \textbf{0.742}} & {\color[HTML]{000000} 0.713} & {\color[HTML]{000000} 0.4259} & {\color[HTML]{000000} 0.5895} \\
{\color[HTML]{000000} \textbf{22}} & {\color[HTML]{000000} \textbf{Poi-2.0}} & {\color[HTML]{000000} 0.3077} & {\color[HTML]{000000} 0.2439} & {\color[HTML]{000000} 0.283} & {\color[HTML]{000000} 0.218} & {\color[HTML]{000000} 0.2446} & {\color[HTML]{000000} \textbf{0.9029}} \\
{\color[HTML]{000000} \textbf{23}} & {\color[HTML]{000000} \textbf{Poi-2.5}} & {\color[HTML]{000000} 0.2838} & {\color[HTML]{000000} 0.7723} & {\color[HTML]{000000} \textbf{0.78}} & {\color[HTML]{000000} 0.728} & {\color[HTML]{000000} 0.7261} & {\color[HTML]{000000} 0.5443} \\
{\color[HTML]{000000} \textbf{24}} & {\color[HTML]{000000} \textbf{Poi-3.0}} & {\color[HTML]{000000} 0.3718} & {\color[HTML]{000000} \textbf{0.8303}} & {\color[HTML]{000000} 0.807} & {\color[HTML]{000000} 0.787} & {\color[HTML]{000000} 0.7159} & {\color[HTML]{000000} 0.5537} \\
{\color[HTML]{000000} \textbf{25}} & {\color[HTML]{000000} \textbf{Redaktor}} & {\color[HTML]{000000} 0.2273} & {\color[HTML]{000000} 0.2344} & {\color[HTML]{000000} 0.295} & {\color[HTML]{000000} 0.353} & {\color[HTML]{000000} 0.2381} & {\color[HTML]{000000} \textbf{0.9172}} \\
{\color[HTML]{000000} \textbf{26}} & {\color[HTML]{000000} \textbf{Synapse-1.0}} & {\color[HTML]{000000} 0.4865} & {\color[HTML]{000000} 0.2542} & {\color[HTML]{000000} 0.252} & {\color[HTML]{000000} 0.253} & {\color[HTML]{000000} 0.1064} & {\color[HTML]{000000} \textbf{0.9514}} \\
{\color[HTML]{000000} \textbf{27}} & {\color[HTML]{000000} \textbf{Synapse-1.1}} & {\color[HTML]{000000} 0.3855} & {\color[HTML]{000000} 0.4583} & {\color[HTML]{000000} 0.494} & {\color[HTML]{000000} 0.475} & {\color[HTML]{000000} 0.4405} & {\color[HTML]{000000} \textbf{0.8611}} \\
{\color[HTML]{000000} \textbf{28}} & {\color[HTML]{000000} \textbf{Synapse-1.2}} & {\color[HTML]{000000} 0.3276} & {\color[HTML]{000000} 0.5654} & {\color[HTML]{000000} 0.529} & {\color[HTML]{000000} 0.571} & {\color[HTML]{000000} 0.4664} & {\color[HTML]{000000} \textbf{0.8102}} \\
{\color[HTML]{000000} \textbf{29}} & {\color[HTML]{000000} \textbf{Tomcat}} & {\color[HTML]{000000} 0.3907} & {\color[HTML]{000000} 0.2756} & {\color[HTML]{000000} 0.19} & {\color[HTML]{000000} 0.287} & {\color[HTML]{000000} 0.2626} & {\color[HTML]{000000} \textbf{0.8876}} \\
{\color[HTML]{000000} \textbf{30}} & {\color[HTML]{000000} \textbf{Velocity-1.4}} & {\color[HTML]{000000} 0.2143} & {\color[HTML]{000000} 0.5726} & {\color[HTML]{000000} \textbf{0.793}} & {\color[HTML]{000000} 0.734} & {\color[HTML]{000000} 0.3961} & {\color[HTML]{000000} 0.4257} \\
{\color[HTML]{000000} \textbf{31}} & {\color[HTML]{000000} \textbf{Velocity-1.6}} & {\color[HTML]{000000} 0.3333} & {\color[HTML]{000000} 0.5116} & {\color[HTML]{000000} 0.503} & {\color[HTML]{000000} 0.568} & {\color[HTML]{000000} 0.3587} & {\color[HTML]{000000} \textbf{0.8101}} \\
{\color[HTML]{000000} \textbf{32}} & {\color[HTML]{000000} \textbf{Xalan-2.4}} & {\color[HTML]{000000} 0.3436} & {\color[HTML]{000000} 0.3924} & {\color[HTML]{000000} 0.315} & {\color[HTML]{000000} 0.403} & {\color[HTML]{000000} 0.3279} & {\color[HTML]{000000} \textbf{0.8915}} \\
{\color[HTML]{000000} \textbf{33}} & {\color[HTML]{000000} \textbf{Xalan-2.5}} & {\color[HTML]{000000} 0.3366} & {\color[HTML]{000000} 0.5438} & {\color[HTML]{000000} 0.593} & {\color[HTML]{000000} 0.533} & {\color[HTML]{000000} 0.3944} & {\color[HTML]{000000} \textbf{0.6893}} \\
{\color[HTML]{000000} \textbf{34}} & {\color[HTML]{000000} \textbf{Xalan-2.6}} & {\color[HTML]{000000} 0.3415} & {\color[HTML]{000000} 0.5210} & {\color[HTML]{000000} 0.656} & {\color[HTML]{000000} 0.512} & {\color[HTML]{000000} 0.4611} & {\color[HTML]{000000} \textbf{0.7021}} \\
{\color[HTML]{000000} \textbf{35}} & {\color[HTML]{000000} \textbf{Xalan-2.7}} & {\color[HTML]{000000} 0.2697} & {\color[HTML]{000000} 0.6440} & {\color[HTML]{000000} \textbf{0.991}} & {\color[HTML]{000000} 0.616} & {\color[HTML]{000000} 0.4036} & {\color[HTML]{000000} 0.7248} \\
{\color[HTML]{000000} \textbf{36}} & {\color[HTML]{000000} \textbf{Xerces-1.2}} & {\color[HTML]{000000} 0.2345} & {\color[HTML]{000000} 0.2266} & {\color[HTML]{000000} 0.24} & {\color[HTML]{000000} 0.192} & {\color[HTML]{000000} 0.2367} & {\color[HTML]{000000} \textbf{0.8889}} \\
{\color[HTML]{000000} \textbf{37}} & {\color[HTML]{000000} \textbf{Xerces-1.3}} & {\color[HTML]{000000} 0.4027} & {\color[HTML]{000000} 0.3541} & {\color[HTML]{000000} 0.417} & {\color[HTML]{000000} 0.377} & {\color[HTML]{000000} 0.2297} & {\color[HTML]{000000} \textbf{0.8917}} \\
{\color[HTML]{000000} \textbf{38}} & {\color[HTML]{000000} \textbf{Xerces-1.4}} & {\color[HTML]{000000} 0.3012} & {\color[HTML]{000000} 0.4934} & {\color[HTML]{000000} \textbf{0.903}} & {\color[HTML]{000000} 0.69} & {\color[HTML]{000000} 0.7442} & {\color[HTML]{000000} 0.8163} \\ \hline
\multicolumn{1}{l}{{\color[HTML]{000000} }} & {\color[HTML]{000000} \textbf{Average}} & {\color[HTML]{000000} 0.3045} & {\color[HTML]{000000} 0.4691} & {\color[HTML]{000000} 0.5771} & {\color[HTML]{000000} 0.4692} & {\color[HTML]{000000} 0.3777} & {\color[HTML]{000000} \textbf{0.7825}} \\ \hline
\multicolumn{1}{l}{{\color[HTML]{000000} }} & {\color[HTML]{000000} \textbf{Improvement}} & {\color[HTML]{000000} \textbf{155.89}} & {\color[HTML]{000000} \textbf{66.11}} & {\color[HTML]{000000} \textbf{35.02}} & {\color[HTML]{000000} \textbf{66.07}} & {\color[HTML]{000000} \textbf{106.32}} & {\color[HTML]{000000} \textbf{-}} \\ \hline
\multicolumn{1}{l}{{\color[HTML]{000000} }} & {\color[HTML]{000000} \textbf{W/T/L}} & {\color[HTML]{000000} \textbf{37/0/1}} & {\color[HTML]{000000} \textbf{30/0/8}} & {\color[HTML]{000000} \textbf{26/0/12}} & {\color[HTML]{000000} \textbf{33/0/5}} & {\color[HTML]{000000} \textbf{34/0/4}} & {\color[HTML]{000000} \textbf{-}} \\ \hline
\multicolumn{1}{l}{{\color[HTML]{000000} }} & {\color[HTML]{000000} \textbf{p-value}} & \multicolumn{1}{r}{{\color[HTML]{000000} \textbf{1.55E-12}}} & \multicolumn{1}{r}{{\color[HTML]{000000} \textbf{1.44E-09}}} & \multicolumn{1}{r}{{\color[HTML]{000000} \textbf{1.40E-04}}} & \multicolumn{1}{r}{{\color[HTML]{000000} \textbf{1.35E-09}}} & \multicolumn{1}{r}{{\color[HTML]{000000} \textbf{3.28E-11}}} & {\color[HTML]{000000} \textbf{-}} \\ \hline
\multicolumn{1}{l}{{\color[HTML]{000000} }} & {\color[HTML]{000000} \textbf{Cliff's Delta}} & {\color[HTML]{000000} \textbf{0.9432}} & {\color[HTML]{000000} \textbf{0.8075}} & {\color[HTML]{000000} \textbf{0.5083}} & {\color[HTML]{000000} \textbf{0.8089}} & {\color[HTML]{000000} \textbf{0.8851}} & {\color[HTML]{000000} \textbf{-}} \\ \hline
\end{tabular}
\end{table*}

\begin{table*}[ht]
\caption{{\color[HTML]{000000} Comparison of the proposed HIEL model with the other published models in terms of {F-measure} on the NASA-MDP projects. The W/T/L indicates the proposed model won, tied, or lost to the other models}}
\label{FMeasure-NASA}
\begin{tabular}{clcccccc}
\hline
{\color[HTML]{000000} \textbf{S.No}} & {\color[HTML]{000000} \textbf{Project}} & {\color[HTML]{000000} \textbf{CODEP}} & {\color[HTML]{000000} \textbf{TCA+}} & {\color[HTML]{000000} \textbf{HYDRA}} & {\color[HTML]{000000} \textbf{TPTL}} & {\color[HTML]{000000} \textbf{TDS}} & {\color[HTML]{000000} \textbf{HIEL}} \\ \hline \hline
{\color[HTML]{000000} \textbf{1}} & {\color[HTML]{000000} \textbf{CM1}} & {\color[HTML]{000000} 0.0769} & {\color[HTML]{000000} 0.2182} & {\color[HTML]{000000} 0.7899} & {\color[HTML]{000000} \textbf{0.8978}} & {\color[HTML]{000000} 0.2338} & {\color[HTML]{000000} 0.8645} \\
{\color[HTML]{000000} \textbf{2}} & {\color[HTML]{000000} \textbf{JM1}} & {\color[HTML]{000000} 0.0334} & {\color[HTML]{000000} 0.3225} & {\color[HTML]{000000} 0.6897} & {\color[HTML]{000000} 0.7419} & {\color[HTML]{000000} 0.2138} & {\color[HTML]{000000} \textbf{0.8913}} \\
{\color[HTML]{000000} \textbf{3}} & {\color[HTML]{000000} \textbf{KC1}} & {\color[HTML]{000000} 0.0180} & {\color[HTML]{000000} 0.3951} & {\color[HTML]{000000} \textbf{0.9014}} & {\color[HTML]{000000} 0.6871} & {\color[HTML]{000000} 0.1901} & {\color[HTML]{000000} 0.882} \\
{\color[HTML]{000000} \textbf{4}} & {\color[HTML]{000000} \textbf{KC3}} & {\color[HTML]{000000} 0.0000} & {\color[HTML]{000000} 0.3051} & {\color[HTML]{000000} 0.7648} & {\color[HTML]{000000} 0.7889} & {\color[HTML]{000000} 0.2951} & {\color[HTML]{000000} \textbf{0.8162}} \\
{\color[HTML]{000000} \textbf{5}} & {\color[HTML]{000000} \textbf{MC1}} & {\color[HTML]{000000} 0.0494} & {\color[HTML]{000000} 0.0350} & {\color[HTML]{000000} 0.8987} & {\color[HTML]{000000} 0.6982} & {\color[HTML]{000000} 0.0271} & {\color[HTML]{000000} \textbf{0.9848}} \\
{\color[HTML]{000000} \textbf{6}} & {\color[HTML]{000000} \textbf{MC2}} & {\color[HTML]{000000} 0.1277} & {\color[HTML]{000000} 0.5301} & {\color[HTML]{000000} \textbf{0.7965}} & {\color[HTML]{000000} 0.7849} & {\color[HTML]{000000} 0.4198} & {\color[HTML]{000000} 0.7586} \\
{\color[HTML]{000000} \textbf{7}} & {\color[HTML]{000000} \textbf{MW1}} & {\color[HTML]{000000} 0.0000} & {\color[HTML]{000000} 0.1888} & {\color[HTML]{000000} \textbf{0.8733}} & {\color[HTML]{000000} 0.6911} & {\color[HTML]{000000} 0.1988} & {\color[HTML]{000000} 0.8611} \\
{\color[HTML]{000000} \textbf{8}} & {\color[HTML]{000000} \textbf{PC1}} & {\color[HTML]{000000} 0.1389} & {\color[HTML]{000000} 0.1629} & {\color[HTML]{000000} 0.8541} & {\color[HTML]{000000} \textbf{0.8922}} & {\color[HTML]{000000} 0.1520} & {\color[HTML]{000000} 0.8788} \\
{\color[HTML]{000000} \textbf{9}} & {\color[HTML]{000000} \textbf{PC2}} & {\color[HTML]{000000} 0.0000} & {\color[HTML]{000000} 0.0416} & {\color[HTML]{000000} 0.8765} & {\color[HTML]{000000} 0.9154} & {\color[HTML]{000000} 0.0377} & {\color[HTML]{000000} \textbf{0.9946}} \\
{\color[HTML]{000000} \textbf{10}} & {\color[HTML]{000000} \textbf{PC3}} & {\color[HTML]{000000} 0.0252} & {\color[HTML]{000000} 0.2422} & {\color[HTML]{000000} \textbf{0.9562}} & {\color[HTML]{000000} 0.7912} & {\color[HTML]{000000} 0.2956} & {\color[HTML]{000000} 0.9347} \\
{\color[HTML]{000000} \textbf{11}} & {\color[HTML]{000000} \textbf{PC4}} & {\color[HTML]{000000} 0.0220} & {\color[HTML]{000000} 0.2662} & {\color[HTML]{000000} 0.8451} & {\color[HTML]{000000} 0.8629} & {\color[HTML]{000000} 0.1498} & {\color[HTML]{000000} \textbf{0.9291}} \\
{\color[HTML]{000000} \textbf{12}} & {\color[HTML]{000000} \textbf{PC5}} & {\color[HTML]{000000} 0.1709} & {\color[HTML]{000000} 0.3876} & {\color[HTML]{000000} 0.8473} & {\color[HTML]{000000} 0.7696} & {\color[HTML]{000000} 0.1453} & {\color[HTML]{000000} \textbf{0.9847}} \\ \hline
\multicolumn{1}{l}{{\color[HTML]{000000} }} & {\color[HTML]{000000} \textbf{Average}} & {\color[HTML]{000000} \textbf{0.0552}} & {\color[HTML]{000000} \textbf{0.2579}} & {\color[HTML]{000000} \textbf{0.8411}} & {\color[HTML]{000000} \textbf{0.7934}} & {\color[HTML]{000000} \textbf{0.1966}} & {\color[HTML]{000000} \textbf{0.8984}} \\ \hline
\multicolumn{1}{l}{{\color[HTML]{000000} }} & {\color[HTML]{000000} \textbf{Improvement}} & {\color[HTML]{000000} \textbf{1527.54}} & {\color[HTML]{000000} \textbf{248.35}} & {\color[HTML]{000000} \textbf{6.81}} & {\color[HTML]{000000} \textbf{13.23}} & {\color[HTML]{000000} \textbf{356.97}} & {\color[HTML]{000000} \textbf{-}} \\ \hline
\multicolumn{1}{l}{{\color[HTML]{000000} }} & {\color[HTML]{000000} \textbf{W/T/L}} & {\color[HTML]{000000} \textbf{12/0/0}} & {\color[HTML]{000000} \textbf{12/0/0}} & {\color[HTML]{000000} \textbf{8/0/4}} & {\color[HTML]{000000} \textbf{9/0/3}} & {\color[HTML]{000000} \textbf{12/0/0}} & {\color[HTML]{000000} \textbf{-}} \\ \hline
\multicolumn{1}{l}{{\color[HTML]{000000} }} & {\color[HTML]{000000} \textbf{p-value}} & {\color[HTML]{000000} \textbf{3.60E-05}} & {\color[HTML]{000000} \textbf{7.40E-07}} & {\color[HTML]{000000} 0.06836} & {\color[HTML]{000000} \textbf{0.01}} & {\color[HTML]{000000} \textbf{7.40E-07}} & {\color[HTML]{000000} \textbf{-}} \\ \hline
\multicolumn{1}{l}{{\color[HTML]{000000} }} & {\color[HTML]{000000} \textbf{Cliff's Delta}} & {\color[HTML]{000000} \textbf{1}} & {\color[HTML]{000000} \textbf{1}} & {\color[HTML]{000000} \textbf{0.4444}} & {\color[HTML]{000000} \textbf{0.6111}} & {\color[HTML]{000000} \textbf{1}} & {\color[HTML]{000000} \textbf{-}} \\ \hline
\end{tabular}
\end{table*}

\begin{table*}[ht]
\centering
\caption{{\color[HTML]{000000}Comparison of the proposed HIEL model with the other published models in terms of {F-measure} on the AEEEM projects. The W/T/L indicates the proposed model won, tied, or lost to the other models}}
\label{FMeasure-AEEEM}
\begin{tabular}{clcccccc}
\hline
{\color[HTML]{000000} \textbf{S.No}} & \multicolumn{1}{l}{{\color[HTML]{000000} \textbf{Project}}} & {\color[HTML]{000000} \textbf{CODEP}} & {\color[HTML]{000000} \textbf{TCA+}} & {\color[HTML]{000000} \textbf{HYDRA}} & {\color[HTML]{000000} \textbf{TPTL}} & {\color[HTML]{000000} \textbf{CODEP}} & {\color[HTML]{000000} \textbf{HIEL}} \\ \hline \hline
{\color[HTML]{000000} \textbf{1}} & {\color[HTML]{000000} \textbf{Eclipse}} & {\color[HTML]{000000} 0.4709} & {\color[HTML]{000000} 0.3761} & {\color[HTML]{000000} 0.8612} & {\color[HTML]{000000} \textbf{0.8897}} & {\color[HTML]{000000} 0.3939} & {\color[HTML]{000000} 0.8716} \\
{\color[HTML]{000000} \textbf{2}} & {\color[HTML]{000000} \textbf{Equinox}} & {\color[HTML]{000000} 0.2282} & {\color[HTML]{000000} 0.6887} & {\color[HTML]{000000} 0.7998} & {\color[HTML]{000000} 0.7114} & {\color[HTML]{000000} 0.5094} & {\color[HTML]{000000} 0.7631} \\
{\color[HTML]{000000} \textbf{3}} & {\color[HTML]{000000} \textbf{Lucene}} & {\color[HTML]{000000} 0.2529} & {\color[HTML]{000000} 0.3247} & {\color[HTML]{000000} 0.9247} & {\color[HTML]{000000} 0.8462} & {\color[HTML]{000000} 0.3094} & {\color[HTML]{000000} \textbf{0.9509}} \\
{\color[HTML]{000000} \textbf{4}} & {\color[HTML]{000000} \textbf{Mylyn}} & {\color[HTML]{000000} 0.2186} & {\color[HTML]{000000} 0.3072} & {\color[HTML]{000000} 0.8618} & {\color[HTML]{000000} 0.7465} & {\color[HTML]{000000} 0.2530} & {\color[HTML]{000000} \textbf{0.9298}} \\
{\color[HTML]{000000} \textbf{5}} & {\color[HTML]{000000} \textbf{PDE}} & {\color[HTML]{000000} 0.2699} & {\color[HTML]{000000} 0.3021} & {\color[HTML]{000000} 0.8984} & {\color[HTML]{000000} 0.7249} & {\color[HTML]{000000} 0.2941} & {\color[HTML]{000000} \textbf{0.9159}} \\ \hline
\multicolumn{1}{l}{{\color[HTML]{000000} }} & {\color[HTML]{000000} \textbf{Average}} & {\color[HTML]{000000} 0.2881} & {\color[HTML]{000000} 0.3998} & {\color[HTML]{000000} 0.8692} & {\color[HTML]{000000} 0.7837} & {\color[HTML]{000000} 0.3520} & {\color[HTML]{000000} \textbf{0.8863}} \\ \hline
\multicolumn{1}{l}{{\color[HTML]{000000} }} & {\color[HTML]{000000} \textbf{Improvement}} & {\color[HTML]{000000} \textbf{207.76}} & {\color[HTML]{000000} \textbf{121.68}} & {\color[HTML]{000000} \textbf{1.96}} & {\color[HTML]{000000} \textbf{13.09}} & {\color[HTML]{000000} \textbf{151.79}} & {\color[HTML]{000000} \textbf{-}} \\ \hline
\multicolumn{1}{l}{{\color[HTML]{000000} }} & {\color[HTML]{000000} \textbf{W/T/L}} & {\color[HTML]{000000} \textbf{5/0/0}} & {\color[HTML]{000000} \textbf{5/0/0}} & {\color[HTML]{000000} \textbf{4/0/1}} & {\color[HTML]{000000} \textbf{4/0/1}} & {\color[HTML]{000000} \textbf{5/0/0}} & {\color[HTML]{000000} \textbf{-}} \\ \hline
\multicolumn{1}{l}{{\color[HTML]{000000} }} & {\color[HTML]{000000} \textbf{p-value}} & {\color[HTML]{000000} \textbf{0.0079}} & {\color[HTML]{000000} \textbf{0.0079}} & {\color[HTML]{000000} 0.402} & {\color[HTML]{000000} \textbf{0.036}} & {\color[HTML]{000000} \textbf{0.0079}} & {\color[HTML]{000000} \textbf{-}} \\ \hline
\multicolumn{1}{l}{{\color[HTML]{000000} }} & {\color[HTML]{000000} \textbf{Cliff's Delta}} & {\color[HTML]{000000} \textbf{1}} & {\color[HTML]{000000} \textbf{1}} & {\color[HTML]{000000} \textbf{0.36}} & {\color[HTML]{000000} \textbf{0.52}} & {\color[HTML]{000000} \textbf{1}} & {\color[HTML]{000000} \textbf{-}} \\ \hline
\end{tabular}
\end{table*}

\begin{table*}[ht]
\centering
\caption{\color[HTML]{000000} Comparison of the proposed HIEL model with the other published models in terms of {AUC} on the PROMISE projects. The W/T/L indicates the proposed model won, tied, or lost to the other models}
\label{AUC-PROMISE}
\begin{tabular}{clcccccc}
\hline
{\color[HTML]{000000} \textbf{S.No}} & {\color[HTML]{000000} \textbf{Project}} & {\color[HTML]{000000} \textbf{CODEP}} & {\color[HTML]{000000} \textbf{TCA+}} & {\color[HTML]{000000} \textbf{HYDRA}} & {\color[HTML]{000000} \textbf{TPTL}} & {\color[HTML]{000000} \textbf{TDS}} & {\color[HTML]{000000} \textbf{HIEL}} \\ \hline \hline
{\color[HTML]{000000} \textbf{1}} & {\color[HTML]{000000} \textbf{Ant-1.3}} & {\color[HTML]{000000} 0.5702} & {\color[HTML]{000000} 0.6702} & {\color[HTML]{000000} 0.8154} & {\color[HTML]{000000} 0.6545} & {\color[HTML]{000000} 0.5750} & {\color[HTML]{000000} \textbf{0.8329}} \\
{\color[HTML]{000000} \textbf{2}} & {\color[HTML]{000000} \textbf{Ant-1.4}} & {\color[HTML]{000000} 0.4966} & {\color[HTML]{000000} 0.5618} & {\color[HTML]{000000} 0.6512} & {\color[HTML]{000000} 0.6121} & {\color[HTML]{000000} 0.4342} & {\color[HTML]{000000} \textbf{0.6685}} \\
{\color[HTML]{000000} \textbf{3}} & {\color[HTML]{000000} \textbf{Ant-1.5}} & {\color[HTML]{000000} 0.6309} & {\color[HTML]{000000} \textbf{0.6986}} & {\color[HTML]{000000} 0.6698} & {\color[HTML]{000000} 0.5589} & {\color[HTML]{000000} 0.6164} & {\color[HTML]{000000} 0.5715} \\
{\color[HTML]{000000} \textbf{4}} & {\color[HTML]{000000} \textbf{Ant-1.6}} & {\color[HTML]{000000} 0.6721} & {\color[HTML]{000000} 0.6978} & {\color[HTML]{000000} \textbf{0.7145}} & {\color[HTML]{000000} 0.5846} & {\color[HTML]{000000} 0.6234} & {\color[HTML]{000000} 0.5925} \\
{\color[HTML]{000000} \textbf{5}} & {\color[HTML]{000000} \textbf{Ant-1.7}} & {\color[HTML]{000000} 0.6658} & {\color[HTML]{000000} 0.6966} & {\color[HTML]{000000} \textbf{0.7223}} & {\color[HTML]{000000} 0.5899} & {\color[HTML]{000000} 0.5858} & {\color[HTML]{000000} 0.5663} \\
{\color[HTML]{000000} \textbf{6}} & {\color[HTML]{000000} \textbf{Camel-1.0}} & {\color[HTML]{000000} 0.5294} & {\color[HTML]{000000} 0.6344} & {\color[HTML]{000000} 0.6541} & {\color[HTML]{000000} 0.5001} & {\color[HTML]{000000} 0.3992} & {\color[HTML]{000000} \textbf{0.681}} \\
{\color[HTML]{000000} \textbf{7}} & {\color[HTML]{000000} \textbf{Camel-1.2}} & {\color[HTML]{000000} 0.5242} & {\color[HTML]{000000} 0.5431} & {\color[HTML]{000000} 0.6248} & {\color[HTML]{000000} 0.6148} & {\color[HTML]{000000} 0.5639} & {\color[HTML]{000000} \textbf{0.6255}} \\
{\color[HTML]{000000} \textbf{8}} & {\color[HTML]{000000} \textbf{Camel-1.4}} & {\color[HTML]{000000} 0.5381} & {\color[HTML]{000000} 0.6418} & {\color[HTML]{000000} 0.8847} & {\color[HTML]{000000} 0.5877} & {\color[HTML]{000000} 0.5363} & {\color[HTML]{000000} \textbf{0.9197}} \\
{\color[HTML]{000000} \textbf{9}} & {\color[HTML]{000000} \textbf{Camel-1.6}} & {\color[HTML]{000000} 0.5301} & {\color[HTML]{000000} 0.5862} & {\color[HTML]{000000} 0.6088} & {\color[HTML]{000000} 0.5981} & {\color[HTML]{000000} 0.6033} & {\color[HTML]{000000} \textbf{0.6096}} \\
{\color[HTML]{000000} \textbf{10}} & {\color[HTML]{000000} \textbf{Jedit-3.2}} & {\color[HTML]{000000} 0.5814} & {\color[HTML]{000000} \textbf{0.7164}} & {\color[HTML]{000000} 0.6973} & {\color[HTML]{000000} 0.5549} & {\color[HTML]{000000} 0.5789} & {\color[HTML]{000000} 0.5808} \\
{\color[HTML]{000000} \textbf{11}} & {\color[HTML]{000000} \textbf{Jedit-4.0}} & {\color[HTML]{000000} 0.6221} & {\color[HTML]{000000} 0.6956} & {\color[HTML]{000000} \textbf{0.7111}} & {\color[HTML]{000000} 0.6154} & {\color[HTML]{000000} 0.6661} & {\color[HTML]{000000} 0.642} \\
{\color[HTML]{000000} \textbf{12}} & {\color[HTML]{000000} \textbf{Jedit-4.1}} & {\color[HTML]{000000} 0.6631} & {\color[HTML]{000000} \textbf{0.7003}} & {\color[HTML]{000000} 0.5122} & {\color[HTML]{000000} 0.5527} & {\color[HTML]{000000} 0.5816} & {\color[HTML]{000000} 0.5601} \\
{\color[HTML]{000000} \textbf{13}} & {\color[HTML]{000000} \textbf{Jedit-4.2}} & {\color[HTML]{000000} \textbf{0.7252}} & {\color[HTML]{000000} 0.6898} & {\color[HTML]{000000} 0.5009} & {\color[HTML]{000000} 0.5879} & {\color[HTML]{000000} 0.5387} & {\color[HTML]{000000} 0.5376} \\
{\color[HTML]{000000} \textbf{14}} & {\color[HTML]{000000} \textbf{Jedit-4.3}} & {\color[HTML]{000000} 0.5283} & {\color[HTML]{000000} 0.5209} & {\color[HTML]{000000} 0.5121} & {\color[HTML]{000000} 0.5262} & {\color[HTML]{000000} \textbf{0.5333}} & {\color[HTML]{000000} 0.513} \\
{\color[HTML]{000000} \textbf{15}} & {\color[HTML]{000000} \textbf{Log4j-1.0}} & {\color[HTML]{000000} 0.5488} & {\color[HTML]{000000} 0.6648} & {\color[HTML]{000000} 0.6712} & {\color[HTML]{000000} 0.6346} & {\color[HTML]{000000} 0.5504} & {\color[HTML]{000000} \textbf{0.8915}} \\
{\color[HTML]{000000} \textbf{16}} & {\color[HTML]{000000} \textbf{Log4j-1.1}} & {\color[HTML]{000000} 0.5738} & {\color[HTML]{000000} 0.7727} & {\color[HTML]{000000} 0.6939} & {\color[HTML]{000000} 0.6187} & {\color[HTML]{000000} 0.5439} & {\color[HTML]{000000} \textbf{0.8564}} \\
{\color[HTML]{000000} \textbf{17}} & {\color[HTML]{000000} \textbf{Log4j-1.2}} & {\color[HTML]{000000} 0.5428} & {\color[HTML]{000000} 0.5326} & {\color[HTML]{000000} \textbf{0.7856}} & {\color[HTML]{000000} 0.6909} & {\color[HTML]{000000} 0.4127} & {\color[HTML]{000000} 0.5006} \\
{\color[HTML]{000000} \textbf{18}} & {\color[HTML]{000000} \textbf{Lucene-2.0}} & {\color[HTML]{000000} 0.5577} & {\color[HTML]{000000} \textbf{0.6806}} & {\color[HTML]{000000} 0.6545} & {\color[HTML]{000000} 0.6155} & {\color[HTML]{000000} 0.6298} & {\color[HTML]{000000} 0.7326} \\
{\color[HTML]{000000} \textbf{19}} & {\color[HTML]{000000} \textbf{Lucene-2.2}} & {\color[HTML]{000000} 0.5195} & {\color[HTML]{000000} 0.6127} & {\color[HTML]{000000} 0.6333} & {\color[HTML]{000000} \textbf{0.6406}} & {\color[HTML]{000000} 0.5284} & {\color[HTML]{000000} 0.5853} \\
{\color[HTML]{000000} \textbf{20}} & {\color[HTML]{000000} \textbf{Lucene-2.4}} & {\color[HTML]{000000} 0.5497} & {\color[HTML]{000000} \textbf{0.6459}} & {\color[HTML]{000000} 0.5877} & {\color[HTML]{000000} 0.5641} & {\color[HTML]{000000} 0.5334} & {\color[HTML]{000000} 0.5899} \\
{\color[HTML]{000000} \textbf{21}} & {\color[HTML]{000000} \textbf{Poi-1.5}} & {\color[HTML]{000000} 0.5397} & {\color[HTML]{000000} 0.6521} & {\color[HTML]{000000} \textbf{0.6849}} & {\color[HTML]{000000} 0.6711} & {\color[HTML]{000000} 0.5020} & {\color[HTML]{000000} 0.5972} \\
{\color[HTML]{000000} \textbf{22}} & {\color[HTML]{000000} \textbf{Poi-2.0}} & {\color[HTML]{000000} 0.6098} & {\color[HTML]{000000} 0.5823} & {\color[HTML]{000000} 0.5588} & {\color[HTML]{000000} \textbf{0.6667}} & {\color[HTML]{000000} 0.6167} & {\color[HTML]{000000} 0.5784} \\
{\color[HTML]{000000} \textbf{23}} & {\color[HTML]{000000} \textbf{Poi-2.5}} & {\color[HTML]{000000} 0.5429} & {\color[HTML]{000000} \textbf{0.6669}} & {\color[HTML]{000000} 0.5501} & {\color[HTML]{000000} 0.6319} & {\color[HTML]{000000} 0.6395} & {\color[HTML]{000000} 0.5952} \\
{\color[HTML]{000000} \textbf{24}} & {\color[HTML]{000000} \textbf{Poi-3.0}} & {\color[HTML]{000000} 0.5926} & {\color[HTML]{000000} 0.7757} & {\color[HTML]{000000} \textbf{0.7889}} & {\color[HTML]{000000} 0.6244} & {\color[HTML]{000000} 0.6495} & {\color[HTML]{000000} 0.6125} \\
{\color[HTML]{000000} \textbf{25}} & {\color[HTML]{000000} \textbf{Redaktor}} & {\color[HTML]{000000} 0.5140} & {\color[HTML]{000000} 0.4892} & {\color[HTML]{000000} 0.5589} & {\color[HTML]{000000} 0.5445} & {\color[HTML]{000000} 0.6225} & {\color[HTML]{000000} \textbf{0.7091}} \\
{\color[HTML]{000000} \textbf{26}} & {\color[HTML]{000000} \textbf{Synapse-1.0}} & {\color[HTML]{000000} 0.7387} & {\color[HTML]{000000} 0.6602} & {\color[HTML]{000000} 0.6696} & {\color[HTML]{000000} 0.5654} & {\color[HTML]{000000} 0.4654} & {\color[HTML]{000000} \textbf{0.766}} \\
{\color[HTML]{000000} \textbf{27}} & {\color[HTML]{000000} \textbf{Synapse-1.1}} & {\color[HTML]{000000} 0.6117} & {\color[HTML]{000000} 0.5951} & {\color[HTML]{000000} 0.6562} & {\color[HTML]{000000} 0.5228} & {\color[HTML]{000000} 0.5634} & {\color[HTML]{000000} \textbf{0.7677}} \\
{\color[HTML]{000000} \textbf{28}} & {\color[HTML]{000000} \textbf{Synapse-1.2}} & {\color[HTML]{000000} 0.5781} & {\color[HTML]{000000} 0.6425} & {\color[HTML]{000000} 0.5941} & {\color[HTML]{000000} 0.6207} & {\color[HTML]{000000} 0.5319} & {\color[HTML]{000000} \textbf{0.6924}} \\
{\color[HTML]{000000} \textbf{29}} & {\color[HTML]{000000} \textbf{Tomcat}} & {\color[HTML]{000000} 0.7113} & {\color[HTML]{000000} \textbf{0.7135}} & {\color[HTML]{000000} 0.5846} & {\color[HTML]{000000} 0.6349} & {\color[HTML]{000000} 0.6907} & {\color[HTML]{000000} 0.5259} \\
{\color[HTML]{000000} \textbf{30}} & {\color[HTML]{000000} \textbf{Velocity-1.4}} & {\color[HTML]{000000} 0.5306} & {\color[HTML]{000000} 0.4796} & {\color[HTML]{000000} 0.6609} & {\color[HTML]{000000} \textbf{0.7129}} & {\color[HTML]{000000} 0.4352} & {\color[HTML]{000000} 0.5708} \\
{\color[HTML]{000000} \textbf{31}} & {\color[HTML]{000000} \textbf{Velocity-1.6}} & {\color[HTML]{000000} 0.5756} & {\color[HTML]{000000} 0.6165} & {\color[HTML]{000000} 0.5843} & {\color[HTML]{000000} 0.6848} & {\color[HTML]{000000} 0.3957} & {\color[HTML]{000000} \textbf{0.6873}} \\
{\color[HTML]{000000} \textbf{32}} & {\color[HTML]{000000} \textbf{Xalan-2.4}} & {\color[HTML]{000000} 0.6137} & {\color[HTML]{000000} \textbf{0.7017}} & {\color[HTML]{000000} 0.509} & {\color[HTML]{000000} 0.5678} & {\color[HTML]{000000} 0.5937} & {\color[HTML]{000000} 0.5708} \\
{\color[HTML]{000000} \textbf{33}} & {\color[HTML]{000000} \textbf{Xalan-2.5}} & {\color[HTML]{000000} 0.5654} & {\color[HTML]{000000} 0.5644} & {\color[HTML]{000000} 0.6967} & {\color[HTML]{000000} 0.6509} & {\color[HTML]{000000} 0.5157} & {\color[HTML]{000000} \textbf{0.73}} \\
{\color[HTML]{000000} \textbf{34}} & {\color[HTML]{000000} \textbf{Xalan-2.6}} & {\color[HTML]{000000} 0.5780} & {\color[HTML]{000000} 0.5477} & {\color[HTML]{000000} 0.5999} & {\color[HTML]{000000} 0.6333} & {\color[HTML]{000000} 0.5746} & {\color[HTML]{000000} \textbf{0.6717}} \\
{\color[HTML]{000000} \textbf{35}} & {\color[HTML]{000000} \textbf{Xalan-2.7}} & {\color[HTML]{000000} 0.5780} & {\color[HTML]{000000} 0.6923} & {\color[HTML]{000000} 0.7333} & {\color[HTML]{000000} 0.5147} & {\color[HTML]{000000} \textbf{0.8334}} & {\color[HTML]{000000} 0.5062} \\
{\color[HTML]{000000} \textbf{36}} & {\color[HTML]{000000} \textbf{Xerces-1.2}} & {\color[HTML]{000000} 0.5425} & {\color[HTML]{000000} 0.5143} & {\color[HTML]{000000} 0.5213} & {\color[HTML]{000000} 0.5411} & {\color[HTML]{000000} 0.4252} & {\color[HTML]{000000} \textbf{0.5859}} \\
{\color[HTML]{000000} \textbf{37}} & {\color[HTML]{000000} \textbf{Xerces-1.3}} & {\color[HTML]{000000} \textbf{0.6523}} & {\color[HTML]{000000} 0.6340} & {\color[HTML]{000000} 0.5708} & {\color[HTML]{000000} 0.5379} & {\color[HTML]{000000} 0.3544} & {\color[HTML]{000000} 0.5449} \\
{\color[HTML]{000000} \textbf{38}} & {\color[HTML]{000000} \textbf{Xerces-1.4}} & {\color[HTML]{000000} 0.5793} & {\color[HTML]{000000} 0.6021} & {\color[HTML]{000000} \textbf{0.6159}} & {\color[HTML]{000000} 0.5978} & {\color[HTML]{000000} 0.4093} & {\color[HTML]{000000} 0.5786} \\ \hline
\multicolumn{1}{l}{{\color[HTML]{000000} }} & {\color[HTML]{000000} \textbf{Average}} & {\color[HTML]{000000} 0.5848} & {\color[HTML]{000000} 0.6340} & {\color[HTML]{000000} 0.6386} & {\color[HTML]{000000} 0.6007} & {\color[HTML]{000000} 0.5488} & {\color[HTML]{000000} \textbf{0.6407}} \\ \hline
\multicolumn{1}{l}{{\color[HTML]{000000} }} & {\color[HTML]{000000} \textbf{Improvement}} & {\color[HTML]{000000} \textbf{9.56}} & {\color[HTML]{000000} \textbf{1.06}} & {\color[HTML]{000000} \textbf{0.33}} & {\color[HTML]{000000} \textbf{6.67}} & {\color[HTML]{000000} \textbf{16.76}} & {\color[HTML]{000000} \textbf{-}} \\ \hline
\multicolumn{1}{l}{{\color[HTML]{000000} }} & {\color[HTML]{000000} \textbf{W/T/L}} & {\color[HTML]{000000} \textbf{24/0/14}} & {\color[HTML]{000000} \textbf{18/0/20}} & {\color[HTML]{000000} \textbf{24/0/14}} & {\color[HTML]{000000} \textbf{26/0/12}} & {\color[HTML]{000000} \textbf{24/0/14}} & {\color[HTML]{000000} \textbf{-}} \\ \hline
\multicolumn{1}{l}{{\color[HTML]{000000} }} & {\color[HTML]{000000} \textbf{p-value}} & \multicolumn{1}{r}{{\color[HTML]{000000} \textbf{0.01322}}} & {\color[HTML]{000000} 0.5503} & {\color[HTML]{000000} 0.5195} & {\color[HTML]{000000} 0.2553} & {\color[HTML]{000000} \textbf{0.0007}} & {\color[HTML]{000000} \textbf{-}} \\ \hline
\multicolumn{1}{l}{{\color[HTML]{000000} }} & {\color[HTML]{000000} \textbf{Cliff's Delta}} & {\color[HTML]{000000} \textbf{0.3311}} & {\color[HTML]{000000} -0.0803} & {\color[HTML]{000000} -0.0865} & {\color[HTML]{000000} \textbf{0.1524}} & {\color[HTML]{000000} \textbf{0.4516}} & {\color[HTML]{000000} \textbf{-}} \\ \hline
\end{tabular}
\end{table*}

\begin{table*}[ht]
\caption{{\color[HTML]{000000}Comparison of the proposed HIEL model with the other published models in terms of {AUC} on the NASA projects.The W/T/L indicates the proposed model won, tied, or lost to the other models}}
\label{AUC-NASA}
\begin{tabular}{cccccccc}
\hline
{\color[HTML]{000000} \textbf{S.No}} & {\color[HTML]{000000} \textbf{Project}} & {\color[HTML]{000000} \textbf{CODEP}} & {\color[HTML]{000000} \textbf{TCA+}} & {\color[HTML]{000000} \textbf{HYDRA}} & {\color[HTML]{000000} \textbf{TPTL}} & {\color[HTML]{000000} \textbf{TDS}} & {\color[HTML]{000000} \textbf{HIEL}} \\ \hline \hline
{\color[HTML]{000000} \textbf{1}} & {\color[HTML]{000000} \textbf{CM1}} & {\color[HTML]{000000} 0.5106} & {\color[HTML]{000000} 0.5017} & {\color[HTML]{000000} 0.61} & {\color[HTML]{000000} \textbf{0.6218}} & {\color[HTML]{000000} 0.5678} & {\color[HTML]{000000} 0.5516} \\
{\color[HTML]{000000} \textbf{2}} & {\color[HTML]{000000} \textbf{JM1}} & {\color[HTML]{000000} 0.5081} & {\color[HTML]{000000} 0.5314} & {\color[HTML]{000000} 0.5212} & {\color[HTML]{000000} \textbf{0.5451}} & {\color[HTML]{000000} 0.4612} & {\color[HTML]{000000} 0.5151} \\
{\color[HTML]{000000} \textbf{3}} & {\color[HTML]{000000} \textbf{KC1}} & {\color[HTML]{000000} 0.5032} & {\color[HTML]{000000} \textbf{0.7057}} & {\color[HTML]{000000} 0.5146} & {\color[HTML]{000000} 0.512} & {\color[HTML]{000000} 0.575} & {\color[HTML]{000000} 0.5234} \\
{\color[HTML]{000000} \textbf{4}} & {\color[HTML]{000000} \textbf{KC3}} & {\color[HTML]{000000} 0.5000} & {\color[HTML]{000000} 0.5000} & {\color[HTML]{000000} 0.5226} & {\color[HTML]{000000} 0.5224} & {\color[HTML]{000000} \textbf{0.5750}} & {\color[HTML]{000000} 0.5335} \\
{\color[HTML]{000000} \textbf{5}} & {\color[HTML]{000000} \textbf{MC1}} & {\color[HTML]{000000} 0.5141} & {\color[HTML]{000000} \textbf{0.7621}} & {\color[HTML]{000000} 0.5349} & {\color[HTML]{000000} 0.5759} & {\color[HTML]{000000} 0.6016} & {\color[HTML]{000000} 0.5034} \\
{\color[HTML]{000000} \textbf{6}} & {\color[HTML]{000000} \textbf{MC2}} & {\color[HTML]{000000} 0.5341} & {\color[HTML]{000000} 0.5301} & {\color[HTML]{000000} \textbf{0.6265}} & {\color[HTML]{000000} 0.6117} & {\color[HTML]{000000} 0.5806} & {\color[HTML]{000000} \textbf{0.6265}} \\
{\color[HTML]{000000} \textbf{7}} & {\color[HTML]{000000} \textbf{MW1}} & {\color[HTML]{000000} 0.5000} & {\color[HTML]{000000} 0.5105} & {\color[HTML]{000000} \textbf{0.5601}} & {\color[HTML]{000000} 0.5159} & {\color[HTML]{000000} 0.5359} & {\color[HTML]{000000} 0.5502} \\
{\color[HTML]{000000} \textbf{8}} & {\color[HTML]{000000} \textbf{PC1}} & {\color[HTML]{000000} 0.5367} & {\color[HTML]{000000} 0.5509} & {\color[HTML]{000000} 0.5419} & {\color[HTML]{000000} 0.5264} & {\color[HTML]{000000} 0.5440} & {\color[HTML]{000000} \textbf{0.552}} \\
{\color[HTML]{000000} \textbf{9}} & {\color[HTML]{000000} \textbf{PC2}} & {\color[HTML]{000000} 0.5000} & {\color[HTML]{000000} \textbf{0.7648}} & {\color[HTML]{000000} 0.5058} & {\color[HTML]{000000} 0.6157} & {\color[HTML]{000000} 0.5430} & {\color[HTML]{000000} 0.5105} \\
{\color[HTML]{000000} \textbf{10}} & {\color[HTML]{000000} \textbf{PC3}} & {\color[HTML]{000000} 0.4985} & {\color[HTML]{000000} 0.5553} & {\color[HTML]{000000} 0.6578} & {\color[HTML]{000000} 0.6099} & {\color[HTML]{000000} 0.5985} & {\color[HTML]{000000} \textbf{0.7292}} \\
{\color[HTML]{000000} \textbf{11}} & {\color[HTML]{000000} \textbf{PC4}} & {\color[HTML]{000000} 0.5048} & {\color[HTML]{000000} 0.5962} & {\color[HTML]{000000} \textbf{0.6673}} & {\color[HTML]{000000} 0.6144} & {\color[HTML]{000000} 0.4588} & {\color[HTML]{000000} 0.6355} \\
{\color[HTML]{000000} \textbf{12}} & {\color[HTML]{000000} \textbf{PC5}} & {\color[HTML]{000000} 0.5494} & {\color[HTML]{000000} \textbf{0.9119}} & {\color[HTML]{000000} 0.5837} & {\color[HTML]{000000} 0.5431} & {\color[HTML]{000000} 0.5413} & {\color[HTML]{000000} 0.61} \\ \hline
\multicolumn{1}{l}{{\color[HTML]{000000} }} & {\color[HTML]{000000} \textbf{Average}} & {\color[HTML]{000000} 0.5133} & {\color[HTML]{000000} \textbf{0.6184}} & {\color[HTML]{000000} 0.5705} & {\color[HTML]{000000} {0.5679}} & {\color[HTML]{000000} {0.5486}} & {\color[HTML]{000000} {0.5701}} \\ \hline
\multicolumn{1}{l}{{\color[HTML]{000000} }} & {\color[HTML]{000000} \textbf{Improvement}} & {\color[HTML]{000000} \textbf{11.07}} & {\color[HTML]{000000} {-7.81}} & {\color[HTML]{000000} {-0.07}} & {\color[HTML]{000000} \textbf{0.39}} & {\color[HTML]{000000} \textbf{3.92}} & {\color[HTML]{000000} \textbf{-}} \\ \hline
\multicolumn{1}{l}{{\color[HTML]{000000} }} & {\color[HTML]{000000} \textbf{W/T/L}} & {\color[HTML]{000000} \textbf{11/0/1}} & {\color[HTML]{000000} \textbf{7/0/5}} & {\color[HTML]{000000} \textbf{6/1/5}} & {\color[HTML]{000000} \textbf{8/0/4}} & {\color[HTML]{000000} \textbf{7/0/5}} & {\color[HTML]{000000} \textbf{-}} \\ \hline
\multicolumn{1}{l}{{\color[HTML]{000000} }} & {\color[HTML]{000000} \textbf{p-value}} & {\color[HTML]{000000} \textbf{0.0035}} & {\color[HTML]{000000} 0.729} & {\color[HTML]{000000} 0.8852} & {\color[HTML]{000000} 0.9323} & {\color[HTML]{000000} 0.8398} & {\color[HTML]{000000} \textbf{-}} \\ \hline
\multicolumn{1}{l}{{\color[HTML]{000000} }} & {\color[HTML]{000000} \textbf{Cliff's Delta}} & {\color[HTML]{000000} \textbf{0.7083}} & {\color[HTML]{000000} -0.0903} & {\color[HTML]{000000} -0.0416} & {\color[HTML]{000000} -0.0278} & {\color[HTML]{000000} 0.0556} & {\color[HTML]{000000} \textbf{-}} \\ \hline
\end{tabular}
\end{table*}

\begin{table*}[ht]
\centering
\caption{\color[HTML]{000000} Comparison of the proposed HIEL model with the other published models in terms of {AUC} on the AEEEM projects. The W/T/L indicates the proposed model won, tied, or lost to the other models}
\label{AUC-AEEEM}
\begin{tabular}{clcccccc}
\hline
{\color[HTML]{000000} \textbf{S.No}} & {\color[HTML]{000000} \textbf{Project}} & {\color[HTML]{000000} \textbf{CODEP}} & {\color[HTML]{000000} \textbf{TCA+}} & {\color[HTML]{000000} \textbf{HYDRA}} & {\color[HTML]{000000} \textbf{TPTL}} & {\color[HTML]{000000} \textbf{TDS}} & {\color[HTML]{000000} \textbf{HIEL}} \\ \hline
{\color[HTML]{000000} \textbf{1}} & {\color[HTML]{000000} \textbf{Eclipse}} & {\color[HTML]{000000} \textbf{0.6636}} & {\color[HTML]{000000} 0.5713} & {\color[HTML]{000000} 0.5957} & {\color[HTML]{000000} 0.6259} & {\color[HTML]{000000} 0.6378} & {\color[HTML]{000000} 0.6173} \\
{\color[HTML]{000000} \textbf{2}} & {\color[HTML]{000000} \textbf{Equinox}} & {\color[HTML]{000000} 0.5582} & {\color[HTML]{000000} \textbf{0.7262}} & {\color[HTML]{000000} 0.6641} & {\color[HTML]{000000} 0.6241} & {\color[HTML]{000000} 0.6391} & {\color[HTML]{000000} 0.6945} \\
{\color[HTML]{000000} \textbf{3}} & {\color[HTML]{000000} \textbf{Lucene}} & {\color[HTML]{000000} 0.5764} & {\color[HTML]{000000} 0.7138} & {\color[HTML]{000000} 0.6412} & {\color[HTML]{000000} 0.6887} & {\color[HTML]{000000} 0.6485} & {\color[HTML]{000000} \textbf{0.7273}} \\
{\color[HTML]{000000} \textbf{4}} & {\color[HTML]{000000} \textbf{Mylyn}} & {\color[HTML]{000000} 0.5595} & {\color[HTML]{000000} 0.6108} & {\color[HTML]{000000} \textbf{0.6514}} & {\color[HTML]{000000} 0.5679} & {\color[HTML]{000000} 0.5702} & {\color[HTML]{000000} 0.5723} \\
{\color[HTML]{000000} \textbf{5}} & {\color[HTML]{000000} \textbf{PDE}} & {\color[HTML]{000000} 0.5769} & {\color[HTML]{000000} \textbf{0.6177}} & {\color[HTML]{000000} 0.5521} & {\color[HTML]{000000} 0.5659} & {\color[HTML]{000000} 0.6046} & {\color[HTML]{000000} 0.5364} \\ \hline
\multicolumn{1}{l}{{\color[HTML]{000000} }} & {\color[HTML]{000000} \textbf{Average}} & {\color[HTML]{000000} 0.5869} & {\color[HTML]{000000} \textbf{0.6480}} & {\color[HTML]{000000} 0.6209} & {\color[HTML]{000000} 0.6145} & {\color[HTML]{000000} 0.6200} & {\color[HTML]{000000} 0.6296} \\ \hline
\multicolumn{1}{l}{{\color[HTML]{000000} }} & {\color[HTML]{000000} \textbf{Improvement}} & {\color[HTML]{000000} \textbf{7.27}} & {\color[HTML]{000000} -2.84} & {\color[HTML]{000000} \textbf{1.39}} & {\color[HTML]{000000} \textbf{1.04}} & {\color[HTML]{000000} \textbf{1.53}} & {\color[HTML]{000000} \textbf{-}} \\ \hline
\multicolumn{1}{l}{{\color[HTML]{000000} }} & {\color[HTML]{000000} \textbf{W/T/L}} & {\color[HTML]{000000} \textbf{3/0/2}} & {\color[HTML]{000000} 2/0/3} & {\color[HTML]{000000} \textbf{3/0/2}} & {\color[HTML]{000000} 2/0/3} & {\color[HTML]{000000} \textbf{3/0/2}} & {\color[HTML]{000000} \textbf{-}} \\ \hline
\multicolumn{1}{l}{{\color[HTML]{000000} }} & {\color[HTML]{000000} \textbf{p-value}} & {\color[HTML]{000000} 0.5476} & {\color[HTML]{000000} 0.8413} & {\color[HTML]{000000} 1} & {\color[HTML]{000000} 1} & {\color[HTML]{000000} 1} & {\color[HTML]{000000} \textbf{-}} \\ \hline
\multicolumn{1}{l}{{\color[HTML]{000000} }} & {\color[HTML]{000000} \textbf{Cliff's Delta}} & {\color[HTML]{000000} \textbf{0.28}} & {\color[HTML]{000000} -0.12} & {\color[HTML]{000000} 0.04} & {\color[HTML]{000000} 0.04} & {\color[HTML]{000000} 0.04} & {\color[HTML]{000000} \textbf{-}} \\ \hline
\end{tabular}
\end{table*}

After selecting the value of $\beta$, we have {compared the proposed HIEL (which is constructed using a set of 60 diverse classifiers) with the other baseline models} on the target projects {selected from the publicly available repositories such as PROMISE, NASA, and AEEEM. The comparative analysis is conducted with works such as TDS, TCA+, CODEP, HYDRA, and TPTL. The comparative analysis is conducted using measures such as {F-measure} and {AUC}. In \cite{herbold2017comparative}, Herbold et al. conducted a large scale empirical survey on the cross project defect prediction studies and replicated 24 approaches that were published between 2008 and 2015. Amongst the replicated models, we have utilised the {F-measure} and {AUC} values directly from the replicated works of TDS, TCA+, and CODEP on PROMISE, NASA, and AEEEM datasets.} Because, irrespective of utilising the training set, each of the above models computes the final performance on the same target project. {And for the works HYDRA and TPTL, we used publicly available source code provided by the respective authors}.

{Tables \ref{FMeasure-PROMISE}, \ref{FMeasure-NASA}, and \ref{FMeasure-AEEEM} represent the {F-measure} values of various models on the PROMISE, NASA, and AEEEM projects, respectively.} From table \ref{FMeasure-PROMISE}, it is observed that, {on many target PROMISE projects,} the proposed HIEL model shows a substantial improvement over the other base-line models, significantly. {On an average across 38 PROMISE projects, the proposed HIEL model achieved an improvement of 155.89\%, 66.11\%, 35.02\%, 66.07\%, and 106.32\% on the base-line models such as CODEP, TCA+, HYDRA, TPTL, and TDS, respectively. Similar results have been observed in NASA projects also. From the table \ref{FMeasure-NASA}, it is observed that, on an average across 12 NASA projects, the proposed HIEL model achieved an improvement of 1527.54\%, 248.35\%, 6.81\%, 13.23\%, and 356.97\% on the base-line models such as CODEP, TCA+, HYDRA, TPTL, and TDS, respectively. Similarly, from table \ref{FMeasure-AEEEM}, it is observed that, on the 5 AEEEM projects, on an average, the proposed HIEL model achieved an improvement of 207.76\%, 121.68\%, 1.96\%, 13.09\%, and 151.79\% on the base-line models such as CODEP, TCA+, HYDRA, TPTL, and TDS, respectively.}

{Tables \ref{AUC-PROMISE}, \ref{AUC-NASA}, and \ref{AUC-AEEEM} represent the {AUC} values of various models on the PROMISE, NASA, and AEEEM projects, respectively. The performances (in terms of {AUC}) of all the models on NASA and AEEEM projects indicate the competitive outcomes. From table \ref{AUC-PROMISE}, it is observed that, on many target PROMISE projects, the proposed HIEL model shows a little improvement over the other base-line models. On an average across 38 PROMISE projects, the proposed HIEL model achieved an improvement of 9.56\%, 1.06\%, 0.33\%, 6.67\%, and 16.76\% on the base-line models such as CODEP, TCA+, HYDRA, TPTL, and TDS respectively. From table \ref{AUC-NASA}, it is observed that, on an average across 12 NASA projects, the proposed HIEL model achieved an improvement of 11.06\%, 0.39\%, and 3.92\% over the base-line models such as CODEP, TPTL, and TDS respectively. But the models TCA+ and HYDRA are showing their better performances over the proposed HIEL model. From table \ref{AUC-AEEEM}, it is observed that, on the 5 AEEEM projects, on an average, the proposed HIEL model achieved an improvement of 7.27\%, 1.39\%, 1.04\%, and 1.53\% on the base-line models such as CODEP, HYDRA, TPTL, and TDS, respectively. In this case, the model  TCA+ is showing its better performance over the proposed HIEL model.}

The tables \ref{FMeasure-PROMISE}, \ref{FMeasure-NASA}, and \ref{FMeasure-AEEEM} also present the results of the  significant tests {(conducted using {F-measure})} such as the one-sample Wilcoxson signed rank test and Cliff's delta effect size test. It is observed from the table \ref{FMeasure-PROMISE} that, the proposed model shows its significant improvement in terms of {F-measure} over all the other models. {The values of Cliff's delta also indicate the large effect among the models. On NASA projects (from the table \ref{FMeasure-NASA}), except on HYDRA, the proposed model achieved significantly better results. Despite the fact that the Wilcoxson signed rank test finds no significant difference between HYDRA and HIEL ($p = 0.0684 > 0.05$), the Cliff's delta effect size test finds a medium effect for the HIEL ($delta = 0.444$) when compared to the HYDRA model. The table \ref{FMeasure-AEEEM} also indicate the similar results. On AEEEM projects, except on HYDRA, the proposed model achieved significantly better results in terms of {F-measure}. Even though there is no significant difference between HYDRA and HIEL performances, ($p=0.402>0.05$) is observed according to the Wilcoxson signed rank test, the Cliff's delta effect size test shows that the HIEL has a medium effect ($\delta=0.36$) when compared with the HYDRA model.}

{Similarly, the tables \ref{AUC-PROMISE}, \ref{AUC-NASA}, and \ref{AUC-AEEEM} also present the results of the significant tests (conducted using {AUC}) such as the one-sample Wilcoxson signed rank test and Cliff's delta effect size test. It is observed from the table \ref{AUC-PROMISE}, that the proposed model shows a significant improvement in the performances in terms of {AUC} on the models such as CODEP and TDS. The values of Cliff's delta also indicate the large effect among these models. On the other models, the HIEL does not show its significance difference in terms of both the tests.}

{From the table \ref{AUC-NASA}, on NASA projects, it is observed that, when compared with the CODEP model, the proposed model achieved significantly better average performance in terms of the Wilcoxson signed rank test and Cliff's delta effect size test. On the other models such as TCA+, HYDRA, TPTL, and TDS, the HIEL does not show a significant difference in the performances in terms of both the tests. From table \ref{AUC-AEEEM}, we observe on AEEEM projects that, even though the proposed model achieved better average {AUC}s, it does not achieve significantly better results in terms of either of the tests.}

{In summary, on some models, even though the proposed model is incapable of producing significant results in terms of {AUC}, it is showing a significantly better improvement in the average performance (in terms of {F-measure}) than the other models.}
\subsection{The Saved {Budget}, Remaining Service Time and Failure Analysis}
\label{Cost,Service Time and Failure Analysis}
\begin{figure*}[]
    \centering
    \includegraphics[width = \textwidth]{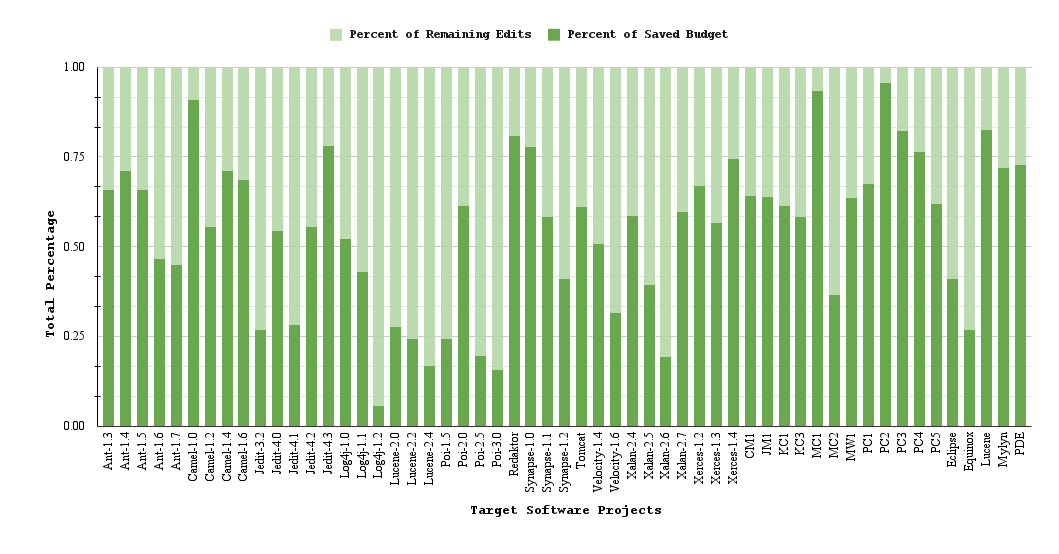}
    \caption{{The percent of saved budget (PSB) and the percent of remaining edits (PRE), obtained from the HIEL model on all the target projects}}
    \label{PSB-RST}
\end{figure*}
\begin{figure*}[!ht]
    \centering
    \includegraphics[width = \textwidth]{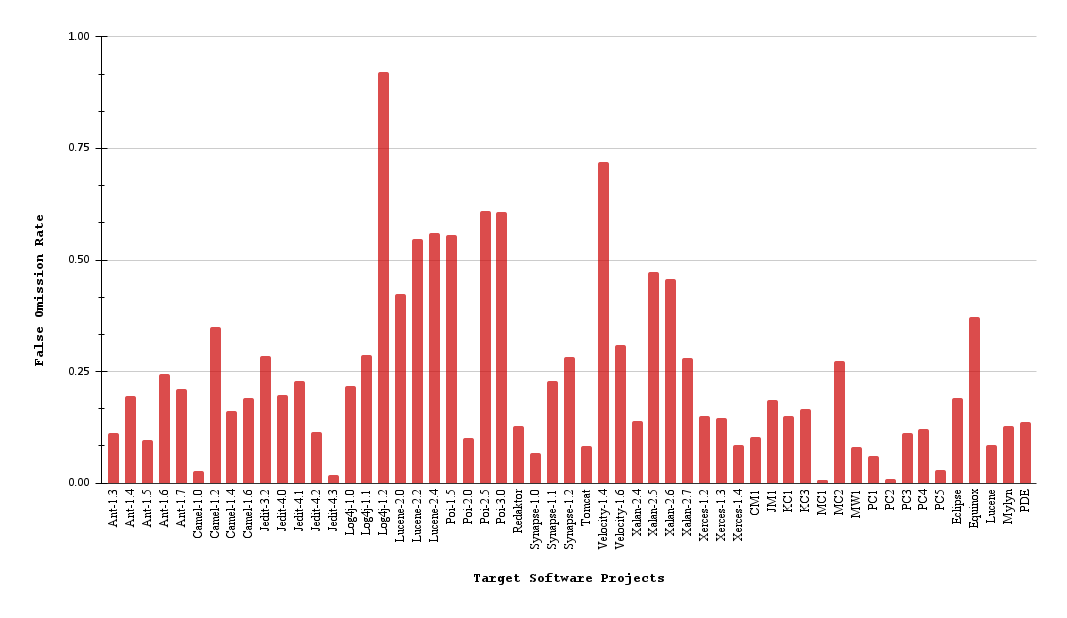}
    \caption{{The false omission rates ({FOR}) obtained from the HIEL model on all the target projects.}}
    \label{FailureRate}
\end{figure*}

\begin{table*}[ht]
\centering
\caption{\color[HTML]{000000} The supplementary details of the proposed measures}
\label{SupplementaryDetails}
\resizebox{\textwidth}{!}{
\begin{tabular}{clrrrrrrrr}
\hline
\multicolumn{10}{c}{{\color[HTML]{000000} \textbf{PROMISE Projects}}} \\ \hline
{\color[HTML]{000000} \textbf{S.No}} & \multicolumn{1}{c}{{\color[HTML]{000000} \textbf{Project}}} & \multicolumn{1}{c}{{\color[HTML]{000000} \textbf{Total LoC}}} & \multicolumn{1}{c}{{\color[HTML]{000000} \textbf{Defects}}} & \multicolumn{1}{c}{{\color[HTML]{000000} \textbf{Saved Budget}}} & \multicolumn{1}{c}{{\color[HTML]{000000} \textbf{\begin{tabular}[c]{@{}c@{}}Remaining\\ Service\\ Time\end{tabular}}}} & \multicolumn{1}{c}{{\color[HTML]{000000} \textbf{Project Hours}}} & \multicolumn{1}{c}{{\color[HTML]{000000} \textbf{\begin{tabular}[c]{@{}c@{}}Original\\ Editing\\ Rate\end{tabular}}}} & \multicolumn{1}{c}{{\color[HTML]{000000} \textbf{Editing Rate}}} & \multicolumn{1}{c}{{\color[HTML]{000000} \textbf{\begin{tabular}[c]{@{}c@{}}Decreased\\ Editing\\ Rate\end{tabular}}}} \\ \hline
{\color[HTML]{000000} \textbf{1}} & {\color[HTML]{000000} \textbf{Ant-1.3}} & {\color[HTML]{000000} 37699} & {\color[HTML]{000000} 20} & {\color[HTML]{000000} 24755} & {\color[HTML]{000000} 12944} & {\color[HTML]{000000} 129.44} & {\color[HTML]{000000} 1884.95} & {\color[HTML]{000000} 647.20} & {\color[HTML]{000000} 1237.75} \\
{\color[HTML]{000000} \textbf{2}} & {\color[HTML]{000000} \textbf{Ant-1.4}} & {\color[HTML]{000000} 54195} & {\color[HTML]{000000} 40} & {\color[HTML]{000000} 38528} & {\color[HTML]{000000} 15667} & {\color[HTML]{000000} 156.67} & {\color[HTML]{000000} 1354.88} & {\color[HTML]{000000} 391.68} & {\color[HTML]{000000} 963.20} \\
{\color[HTML]{000000} \textbf{3}} & {\color[HTML]{000000} \textbf{Ant-1.5}} & {\color[HTML]{000000} 87047} & {\color[HTML]{000000} 32} & {\color[HTML]{000000} 57345} & {\color[HTML]{000000} 29702} & {\color[HTML]{000000} 297.02} & {\color[HTML]{000000} 2720.22} & {\color[HTML]{000000} 928.19} & {\color[HTML]{000000} 1792.03} \\
{\color[HTML]{000000} \textbf{4}} & {\color[HTML]{000000} \textbf{Ant-1.6}} & {\color[HTML]{000000} 113246} & {\color[HTML]{000000} 92} & {\color[HTML]{000000} 52707} & {\color[HTML]{000000} 60539} & {\color[HTML]{000000} 605.39} & {\color[HTML]{000000} 1230.93} & {\color[HTML]{000000} 658.03} & {\color[HTML]{000000} 572.90} \\
{\color[HTML]{000000} \textbf{5}} & {\color[HTML]{000000} \textbf{Ant-1.7}} & {\color[HTML]{000000} 208653} & {\color[HTML]{000000} 166} & {\color[HTML]{000000} 93929} & {\color[HTML]{000000} 114724} & {\color[HTML]{000000} 1147.24} & {\color[HTML]{000000} 1256.95} & {\color[HTML]{000000} 691.11} & {\color[HTML]{000000} 565.84} \\
{\color[HTML]{000000} \textbf{6}} & {\color[HTML]{000000} \textbf{Camel-1.0}} & {\color[HTML]{000000} 33721} & {\color[HTML]{000000} 13} & {\color[HTML]{000000} 30679} & {\color[HTML]{000000} 3042} & {\color[HTML]{000000} 30.42} & {\color[HTML]{000000} 2593.92} & {\color[HTML]{000000} 234.00} & {\color[HTML]{000000} 2359.92} \\
{\color[HTML]{000000} \textbf{7}} & {\color[HTML]{000000} \textbf{Camel-1.2}} & {\color[HTML]{000000} 66302} & {\color[HTML]{000000} 216} & {\color[HTML]{000000} 36764} & {\color[HTML]{000000} 29538} & {\color[HTML]{000000} 295.38} & {\color[HTML]{000000} 306.95} & {\color[HTML]{000000} 136.75} & {\color[HTML]{000000} 170.20} \\
{\color[HTML]{000000} \textbf{8}} & {\color[HTML]{000000} \textbf{Camel-1.4}} & {\color[HTML]{000000} 98080} & {\color[HTML]{000000} 145} & {\color[HTML]{000000} 69685} & {\color[HTML]{000000} 28395} & {\color[HTML]{000000} 283.95} & {\color[HTML]{000000} 676.41} & {\color[HTML]{000000} 195.83} & {\color[HTML]{000000} 480.59} \\
{\color[HTML]{000000} \textbf{9}} & {\color[HTML]{000000} \textbf{Camel-1.6}} & {\color[HTML]{000000} 113055} & {\color[HTML]{000000} 188} & {\color[HTML]{000000} 77510} & {\color[HTML]{000000} 35545} & {\color[HTML]{000000} 355.45} & {\color[HTML]{000000} 601.36} & {\color[HTML]{000000} 189.07} & {\color[HTML]{000000} 412.29} \\
{\color[HTML]{000000} \textbf{10}} & {\color[HTML]{000000} \textbf{Jedit-3.2}} & {\color[HTML]{000000} 128883} & {\color[HTML]{000000} 90} & {\color[HTML]{000000} 34674} & {\color[HTML]{000000} 94209} & {\color[HTML]{000000} 942.09} & {\color[HTML]{000000} 1432.03} & {\color[HTML]{000000} 1046.77} & {\color[HTML]{000000} 385.27} \\
{\color[HTML]{000000} \textbf{11}} & {\color[HTML]{000000} \textbf{Jedit-4.0}} & {\color[HTML]{000000} 144803} & {\color[HTML]{000000} 75} & {\color[HTML]{000000} 78584} & {\color[HTML]{000000} 66219} & {\color[HTML]{000000} 662.19} & {\color[HTML]{000000} 1930.71} & {\color[HTML]{000000} 882.92} & {\color[HTML]{000000} 1047.79} \\
{\color[HTML]{000000} \textbf{12}} & {\color[HTML]{000000} \textbf{Jedit-4.1}} & {\color[HTML]{000000} 153087} & {\color[HTML]{000000} 79} & {\color[HTML]{000000} 43047} & {\color[HTML]{000000} 110040} & {\color[HTML]{000000} 1100.40} & {\color[HTML]{000000} 1937.81} & {\color[HTML]{000000} 1392.91} & {\color[HTML]{000000} 544.90} \\
{\color[HTML]{000000} \textbf{13}} & {\color[HTML]{000000} \textbf{Jedit-4.2}} & {\color[HTML]{000000} 170683} & {\color[HTML]{000000} 48} & {\color[HTML]{000000} 94791} & {\color[HTML]{000000} 75892} & {\color[HTML]{000000} 758.92} & {\color[HTML]{000000} 3555.90} & {\color[HTML]{000000} 1581.08} & {\color[HTML]{000000} 1974.81} \\
{\color[HTML]{000000} \textbf{14}} & {\color[HTML]{000000} \textbf{Jedit-4.3}} & {\color[HTML]{000000} 202363} & {\color[HTML]{000000} 11} & {\color[HTML]{000000} 158076} & {\color[HTML]{000000} 44287} & {\color[HTML]{000000} 442.87} & {\color[HTML]{000000} 18396.64} & {\color[HTML]{000000} 4026.09} & {\color[HTML]{000000} 14370.55} \\
{\color[HTML]{000000} \textbf{15}} & {\color[HTML]{000000} \textbf{Log4j-1.0}} & {\color[HTML]{000000} 21549} & {\color[HTML]{000000} 34} & {\color[HTML]{000000} 11220} & {\color[HTML]{000000} 10329} & {\color[HTML]{000000} 103.29} & {\color[HTML]{000000} 633.79} & {\color[HTML]{000000} 303.79} & {\color[HTML]{000000} 330.00} \\
{\color[HTML]{000000} \textbf{16}} & {\color[HTML]{000000} \textbf{Log4j-1.1}} & {\color[HTML]{000000} 19938} & {\color[HTML]{000000} 37} & {\color[HTML]{000000} 8537} & {\color[HTML]{000000} 11401} & {\color[HTML]{000000} 114.01} & {\color[HTML]{000000} 538.86} & {\color[HTML]{000000} 308.14} & {\color[HTML]{000000} 230.73} \\
{\color[HTML]{000000} \textbf{17}} & {\color[HTML]{000000} \textbf{Log4j-1.2}} & {\color[HTML]{000000} 38191} & {\color[HTML]{000000} 189} & {\color[HTML]{000000} 2103} & {\color[HTML]{000000} 36088} & {\color[HTML]{000000} 360.88} & {\color[HTML]{000000} 202.07} & {\color[HTML]{000000} 190.94} & {\color[HTML]{000000} 11.13} \\
{\color[HTML]{000000} \textbf{18}} & {\color[HTML]{000000} \textbf{Lucene-2.0}} & {\color[HTML]{000000} 50596} & {\color[HTML]{000000} 91} & {\color[HTML]{000000} 13920} & {\color[HTML]{000000} 36676} & {\color[HTML]{000000} 366.76} & {\color[HTML]{000000} 556.00} & {\color[HTML]{000000} 403.03} & {\color[HTML]{000000} 152.97} \\
{\color[HTML]{000000} \textbf{19}} & {\color[HTML]{000000} \textbf{Lucene-2.2}} & {\color[HTML]{000000} 63571} & {\color[HTML]{000000} 144} & {\color[HTML]{000000} 15384} & {\color[HTML]{000000} 48187} & {\color[HTML]{000000} 481.87} & {\color[HTML]{000000} 441.47} & {\color[HTML]{000000} 334.63} & {\color[HTML]{000000} 106.83} \\
{\color[HTML]{000000} \textbf{20}} & {\color[HTML]{000000} \textbf{Lucene-2.4}} & {\color[HTML]{000000} 102859} & {\color[HTML]{000000} 203} & {\color[HTML]{000000} 17128} & {\color[HTML]{000000} 85731} & {\color[HTML]{000000} 857.31} & {\color[HTML]{000000} 506.69} & {\color[HTML]{000000} 422.32} & {\color[HTML]{000000} 84.37} \\
{\color[HTML]{000000} \textbf{21}} & {\color[HTML]{000000} \textbf{Poi-1.5}} & {\color[HTML]{000000} 55428} & {\color[HTML]{000000} 141} & {\color[HTML]{000000} 13422} & {\color[HTML]{000000} 42006} & {\color[HTML]{000000} 420.06} & {\color[HTML]{000000} 393.11} & {\color[HTML]{000000} 297.91} & {\color[HTML]{000000} 95.19} \\
{\color[HTML]{000000} \textbf{22}} & {\color[HTML]{000000} \textbf{Poi-2.0}} & {\color[HTML]{000000} 93171} & {\color[HTML]{000000} 37} & {\color[HTML]{000000} 57128} & {\color[HTML]{000000} 36043} & {\color[HTML]{000000} 360.43} & {\color[HTML]{000000} 2518.14} & {\color[HTML]{000000} 974.14} & {\color[HTML]{000000} 1544.00} \\
{\color[HTML]{000000} \textbf{23}} & {\color[HTML]{000000} \textbf{Poi-2.5}} & {\color[HTML]{000000} 119731} & {\color[HTML]{000000} 248} & {\color[HTML]{000000} 21104} & {\color[HTML]{000000} 98627} & {\color[HTML]{000000} 986.27} & {\color[HTML]{000000} 482.79} & {\color[HTML]{000000} 397.69} & {\color[HTML]{000000} 85.10} \\
{\color[HTML]{000000} \textbf{24}} & {\color[HTML]{000000} \textbf{Poi-3.0}} & {\color[HTML]{000000} 129327} & {\color[HTML]{000000} 281} & {\color[HTML]{000000} 20275} & {\color[HTML]{000000} 109052} & {\color[HTML]{000000} 1090.52} & {\color[HTML]{000000} 460.24} & {\color[HTML]{000000} 388.09} & {\color[HTML]{000000} 72.15} \\
{\color[HTML]{000000} \textbf{25}} & {\color[HTML]{000000} \textbf{Redaktor}} & {\color[HTML]{000000} 59280} & {\color[HTML]{000000} 27} & {\color[HTML]{000000} 47876} & {\color[HTML]{000000} 11404} & {\color[HTML]{000000} 114.04} & {\color[HTML]{000000} 2195.56} & {\color[HTML]{000000} 422.37} & {\color[HTML]{000000} 1773.19} \\
{\color[HTML]{000000} \textbf{26}} & {\color[HTML]{000000} \textbf{Synapse-1.0}} & {\color[HTML]{000000} 28806} & {\color[HTML]{000000} 16} & {\color[HTML]{000000} 22391} & {\color[HTML]{000000} 6415} & {\color[HTML]{000000} 64.15} & {\color[HTML]{000000} 1800.38} & {\color[HTML]{000000} 400.94} & {\color[HTML]{000000} 1399.44} \\
{\color[HTML]{000000} \textbf{27}} & {\color[HTML]{000000} \textbf{Synapse-1.1}} & {\color[HTML]{000000} 42302} & {\color[HTML]{000000} 60} & {\color[HTML]{000000} 24610} & {\color[HTML]{000000} 17692} & {\color[HTML]{000000} 176.92} & {\color[HTML]{000000} 705.03} & {\color[HTML]{000000} 294.87} & {\color[HTML]{000000} 410.17} \\
{\color[HTML]{000000} \textbf{28}} & {\color[HTML]{000000} \textbf{Synapse-1.2}} & {\color[HTML]{000000} 53500} & {\color[HTML]{000000} 86} & {\color[HTML]{000000} 21963} & {\color[HTML]{000000} 31537} & {\color[HTML]{000000} 315.37} & {\color[HTML]{000000} 622.09} & {\color[HTML]{000000} 366.71} & {\color[HTML]{000000} 255.38} \\
{\color[HTML]{000000} \textbf{29}} & {\color[HTML]{000000} \textbf{Tomcat}} & {\color[HTML]{000000} 300674} & {\color[HTML]{000000} 77} & {\color[HTML]{000000} 183474} & {\color[HTML]{000000} 117200} & {\color[HTML]{000000} 1172.00} & {\color[HTML]{000000} 3904.86} & {\color[HTML]{000000} 1522.08} & {\color[HTML]{000000} 2382.78} \\
{\color[HTML]{000000} \textbf{30}} & {\color[HTML]{000000} \textbf{Velocity-1.4}} & {\color[HTML]{000000} 51713} & {\color[HTML]{000000} 147} & {\color[HTML]{000000} 26258} & {\color[HTML]{000000} 25455} & {\color[HTML]{000000} 254.55} & {\color[HTML]{000000} 351.79} & {\color[HTML]{000000} 173.16} & {\color[HTML]{000000} 178.63} \\
{\color[HTML]{000000} \textbf{31}} & {\color[HTML]{000000} \textbf{Velocity-1.6}} & {\color[HTML]{000000} 57012} & {\color[HTML]{000000} 78} & {\color[HTML]{000000} 17946} & {\color[HTML]{000000} 39066} & {\color[HTML]{000000} 390.66} & {\color[HTML]{000000} 730.92} & {\color[HTML]{000000} 500.85} & {\color[HTML]{000000} 230.08} \\
{\color[HTML]{000000} \textbf{32}} & {\color[HTML]{000000} \textbf{Xalan-2.4}} & {\color[HTML]{000000} 225088} & {\color[HTML]{000000} 110} & {\color[HTML]{000000} 131695} & {\color[HTML]{000000} 93393} & {\color[HTML]{000000} 933.93} & {\color[HTML]{000000} 2046.25} & {\color[HTML]{000000} 849.03} & {\color[HTML]{000000} 1197.23} \\
{\color[HTML]{000000} \textbf{33}} & {\color[HTML]{000000} \textbf{Xalan-2.5}} & {\color[HTML]{000000} 304860} & {\color[HTML]{000000} 387} & {\color[HTML]{000000} 119902} & {\color[HTML]{000000} 184958} & {\color[HTML]{000000} 1849.58} & {\color[HTML]{000000} 787.75} & {\color[HTML]{000000} 477.93} & {\color[HTML]{000000} 309.82} \\
{\color[HTML]{000000} \textbf{34}} & {\color[HTML]{000000} \textbf{Xalan-2.6}} & {\color[HTML]{000000} 411737} & {\color[HTML]{000000} 411} & {\color[HTML]{000000} 78959} & {\color[HTML]{000000} 332778} & {\color[HTML]{000000} 3327.78} & {\color[HTML]{000000} 1001.79} & {\color[HTML]{000000} 809.68} & {\color[HTML]{000000} 192.11} \\
{\color[HTML]{000000} \textbf{35}} & {\color[HTML]{000000} \textbf{Xalan-2.7}} & {\color[HTML]{000000} 428555} & {\color[HTML]{000000} 898} & {\color[HTML]{000000} 256233} & {\color[HTML]{000000} 172322} & {\color[HTML]{000000} 1723.22} & {\color[HTML]{000000} 477.23} & {\color[HTML]{000000} 191.90} & {\color[HTML]{000000} 285.34} \\
{\color[HTML]{000000} \textbf{36}} & {\color[HTML]{000000} \textbf{Xerces-1.2}} & {\color[HTML]{000000} 159254} & {\color[HTML]{000000} 71} & {\color[HTML]{000000} 106330} & {\color[HTML]{000000} 52924} & {\color[HTML]{000000} 529.24} & {\color[HTML]{000000} 2243.01} & {\color[HTML]{000000} 745.41} & {\color[HTML]{000000} 1497.61} \\
{\color[HTML]{000000} \textbf{37}} & {\color[HTML]{000000} \textbf{Xerces-1.3}} & {\color[HTML]{000000} 167095} & {\color[HTML]{000000} 69} & {\color[HTML]{000000} 94588} & {\color[HTML]{000000} 72507} & {\color[HTML]{000000} 725.07} & {\color[HTML]{000000} 2421.67} & {\color[HTML]{000000} 1050.83} & {\color[HTML]{000000} 1370.84} \\
{\color[HTML]{000000} \textbf{38}} & {\color[HTML]{000000} \textbf{Xerces-1.4}} & {\color[HTML]{000000} 141180} & {\color[HTML]{000000} 437} & {\color[HTML]{000000} 105066} & {\color[HTML]{000000} 36114} & {\color[HTML]{000000} 361.14} & {\color[HTML]{000000} 323.07} & {\color[HTML]{000000} 82.64} & {\color[HTML]{000000} 240.43} \\ \hline
{\color[HTML]{000000} \textbf{}} & {\color[HTML]{000000} \textbf{Total}} & {\color[HTML]{000000} \textbf{4737234}} & {\color[HTML]{000000} \textbf{5494}} & {\color[HTML]{000000} \textbf{2308586}} & {\color[HTML]{000000} \textbf{2428648}} & {\color[HTML]{000000} \textbf{24286.48}} & {\color[HTML]{000000} \textbf{66224.21}} & {\color[HTML]{000000} \textbf{24910.68}} & {\color[HTML]{000000} \textbf{41313.53}} \\ \hline
\multicolumn{10}{c}{{\color[HTML]{000000} \textbf{NASA Projects}}} \\ \hline
{\color[HTML]{000000} \textbf{1}} & {\color[HTML]{000000} \textbf{CM1}} & {\color[HTML]{000000} 15486} & {\color[HTML]{000000} 42} & {\color[HTML]{000000} 9925} & {\color[HTML]{000000} 5561} & {\color[HTML]{000000} 55.61} & {\color[HTML]{000000} 368.71} & {\color[HTML]{000000} 132.40} & {\color[HTML]{000000} 236.31} \\
{\color[HTML]{000000} \textbf{2}} & {\color[HTML]{000000} \textbf{JM1}} & {\color[HTML]{000000} 376794} & {\color[HTML]{000000} 1759} & {\color[HTML]{000000} 240094} & {\color[HTML]{000000} 136700} & {\color[HTML]{000000} 1367.00} & {\color[HTML]{000000} 214.21} & {\color[HTML]{000000} 77.71} & {\color[HTML]{000000} 136.49} \\
{\color[HTML]{000000} \textbf{3}} & {\color[HTML]{000000} \textbf{KC1}} & {\color[HTML]{000000} 42706} & {\color[HTML]{000000} 325} & {\color[HTML]{000000} 26221} & {\color[HTML]{000000} 16485} & {\color[HTML]{000000} 164.85} & {\color[HTML]{000000} 131.40} & {\color[HTML]{000000} 50.72} & {\color[HTML]{000000} 80.68} \\
{\color[HTML]{000000} \textbf{4}} & {\color[HTML]{000000} \textbf{KC3}} & {\color[HTML]{000000} 6399} & {\color[HTML]{000000} 36} & {\color[HTML]{000000} 3734} & {\color[HTML]{000000} 2665} & {\color[HTML]{000000} 26.65} & {\color[HTML]{000000} 177.75} & {\color[HTML]{000000} 74.03} & {\color[HTML]{000000} 103.72} \\
{\color[HTML]{000000} \textbf{5}} & {\color[HTML]{000000} \textbf{MC1}} & {\color[HTML]{000000} 66583} & {\color[HTML]{000000} 68} & {\color[HTML]{000000} 62137} & {\color[HTML]{000000} 4446} & {\color[HTML]{000000} 44.46} & {\color[HTML]{000000} 979.16} & {\color[HTML]{000000} 65.38} & {\color[HTML]{000000} 913.78} \\
{\color[HTML]{000000} \textbf{6}} & {\color[HTML]{000000} \textbf{MC2}} & {\color[HTML]{000000} 5503} & {\color[HTML]{000000} 44} & {\color[HTML]{000000} 2006} & {\color[HTML]{000000} 3497} & {\color[HTML]{000000} 34.97} & {\color[HTML]{000000} 125.07} & {\color[HTML]{000000} 79.48} & {\color[HTML]{000000} 45.59} \\
{\color[HTML]{000000} \textbf{7}} & {\color[HTML]{000000} \textbf{MW1}} & {\color[HTML]{000000} 6905} & {\color[HTML]{000000} 27} & {\color[HTML]{000000} 4390} & {\color[HTML]{000000} 2515} & {\color[HTML]{000000} 25.15} & {\color[HTML]{000000} 255.74} & {\color[HTML]{000000} 93.15} & {\color[HTML]{000000} 162.59} \\
{\color[HTML]{000000} \textbf{8}} & {\color[HTML]{000000} \textbf{PC1}} & {\color[HTML]{000000} 23020} & {\color[HTML]{000000} 61} & {\color[HTML]{000000} 15499} & {\color[HTML]{000000} 7521} & {\color[HTML]{000000} 75.21} & {\color[HTML]{000000} 377.38} & {\color[HTML]{000000} 123.30} & {\color[HTML]{000000} 254.08} \\
{\color[HTML]{000000} \textbf{9}} & {\color[HTML]{000000} \textbf{PC2}} & {\color[HTML]{000000} 17834} & {\color[HTML]{000000} 16} & {\color[HTML]{000000} 17069} & {\color[HTML]{000000} 765} & {\color[HTML]{000000} 7.65} & {\color[HTML]{000000} 1114.63} & {\color[HTML]{000000} 47.81} & {\color[HTML]{000000} 1066.81} \\
{\color[HTML]{000000} \textbf{10}} & {\color[HTML]{000000} \textbf{PC3}} & {\color[HTML]{000000} 33016} & {\color[HTML]{000000} 140} & {\color[HTML]{000000} 27151} & {\color[HTML]{000000} 5865} & {\color[HTML]{000000} 58.65} & {\color[HTML]{000000} 235.83} & {\color[HTML]{000000} 41.89} & {\color[HTML]{000000} 193.94} \\
{\color[HTML]{000000} \textbf{11}} & {\color[HTML]{000000} \textbf{PC4}} & {\color[HTML]{000000} 30055} & {\color[HTML]{000000} 178} & {\color[HTML]{000000} 22981} & {\color[HTML]{000000} 7074} & {\color[HTML]{000000} 70.74} & {\color[HTML]{000000} 168.85} & {\color[HTML]{000000} 39.74} & {\color[HTML]{000000} 129.11} \\
{\color[HTML]{000000} \textbf{12}} & {\color[HTML]{000000} \textbf{PC5}} & {\color[HTML]{000000} 161695} & {\color[HTML]{000000} 516} & {\color[HTML]{000000} 99849} & {\color[HTML]{000000} 61846} & {\color[HTML]{000000} 618.46} & {\color[HTML]{000000} 313.36} & {\color[HTML]{000000} 119.86} & {\color[HTML]{000000} 193.51} \\ \hline
{\color[HTML]{000000} \textbf{}} & {\color[HTML]{000000} \textbf{Total}} & {\color[HTML]{000000} \textbf{785996}} & {\color[HTML]{000000} \textbf{3212}} & {\color[HTML]{000000} \textbf{531056}} & {\color[HTML]{000000} \textbf{254940}} & {\color[HTML]{000000} \textbf{2549.40}} & {\color[HTML]{000000} \textbf{4462.09}} & {\color[HTML]{000000} \textbf{945.48}} & {\color[HTML]{000000} \textbf{3516.61}} \\ \hline
\multicolumn{10}{c}{{\color[HTML]{000000} \textbf{AEEEM Projects}}} \\ \hline
{\color[HTML]{000000} \textbf{1}} & {\color[HTML]{000000} \textbf{Eclipse}} & {\color[HTML]{000000} 224055} & {\color[HTML]{000000} 206} & {\color[HTML]{000000} 91682} & {\color[HTML]{000000} 132373} & {\color[HTML]{000000} 1323.73} & {\color[HTML]{000000} 1087.65} & {\color[HTML]{000000} 642.59} & {\color[HTML]{000000} 445.06} \\
{\color[HTML]{000000} \textbf{2}} & {\color[HTML]{000000} \textbf{Equinox}} & {\color[HTML]{000000} 39534} & {\color[HTML]{000000} 129} & {\color[HTML]{000000} 10541} & {\color[HTML]{000000} 28993} & {\color[HTML]{000000} 289.93} & {\color[HTML]{000000} 306.47} & {\color[HTML]{000000} 224.75} & {\color[HTML]{000000} 81.71} \\
{\color[HTML]{000000} \textbf{3}} & {\color[HTML]{000000} \textbf{Lucene}} & {\color[HTML]{000000} 73184} & {\color[HTML]{000000} 64} & {\color[HTML]{000000} 60449} & {\color[HTML]{000000} 12735} & {\color[HTML]{000000} 127.35} & {\color[HTML]{000000} 1143.50} & {\color[HTML]{000000} 198.98} & {\color[HTML]{000000} 944.52} \\
{\color[HTML]{000000} \textbf{4}} & {\color[HTML]{000000} \textbf{Mylyn}} & {\color[HTML]{000000} 156102} & {\color[HTML]{000000} 245} & {\color[HTML]{000000} 112333} & {\color[HTML]{000000} 43769} & {\color[HTML]{000000} 437.69} & {\color[HTML]{000000} 637.15} & {\color[HTML]{000000} 178.65} & {\color[HTML]{000000} 458.50} \\
{\color[HTML]{000000} \textbf{5}} & {\color[HTML]{000000} \textbf{PDE}} & {\color[HTML]{000000} 146952} & {\color[HTML]{000000} 209} & {\color[HTML]{000000} 106900} & {\color[HTML]{000000} 40052} & {\color[HTML]{000000} 400.52} & {\color[HTML]{000000} 703.12} & {\color[HTML]{000000} 191.64} & {\color[HTML]{000000} 511.48} \\ \hline
{\color[HTML]{000000} } & {\color[HTML]{000000} \textbf{Total}} & {\color[HTML]{000000} 639827} & {\color[HTML]{000000} 853} & {\color[HTML]{000000} 381905} & {\color[HTML]{000000} 257922} & {\color[HTML]{000000} 2579.22} & {\color[HTML]{000000} 3877.88} & {\color[HTML]{000000} \textbf{1436.61}} & {\color[HTML]{000000} 2441.27} \\ \hline
\end{tabular}}
\end{table*}

{This section describes the proposed HIEL model's performance in terms of the proposed measures}. 

{The figure \ref{PSB-RST} depicts the percent of saved budget and the percent of remaining edit ratios, which are obtained from the HIEL model on all the datasets. It is observed from the figure \ref{PSB-RST} that, out of 55 target projects, 36 target projects account for more than 50\% of the budget savings from the HIEL model. Within the 36 projects, 9 projects such as \textit{Camel-1.0, JEdit-4.3, Redaktor, Synapse-1.0, MC1, PC2, PC3, PC4}, and \textit{Lucene} account for more than 75\% of the budget savings from the original allocated budget. Surprisingly, the projects \textit{Camel-1.0, MC1}, and \textit{PC2} greatly benefit from the use of the HIEL model, as these projects account for more than 90\% of the budget savings from the original allocated budget.}

{The figure \ref{PSB-RST} also represents the amount of service time required to remove the defect content. The vertical bars on top of the dark green bars indicate the percent of remaining edits that are present in each target project. From the figure \ref{PSB-RST}, it is observed that the testers need to conduct more than 50\% code-walk on 19 projects to observe the total defects in the respective projects. Of which, only 7 projects, such as \textit{Log4j-1.2, Lucene-2.2, Lucene-2.4, Poi-1.5, Poi-2.5, Poi-3.0}, and \textit{Xalan-2.6} require more than 75\% of the code walk to observe the total defects. Among the 7 projects, \textit{Log4j-1.2} requires more than 90\% of the code-walk to observe the total defects.}

{Table \ref{SupplementaryDetails}\footnote[1]{A long table is provided in the following link that gives complete experimental results: \href{https://github.com/ekamnit/CPDP-HIEL}{https://github.com/ekamnit/CPDP-HIEL}} presents the supplementary details of the proposed measures. These details provide more insights on the predictions. From the table \ref{SupplementaryDetails}, it is observed that, on an average, the HIEL on the PROMISE projects contributes to 49.70\% of the total savings (nearly 2.3 million cost units out of 4.7 million allocated cost units). The remaining 50.30\% lines of code (nearly 2.4 million LoC) need to be investigated by the testers to observe the defects. This is equivalent to the testers spending 24,286 hours of time (assume, from equation \ref{ProjectHours}, a group of testers service 100 LoC ($\delta=100$) in an hour) on the respective modules. In the process of conducting the code-walk, on an average, for every 24,911 LoC, the testers will find one defect. If the group of testers would involved in testing the same developed code, then the original editing rate to observe one defect is approximately 66,225 LoC/defect. Hence, by using the HIEL model, on an average, the testers can avoid nearly 41,314 edits.}

{On NASA and AEEEM projects (from the table \ref{SupplementaryDetails}), the HIEL model attributes to an improved budget savings of 68.70\% cost units (531,056 cost units out of 785,996 allocated cost units). On the remaining 31.30\% of the LoC (254,940 time units), the testers need to investigate such code to observe the defects. This is equivalent to the testers spending 2549 hours of time on the respective modules. In the process of conducting the code-walk, on an average, for every 945 LoC, the testers will find one defect. If the group of testers were involved in testing the same developed code, then the original editing rate to observe one defect is approximately 4462 LoC/defect. Hence, by using the HIEL model, on an average, the testers can avoid nearly 3517 edits.}

{Similarly, on the AEEEM projects, the HIEL tries to save 58.98\% of the budget (which is equivalent to 381,905 cost units) and leaves the testers about 41.02\% of the LoC (which is equivalent to 257,922 time units) for the investigation of the defects. In this case, for the complete code-walk, the testers have to engage with the code for nearly 2579 hours of time to detect the complete defects. While conducting the code-walk, on an average, the testers may observe one defect for every 1437 LoC. If the group of testers were involved in testing the same developed code, then the original editing rate to observe one defect is approximately 3878 LoC/defect. Hence, by using the HIEL model, on an average, the testers can avoid nearly 2441 edits. However, to achieve the desired goal of the defect prediction task, the prediction model should have to reduce the remaining service time (consequently, it improves the savings in the total allocated budget) to the maximum possible extent.}

{On the other hand, the figure \ref{FailureRate} represents the failure rate (in terms of {FOR}) in each target project. From the figure \ref{FailureRate}, it is observed that, among all the target projects, the projects such as \textit{Log4j-1.2, Lucene-2.2, Lucene-2.4, Poi-1.5, Poi-2.5, Poi-3.0}, and \textit{Velocity-1.4} may experience more than 50\% of the failure incidents, due to dormant defects. Among which, the project \textit{Log4j-1.2} has more than 90\% failure incidents. Note that, because each project is different in size (in terms of modules and the number of defects), the above mentioned percentages are proportional to the failure condition of the individual project. In total, the HIEL results in 5494, 3212, and 853 hidden defects (dormant defects) in the PROMISE, NASA, and AEEEM projects, respectively. On an average, this is equivalent to 28.44\%, 10.87\%, and 18.30\% of the failure chances in the PROMISE, NASA, and AEEEM projects, respectively. However, to achieve a safe system, the prediction model needs to nullify the number of failure conditions (that is, {FOR}) from the predictions.}
\subsection{{Comparative Analysis Using Proposed Measures}}
\label{Comparative Analysis Using PM}
\begin{table*}
\caption{{Average performances of the models such as HIEL, HYDRA, TPTL in terms of the proposed measures.}}
\label{Table: Comparative Performances Using Proposed Measures}
\resizebox{\textwidth}{!}{
\begin{tabular}{c|c|ccc|ccc|ccc|ccc|ccc}
\hline
{ } & { } & \multicolumn{3}{c|}{{ \textbf{PPC}}} & \multicolumn{3}{c|}{{ \textbf{PSB}}} & \multicolumn{3}{c|}{{ \textbf{PNPC}}} & \multicolumn{3}{c|}{{ \textbf{PRE}}} & \multicolumn{3}{c}{{ \textbf{FOR}}} \\ \cline{3-17} 
\multirow{-2}{*}{{ \textbf{S.No}}} & \multirow{-2}{*}{{ \textbf{Repository}}} & \multicolumn{1}{c|}{{ \textbf{HIEL}}} & \multicolumn{1}{c|}{{ \textbf{HYDRA}}} & { \textbf{TPTL}} & \multicolumn{1}{c|}{{ \textbf{HIEL}}} & \multicolumn{1}{c|}{{ \textbf{HYDRA}}} & { \textbf{TPTL}} & \multicolumn{1}{c|}{{ \textbf{HIEL}}} & \multicolumn{1}{c|}{{ \textbf{HYDRA}}} & { \textbf{TPTL}} & \multicolumn{1}{c|}{{ \textbf{HIEL}}} & \multicolumn{1}{c|}{{ \textbf{HYDRA}}} & { \textbf{TPTL}} & \multicolumn{1}{c|}{{ \textbf{HIEL}}} & \multicolumn{1}{c|}{{ \textbf{HYDRA}}} & { \textbf{TPTL}} \\ \hline
{ \textbf{1}} & { \textbf{PROMISE}} & \multicolumn{1}{c|}{{ \textbf{0.6455}}} & \multicolumn{1}{c|}{{ 0.4547}} & { 0.3517} & \multicolumn{1}{c|}{{ \textbf{0.4970}}} & \multicolumn{1}{c|}{{ 0.3260}} & { 0.3129} & \multicolumn{1}{c|}{{ \textbf{0.3545}}} & \multicolumn{1}{c|}{{ 0.5453}} & { 0.6483} & \multicolumn{1}{c|}{{ \textbf{0.5030}}} & \multicolumn{1}{c|}{{ 0.6740}} & { 0.6871} & \multicolumn{1}{c|}{{ 28.44}} & \multicolumn{1}{c|}{{ \textbf{28.29}}} & { 30.66} \\ 
{ \textbf{2}} & { \textbf{NASA}} & \multicolumn{1}{c|}{{ \textbf{0.8023}}} & \multicolumn{1}{c|}{{ 0.7089}} & { 0.6720} & \multicolumn{1}{c|}{{ \textbf{0.6870}}} & \multicolumn{1}{c|}{{ 0.6099}} & { 0.5329} & \multicolumn{1}{c|}{{ \textbf{0.1977}}} & \multicolumn{1}{c|}{{ 0.2911}} & { 0.3280} & \multicolumn{1}{c|}{{ \textbf{0.3130}}} & \multicolumn{1}{c|}{{ 0.3901}} & { 0.4671} & \multicolumn{1}{c|}{{ 10.87}} & \multicolumn{1}{c|}{{ \textbf{10.73}}} & { 12.90} \\ 
{ \textbf{3}} & { \textbf{AEEEM}} & \multicolumn{1}{c|}{{ \textbf{0.7826}}} & \multicolumn{1}{c|}{{ 0.7495}} & { 0.6650} & \multicolumn{1}{c|}{{ \textbf{0.5898}}} & \multicolumn{1}{c|}{{ 0.5355}} & { 0.5198} & \multicolumn{1}{c|}{{ \textbf{0.2174}}} & \multicolumn{1}{c|}{{ 0.2505}} & { 0.3350} & \multicolumn{1}{c|}{{ \textbf{0.4102}}} & \multicolumn{1}{c|}{{ 0.4645}} & { 0.4802} & \multicolumn{1}{c|}{{ 18.30}} & \multicolumn{1}{c|}{{ \textbf{17.96}}} & { 20.17} \\ \hline
{ } & { \textbf{Average}} & \multicolumn{1}{c|}{{ \textbf{0.7435}}} & \multicolumn{1}{c|}{{ 0.6377}} & { 0.5629} & \multicolumn{1}{c|}{{ \textbf{0.5913}}} & \multicolumn{1}{c|}{{ 0.4905}} & { 0.4552} & \multicolumn{1}{c|}{{ \textbf{0.2565}}} & \multicolumn{1}{c|}{{ 0.3623}} & { 0.4371} & \multicolumn{1}{c|}{{ \textbf{0.4087}}} & \multicolumn{1}{c|}{{ 0.5095}} & { 0.5448} & \multicolumn{1}{c|}{{ 19.20}} & \multicolumn{1}{c|}{{ \textbf{18.99}}} & { 21.24} \\ \hline
{ } & { \textbf{Improvement}} & \multicolumn{1}{c|}{{ -}} & \multicolumn{1}{c|}{{ \textbf{16.59\% {\color[HTML]{036400}$\Uparrow$}}}} & { \textbf{32.08\%{\color[HTML]{036400}$\Uparrow$}}} & \multicolumn{1}{c|}{{ -}} & \multicolumn{1}{c|}{{ \textbf{20.55\%{\color[HTML]{036400}$\Uparrow$}}}} & { \textbf{29.90\%{\color[HTML]{036400}$\Uparrow$}}} & \multicolumn{1}{c|}{{ -}} & \multicolumn{1}{c|}{{ \textbf{41.25\%{\color[HTML]{036400}$\Downarrow$}}}} & { \textbf{70.41\%{\color[HTML]{036400}$\Downarrow$}}} & \multicolumn{1}{c|}{{ -}} & \multicolumn{1}{c|}{{ \textbf{24.66\%{\color[HTML]{036400}$\Downarrow$}}}} & { \textbf{33.30\%{\color[HTML]{036400}$\Downarrow$}}} & \multicolumn{1}{c|}{{ -}} & \multicolumn{1}{c|}{{ 1.11\%\textcolor{red}{$\Downarrow$}}} & { \textbf{10.63\%{\color[HTML]{036400}$\Downarrow$}}} \\ \hline
\multicolumn{17}{l}{{{\large\textit{The green-coloured arrows (both up and down arrows) indicates that the proposed model achieved better performances than the other models. The red-}}}}\\ 
\multicolumn{17}{l}{{{\large\textit{coloured arrow indicates that the proposed model does not achieve better performance over the other model.}}}}
\end{tabular}
}
\end{table*}
{In section \ref{Comparative Analysis Using TM}, we have provided the comparative analysis using the traditional measures such as F-measure and AUC. In this section, we present a comparative analysis using the proposed measures. As an example, the comparative analysis is provided between the three approaches, such as TPTL, HYDRA, and the proposed HIEL model. We have chosen two models such as HYDRA and TPTL, because from the empirical evaluations (in section \ref{Comparative Analysis Using TM}), it is observed that these two models are the strongest competitors to our proposed approach. Nonetheless, numerous approaches may utilise the proposed measures to evaluate the predictions.}

{Table \ref{Table: Comparative Performances Using Proposed Measures}\footnote[1]{{A long table is provided in the following link that gives complete experimental results of the proposed HIEL model, along with HYDRA and TPTL, respectively, on all the target projects from the three repositories such as PROMISE, NASA, and AEEEM, respectively, using the proposed measures: \href{https://github.com/ekamnit/CPDP-HIEL/blob/main/CPDP-Review-Results.xlsx}{https://github.com/ekamnit/CPDP-HIEL/blob/main/CPDP-Review-Results.xlsx}}} presents the average performances of the proposed HIEL model, along with HYDRA and TPTL, respectively, on the three repositories such as PROMISE, NASA, and AEEEM, using the proposed measures. When compared with the traditional measures, in the majority of cases, the proposed model shows a clear improvement in terms of the proposed measures on the base-lines such as HYDRA and TPTL. From table \ref{Table: Comparative Performances Using Proposed Measures}, it is observed that, on an average across all the repositories, the proposed HIEL model achieved an improvement of 16.59\% and 32.08\%, respectively, over the base-line models such as HYDRA and TPTL, respectively, in terms of the percent of perfect cleans. This is an indication of accurate predictions in terms of actual clean modules. Hence, consequently, the proposed model is showing an improvement of 20.55\% and 29.90\%, respectively, over the base-line models such as HYDRA and TPTL, in terms of the percent of saved budget. This indicates that, with the use of the proposed model, on an average, the target projects may benefit in terms of savings in the total allocated budget.} 

{Similarly, from table \ref{Table: Comparative Performances Using Proposed Measures}, it is observed that, on an average across all the repositories, the proposed HIEL model achieved an improvement of 41.25\% and 70.41\%, respectively, over the base-line models such as HYDRA and TPTL, in terms of the percent of non-perfect cleans. This indicates that, when compared with the other models, on average, the proposed model is more accurate in predicting the defective instances. Hence, consequently, the proposed model is showing an improvement of 24.66\% and 33.30\%, respectively, over the base-line models such as HYDRA and TPTL, in terms of the percent of remaining edits. This indicates that, when compared with the other models, on average, the testers may benefit in terms of savings in the remaining edits on the target projects.}

{In terms of average failure cases (that is, FOR) across all the target projects, the HYDRA model outperformed the other models. However, the HYDRA model when compared with the proposed model, achieved a negligible improvement of 1.11\% on the average of all the target projects. This shows that, on an average, the HYDRA model is faintly better at reducing false negative instances on the majority of the target projects. But when compared with the TPTL model, the proposed model shows an improvement of 10.63\% on the average of all the target projects. In summary, on an average across all utilised projects, when compared in terms of savings in the allocated budget and the remaining edits, the proposed model is clearly showing its improvement over the base-lines such as HYDRA and TPTL.}

\subsection{Discussion}
\label{discussion}
\subsubsection{On the Performance of HIEL using PWMV}
In section \ref{Comparative Analysis Using TM}, we have seen the comparative analysis of the HIEL over the other models. In this section, we discuss the change in the performance with the use of different sets of diverse inducers in the proposed HIEL model.

\begin{figure*}[ht]
 \subfloat[Performances of PWMV at $\beta$ = 0.1]{
  \label{Beta0.1}
	\begin{minipage}[c][1\width]{
	   0.3\textwidth}
	   \centering
	   \includegraphics[width=1.1\textwidth]{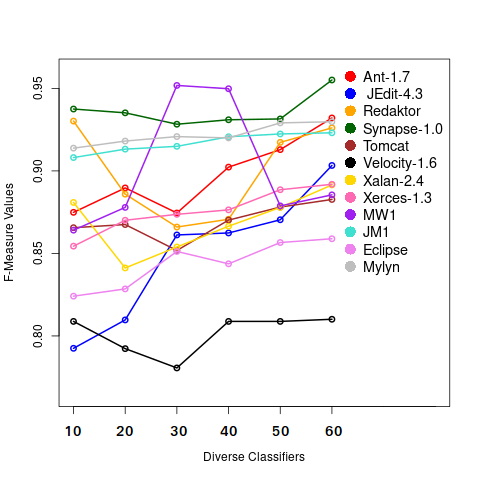}
	\end{minipage}}
 \hfill 	
 \subfloat[Performances of PWMV at $\beta$ = 0.2]{
  \label{Beta0.2}
	\begin{minipage}[c][1\width]{
	   0.3\textwidth}
	   \centering
	   \includegraphics[width=1.1\textwidth]{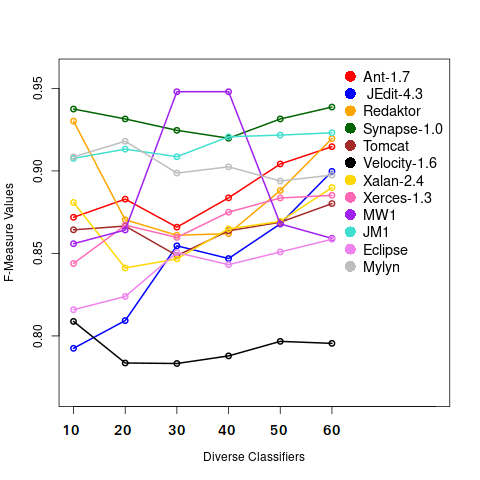}
	\end{minipage}}
 \hfill	
 \subfloat[Performances of PWMV at $\beta$ = 0.3]{
  \label{Beta0.3}
	\begin{minipage}[c][1\width]{
	   0.3\textwidth}
	   \centering
	   \includegraphics[width=1.1\textwidth]{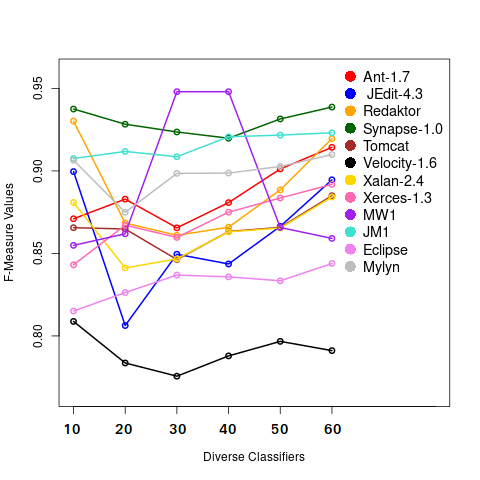}
	\end{minipage}}
\caption{Variation in the {F-measure} values of HIEL using PWMV on 12 randomly selected projects (from three repositories) at different values of $\beta$. For each project, the performances are represented in a line of points for each set of utilised diverse classifiers}
\label{MistakesFigure}
\end{figure*}
From the figure \ref{MistakesFigure} it is observed that the performances on the target projects are proportional to the {utilisation of a} diverse set of classifiers. This satisfies the definition of ensemble learning, as the number of diverse classifiers increases infinitely, then the final decision on the test observation approaches to 1. From the experiments, it is observed that, except for the few projects, the final performances are increased as a result of an increase in the diverse set of classifiers. For example, on the project {\textit{Redaktor}, at $\beta=0.1$}, the {F-measure} value obtained is {0.9302} when utilising the set of 10 diverse classifiers, {which is higher than the value of {F-measure} when} utilising the set of 60 diverse classifiers. From the corollary \ref{corollary1}, we derive an analysis that, if the expert (taken from any set of inducers) is continuously making mistakes, then it is penalised with the decremented weights. Hence, in very few cases like {the} above, few decision makers may participate in obtaining the final decision. 

Similarly, at $\beta=0.2$, {the performances on all the target projects are proportional to the utilised diverse set of classifiers except for the projects \textit{Redaktor, Veocity-1.6}, and \textit{Mylyn}.} When using the set of 60 different classifiers, projects such as {\textit{JEdit-4.3, Redaktor, Veocity-1.6}, and \textit{Mylyn}} obtained decremented {F-measure} values at $\beta=0.3$. In other cases, the projects {show} better performances with the use of more diverse classifiers. This indicates the advantage of using more diverse classifiers to obtain generalizable solutions on the target datasets.

\subsubsection{{On the Comparison Between the Performance Measures}}
\begin{figure}
    \centering
    \includegraphics[width = \linewidth]{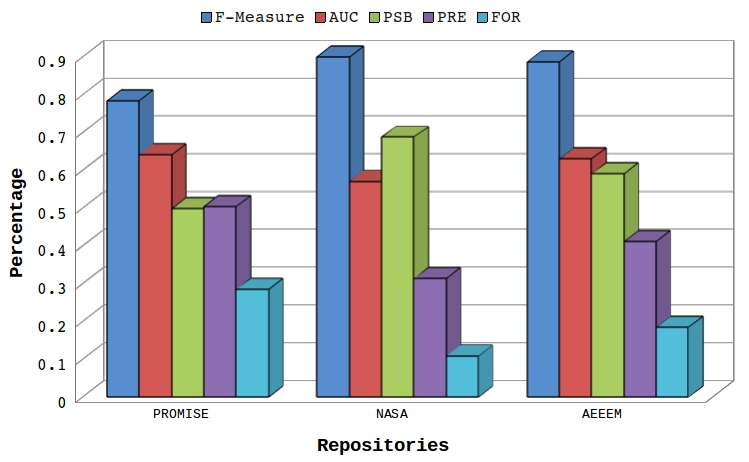}
    \caption{The traditional and proposed measures on all the target projects}
    \label{AllMeasures}
\end{figure}
In section \ref{Cost,Service Time and Failure Analysis}, we have provided the {experimental evaluations in terms of savings in the allocated total budget (using PSB/Saved budget)} and the editing rates to achieve a failure-free software system. {In this section, we discuss the utilised performance measures.}

Figure \ref{AllMeasures} represents the {average performance of HIEL (in terms of all the utilised measures), observed on the projects of PROMISE, NASA, and AEEEM repositories. From the table \ref{AllMeasures}, it is observed that, even though the HIEL model exhibits the better {F-measure} and {AUC} values on all the repositories, the measures such as PSB and PRE indicate nearly half the benefits on the PROMISE projects. That is, on PROMISE projects, the testers still need to put effort into more than half the original code to remove the defects. In terms of PSB and PRE, the HIEL produces the best results on NASA and AEEEM projects. In addition to that, on an average, the failure rates in NASA are low when compared with the failure rates in the PROMISE and AEEEM repositories.}

{However, the above results show that, while traditional measures produce better results, these measures also provide an additional analysis of the obtained performances. Hence, in a critical application like SDP (in this case, CPDP), to understand the true benefits of the prediction model, we suggest using the proposed measures such as PSB, PRE, and FOR in addition to the traditional measures.}
\section{Threats to Validity}
\label{threats}
Since this model builds up on the machine learning model, it is inevitable to observe variation in the final performance at various implementation conditions. Consequently, this prediction result may have an impact on the saved {budget}, remaining service time, and the percent of failures in the target project. For this, the threats to the validity of the proposed model, such as construct validity in \ref{constructValidity}, internal validity in \ref{internalValidity}, and external validity in \ref{externalValidity}, are discussed.
\subsection{Construct Validity}
\label{constructValidity}
The projects in PROMISE, {NASA, and AEEEM have} limited metrics. Inclusion of a greater number of metrics may affect the final performance of HIEL. From the definition of ensemble learning, the model correctly predicts a class label for the test example if the number of classifiers approaches infinity. But this work utilises up to 60 diverse classifiers (for different experiments) to observe the final decision. However, variation in the performance of the proposed model may be observed by using {a} number of diverse classifiers other than 60. But, if the number of diverse classifiers increases in the ensemble, then the computational complexity of the ensemble algorithm (for instance, HIEL) also increases; hence, the time required to observe the decision increases. Therefore, {a trade-off} between the diverse classifiers and the computational complexity needs to be achieved to obtain better performances.
\subsection{Internal Validity}
\label{internalValidity}
Before training the proposed HIEL model, an optional feature reduction stage can be incorporated to select the relevant features from the original defect data. The studies such as \cite{balogun2019performance, ni2019empirical} suggests to incorporate this optional step to analyse the variation in the performance of the proposed model. Hence, there is a chance of getting improved results for the proposed model that uses any one of the feature reduction algorithms as the preprocessing stage. In contrast to incorporating the feature reduction schemes, selecting specific features such as combination of size, complexity, and object-oriented metrics \cite{laradji2015software} can also be evaluated prior to {training} the proposed model to observe the variation in the final prediction performance. Since each classification model differs {from} the implementation of the other, it is important to investigate the use of implementing feature reduction algorithms as well as the use of specific metrics such as object-oriented metrics and some size metrics as features in the proposed model.

To generate the diverse classifiers, the proposed model uses six inducers such as LR, SVM, DT, NB, K-NN, and NN. Nonetheless, the model can be built with the use of other inducers. As decision-makers increase (preferably weak inducers) in the ensemble model, then, {the} probability of getting {the} correct decision approaches 1. Hence, there is a chance of getting better performance when the proposed model {is} tested along with the other inducers.

To train the HIEL model, we have incorporated two diversity generation mechanisms, such as bootstrapping and hybrid-inducers, to generate the bag of classifiers. Nonetheless, the prediction performance of HIEL may be enhanced by including other diversity generation mechanisms such as {optimising} the learning parameters, feature subspace sampling, and changing the output representation methods. But {the} inclusion of these schemes will increase the computational overhead. 
\subsection{External Validity}
\label{externalValidity}
To know the effectiveness of the proposed model, we compared it with the limited {exiting works for} CPDP. The robustness of the proposed model will be determined when {it is compared with the large} number of base-line classifiers and various classic ensemble models. 

To find the transferability of the proposed approach, in the case of PWMV in HIEL, it has to {be tested} with the other {categories} of the training data such as defect severity data and heterogeneous defect data.
\section{Conclusion and Future Works}
\label{conclusion}
This paper presents a novel hybrid-inducer ensemble learning (HIEL) method for cross-project defect prediction, that uses bootstrap aggregation {of} the source project's defects data. Using the proposed method, a bag of 60 classifiers {is} generated by employing six inducers, each with ten training samples. Later, the probabilistic weighted majority voting technique (PWMV) is used as a combiner method on the generated classifiers to get the final decision for the test instance. Using PWMV, during {the} classification stage, a tight upper bound on the number of mistakes made by the best expert (classifier) taken from the proposed HIEL model is derived. We also conducted an empirical evaluation to analyse the variation in the performance of PWMV on the {publicly available defect repositories such as PROMISE, NASA, and AEEEM}. To validate the HIEL model, a comparative analysis {was} conducted with the {recently} published models such as TDS, TCA+, HYDRA, TPTL, and CODEP.
Using the empirical {analysis} on the datasets, the following vital findings are observed:
\begin{itemize}
    \item On the many PROMISE, {NASA, and AEEEM projects}, the HIEL model with PWMV achieved an improved average {in-terms of both} {F-measure} and {AUC} when compared with the other published CPDP models. Among the published models, {in terms of AUC, the TCA+ and HYDRA stand as strong competitors to the HIEL model.}
    \item {The empirical analysis indicates the advantage of utilising the diversity-based ensemble models for this problem context.} From the experimentation on utilising the different sets of diverse classifiers in the PWMV model, it is observed that the obtained performances are proportional to the utilised diverse classifiers on the target projects. {Hence, building complex diverse classifiers has good potential for classifying the defect-proneness of the software module.}
\end{itemize}
This work also aims to fill the gap in analysing the results of the CPDP model on developing software. While the traditional performance measures provide information about the working of the machine learning model, in this work, by making use of the LoC of the test module along with the information from the confusion matrix, the new perspectives from the predictions have been investigated through three performance measures such as {PPC, PNPC}, and {FOR}. Since LoC acts as an additional attribute along with the information from the confusion matrix, these performance metrics are treated as application-dependent measures. That is, these are applicable only to the binary classification of defect prediction models. 

Among the proposed measures, the {PPC} {helps to} provide information {about} the percent of the budget saved, where as the {PNPC} measure {helps to calculate} the amount of work {still remaining} in the newly developed software project, {by utilising the CPDP models}. The {FOR} {measure} is used to estimate the number of failures that may occur in the developed software. As {maximising the} budget savings and minimising the remaining service time and failures {are} the primary goals in introducing SDP research, we recommend future research on SDP (any category, such as WPDP or CPDP) to {focus on providing such} model-specific outcome analysis.
\subsection{Future Directions}
\label{future directions}
{The possible future directions in this research include:} finding the defect associations from the {predicted} result of our proposed CPDP model, {}{which} is the immediate future objective. Finding the severity of the defective module once it is identified as defective by the SDP model, estimating the reliability of the software system after the deployment of {the} SDP model, etc. {are still open challenges in this research field, on which we will focus our future research directions.}
\bibliographystyle{elsarticle-num} 				
\bibliography{CPDP}
\vskip6pt
\end{document}